\documentclass[aps,pra,twocolumn,superscriptaddress,longbibliography]{revtex4-2}

\usepackage{mathdots}
\usepackage{mathrsfs}
\usepackage{amssymb,amsmath,amsthm}
\usepackage{graphicx}
\usepackage{bm,bbm}
\usepackage{csquotes}
\MakeOuterQuote{"}
\usepackage{tikz}
\usepackage{pifont}     
\usepackage{algorithm}
\usepackage{algcompatible}

\newcommand{\ket}[1]{\mbox{$ | #1 \rangle $}}
\newcommand{\bra}[1]{\mbox{$ \langle #1 | $}}

\newcommand{\tr}{\operatorname{tr}}

\newcommand{\diag}{\operatorname{diag}}

\newcommand{\supp}{\operatorname{supp}}

\newcommand{\caH}{\mathcal{H}}
\newcommand{\caW}{\mathcal{W}}
\newcommand{\caS}{\mathcal{S}}

\newcommand{\caZ}{\mathcal{Z}}
\newcommand{\bcaZ}{\bar{\mathcal{Z}}}
\newcommand{\rmd}{\mathrm{d}}
\newcommand{\rme}{\mathrm{e}}
\newcommand{\rmi}{\mathrm{i}}
\newcommand{\rmU}{\mathrm{U}}
\newcommand{\rmRe}{\mathrm{Re}}

\newcommand{\AS}{\mathrm{AS}}

\newcommand{\bfk}{\mathbf{k}}

\newcommand{\scrS}{\mathscr{S}}

 \newcommand{\bbC}{\mathbb{C}}

\newcommand{\1}{\operatorname{\uppercase\expandafter{\romannumeral1}}}
\newcommand{\2}{\operatorname{\uppercase\expandafter{\romannumeral2}}}
\newcommand{\3}{\operatorname{\uppercase\expandafter{\romannumeral3}}}
\newcommand{\4}{\operatorname{\uppercase\expandafter{\romannumeral4}}}
\newcommand{\5}{\operatorname{\uppercase\expandafter{\romannumeral5}}}
\newcommand{\6}{\operatorname{\uppercase\expandafter{\romannumeral6}}}
\newcommand{\7}{\operatorname{\uppercase\expandafter{\romannumeral7}}}
\newcommand{\8}{\operatorname{\uppercase\expandafter{\romannumeral8}}}
\newcommand{\9}{\operatorname{\uppercase\expandafter{\romannumeral9}}}

\newtheoremstyle{note}
  {\topsep/2}              	
  {\topsep/2}            	
  {}                        
  {\parindent}             	
  {\itshape}                
  {.}                       
  {5pt plus 1pt minus 1pt}  
  {}

\newtheorem{theorem}{Theorem}
\newtheorem{lemma}{Lemma}

\newtheorem{proposition}{Proposition}

\theoremstyle{definition}

\theoremstyle{remark}

\def\eqref#1{\textup{(\ref{#1})}}
\newcommand{\eref}[1]{Eq.~\textup{(\ref{#1})}}
\newcommand{\lref}[1]{Lemma~\ref{#1}}
\newcommand{\tref}[1]{Theorem~\ref{#1}}

\newcommand{\eqsref}[2]{Eqs.~(\ref{#1}) and (\ref{#2})}
\newcommand{\eqssref}[3]{Eqs.~(\ref{#1}), (\ref{#2}), and (\ref{#3})}

\def\<{\langle}  
\def\>{\rangle}  

\begin{document}
\title{Verification of phased Dicke states}

\author{Zihao Li}
\affiliation{State Key Laboratory of Surface Physics, Fudan University, Shanghai 200433, China}
\affiliation{Department of Physics and Center for Field Theory and Particle Physics, Fudan University, Shanghai 200433, China}
\affiliation{Institute for Nanoelectronic Devices and Quantum Computing, Fudan University, Shanghai 200433, China}

\author{Yun-Guang Han}
\affiliation{State Key Laboratory of Surface Physics, Fudan University, Shanghai 200433, China}
\affiliation{Department of Physics and Center for Field Theory and Particle Physics, Fudan University, Shanghai 200433, China}
\affiliation{Institute for Nanoelectronic Devices and Quantum Computing, Fudan University, Shanghai 200433, China}

\author{Hao-Feng Sun}
\affiliation{State Key Laboratory of Surface Physics, Fudan University, Shanghai 200433, China}
\affiliation{Department of Physics and Center for Field Theory and Particle Physics, Fudan University, Shanghai 200433, China}
\affiliation{Institute for Nanoelectronic Devices and Quantum Computing, Fudan University, Shanghai 200433, China}

\author{Jiangwei Shang}
\affiliation{Key Laboratory of Advanced Optoelectronic Quantum Architecture and Measurement of
Ministry of Education, School of Physics, Beijing Institute of Technology, Beijing 100081, China}
\affiliation{State Key Laboratory of Surface Physics, Fudan University, Shanghai 200433, China}

\author{Huangjun Zhu}
\email{zhuhuangjun@fudan.edu.cn}
\affiliation{State Key Laboratory of Surface Physics, Fudan University, Shanghai 200433, China}
\affiliation{Department of Physics and Center for Field Theory and Particle Physics, Fudan University, Shanghai 200433, China}
\affiliation{Institute for Nanoelectronic Devices and Quantum Computing, Fudan University, Shanghai 200433, China}
\affiliation{Collaborative Innovation Center of Advanced Microstructures, Nanjing 210093, China}

\begin{abstract}
Dicke states are typical examples of quantum states with genuine multipartite entanglement.
They are valuable resources in many quantum information processing tasks, including multiparty quantum communication and quantum metrology.
Phased Dicke states are a generalization of Dicke states and include antisymmetric basis states as a special example. These states are useful in atomic and molecular physics besides quantum information processing.
Here we propose practical and efficient protocols based on adaptive local projective measurements for verifying all phased Dicke states, including $W$ states and qudit Dicke states. To verify any $n$-partite phased Dicke state within infidelity $\epsilon$ and significance level $\delta$,
the number of tests required is only $O(n\epsilon^{-1}\ln\delta^{-1})$, which is linear in  $n$  and  is exponentially more efficient than traditional tomographic approaches.
In the case of $W$ states, the number of tests can be further reduced to  $O(\sqrt{n}\,\epsilon^{-1}\ln\delta^{-1})$.
Moreover, we construct an optimal  protocol for any antisymmetric basis state; the number of tests required decreases (rather than increases) monotonically with $n$. This
is the only optimal protocol known for
multipartite nonstabilizer states.
\end{abstract}

\date{\today}
\maketitle

\section{Introduction}
Quantum states with genuine multipartite entanglement (GME) play crucial roles in quantum information processing and foundational studies \cite{Horo09,Guhne09}.
Dicke states \cite{Dicke54,Wei03} are one of the most important multipartite quantum states other than stabilizer states. They are useful in many quantum information processing tasks, such as multiparty quantum communication
and quantum metrology \cite{Boure06,Kiesel07,Luo17,Murao99,Preve09,Wiec09,Pezze08}. Phased Dicke states are a generalization of Dicke states constructed by introducing phase changes and they are
equally useful in the above research areas \cite{Krammer09,Chiuri10}. Besides the usual Dicke states, antisymmetric basis states are a prominent example of phased Dicke states \cite{Denni01,Zanar02,Bravyi03}; they are usually used to represent the fermions,
and play a paramount role in atomic and molecular physics.
By now  numerous experiments have been performed to prepare and
engineer Dicke states \cite{Kiesel07,Preve09,Wiec09,Zou18,Cruz18,Haff05}, phased Dicke states
\cite{Chiuri10,Chiuri12}, and antisymmetric basis states \cite{Zhang16,Berry18} in various platforms.

In practice, it is usually extremely difficult to prepare quantum states with GME perfectly, and the success probability decays rapidly with  the number of particles.
Therefore, it is crucial to verify these states with high precision efficiently using limited resources. For the convenience of applications, it is also desirable to achieve this task using only local operations and classical communication (LOCC).
Unfortunately, traditional tomographic approaches are notoriously inefficient and are too resource consuming for systems with more than ten qubits \cite{Haff05}, since they extract too much unnecessary information. Although direct fidelity estimation \cite{FlamL11} can improve the efficiency significantly, it is still not satisfactory except for some special states, like stabilizer states.

Recently, an alternative approach known as quantum state verification \cite{HayaMT06,Haya09,Aolita15,Hang17,PLM18,ZhuEVQPSshort19,ZhuEVQPSlong19} has attracted increasing attention because of its potential to achieve a much higher efficiency. So far efficient verification protocols based on LOCC have been constructed for bipartite pure states \cite{LHZ19,Wang19,Yu19}, stabilizer states (including graph states) \cite{HayaM15,PLM18,ZhuH19E,ZhuEVQPSlong19},
hypergraph states \cite{ZhuH19E}, weighted graph states \cite{HayaTake19}, and qubit Dicke states \cite{Liu19}. Moreover, a similar idea can be applied  to verifying quantum gates and processes \cite{WuS19,LSYZ20,ZhuZ20,Zeng19}.
On the other hand, efficient protocols are still not available for many other important quantum states, including qudit Dicke states and phased Dicke states in particular. In addition, it is extremely difficult to construct optimal verification protocols, especially for nonstabilizer states. So far optimal protocols under LOCC are known only for maximally entangled states \cite{HayaMT06,Haya09,ZhuH19O}, two-qubit pure states \cite{Wang19},  Greenberger-Horne-Zeilinger (GHZ) states \cite{LiGHZ19}, and some other stabilizer states \cite{DangHZ20}. Any progress on this topic is of  interest to both theoretical studies and practical applications.

In this work, we construct highly efficient and practical verification protocols for all phased Dicke states, including  $W$ states and qudit Dicke states.
Our protocols only require adaptive local projective measurements with classical communication, which are as simple as one can expect.
Incidentally,   no efficient protocols based on nonadaptive measurements are known so far for verifying general bipartite pure states (cf.~Refs.~\cite{LHZ19,Wang19,Yu19}), not to mention multipartite states.
To verify any $n$-partite phased Dicke state within infidelity $\epsilon$ and significance level $\delta$,
the number of tests required  is only $O(n\epsilon^{-1}\ln\delta^{-1})$, which is linear in  $n$. So  our protocols can extract the key information---the fidelity with the target state---exponentially more efficiently than traditional  approaches, including tomography and direct fidelity estimation.
In the case of $W$ states, the number of tests can be further reduced to $O(\sqrt{n}\,\epsilon^{-1}\ln\delta^{-1})$, which  is quadratically fewer compared with the best verification protocol known in the literature \cite{Liu19}.
For the three-qubit $W$ state, one of our protocols is  almost optimal under LOCC; in addition, this protocol  is useful for fidelity estimation because the verification operator is homogeneous \cite{ZhuEVQPSlong19}. Moreover, we construct an optimal verification protocol for every antisymmetric basis state;  the number of tests required decreases  monotonically with $n$.
This is the only optimal protocol known so far for
multipartite nonstabilizer states.   For quantum states with GME, such optimal protocols were known previously only  for GHZ states \cite{LiGHZ19}, which are stabilizer states and have Schmidt decomposition (optimal protocols for some other stabilizer states were constructed recently \cite{DangHZ20} after the initial posting of this paper).
In the course of study, we  introduce several tools for improving the efficiency of a  given verification protocol, which are useful to studying quantum verification in general.

\section{Pure state verification}
\subsection{Basic framework}
Before presenting our main results, let us take a brief review on the basic framework of pure state verification \cite{PLM18,ZhuEVQPSshort19,ZhuEVQPSlong19}.
Suppose there is a quantum device that is expected to produce the pure target state $|\Psi\>\in \caH$.
However, some errors may occur when the device is working, and  it actually produces the states $\sigma_1, \sigma_2, \ldots, \sigma_N$  in $N$ runs.
Let $\epsilon_j:=1-\<\Psi|\sigma_j|\Psi\>$ denote the infidelity between $\sigma_j$ and the target state, and let $\bar{\epsilon}:=\sum_j\epsilon_j/N$ denote the average infidelity.
Our aim is to verify whether these states are sufficiently close to the target state on average, that is, whether the
average infidelity $\bar{\epsilon}$ is smaller than some threshold $\epsilon$.

To achieve this task, for each state $\sigma_j$ the verifier performs a test and accepts the states produced if and only if (iff) all tests are passed.
Each test is specified by a two-outcome measurement $\{E_l,\openone-E_l\}$, which is chosen randomly with probability $p_l$ from a set of accessible measurements.
Here the test operator $E_l$ corresponds to passing the test and satisfies the condition $E_l|\Psi\>=|\Psi\>$, so that the target state can always pass the test.
If the infidelity of $\sigma_j$ satisfies $\epsilon_j\geq\tilde\epsilon$ for some threshold $\tilde\epsilon\geq0$, then the maximum probability that $\sigma_j$ can pass each test on average is given by \cite{PLM18,ZhuEVQPSlong19}
\begin{equation}
\max_{\<\Psi|\sigma|\Psi\>\leq 1-\tilde\epsilon }\tr(\Omega \sigma)=1- [1-\lambda_2(\Omega)]\tilde\epsilon=1- \nu(\Omega)\tilde\epsilon,
\end{equation}
where $\Omega:=\sum_{l} p_l E_l$ is called the verification operator or a strategy, $\lambda_2(\Omega)$ denotes the second largest
eigenvalue of $\Omega$, and $\nu(\Omega):=1-\lambda_2(\Omega)$ is the spectral gap from the maximum eigenvalue.
The probability of passing all $N$ tests is at most $\prod_j[1- \nu(\Omega)\epsilon_j]\leq[1- \nu(\Omega)\bar{\epsilon}]^N$.
To ensure the condition $\bar{\epsilon}<\epsilon$ with significance level $\delta$, it suffices to take \cite{ZhuEVQPSshort19,ZhuEVQPSlong19}
\begin{equation}\label{eq:NumberTest}
N=\biggl\lceil\frac{ \ln \delta}{\ln[1-\nu(\Omega)\epsilon]}\biggr\rceil\approx \frac{ \ln \delta^{-1}}{\nu(\Omega)\epsilon},
\end{equation}
where the approximation  is applicable  when $\nu(\Omega)\epsilon\ll 1$.
According to this equation, the efficiency of a verification strategy $\Omega$ is mainly determined by its spectral gap $\nu(\Omega)$.

\subsection{\label{sec:Opt}Optimization of test probabilities}
To optimize the verification efficiency, we need to maximize the spectral gap of the verification operator for the target state $|\Psi\>$ or minimize the second largest eigenvalue. Suppose the set of test operators $\{E_l\}_l$ for $|\Psi\>$ is fixed, then we need to optimize the probabilities for  performing individual tests.  Given a general verification operator  of the form $\Omega=\sum_l p_l E_l$, the second largest eigenvalue of $\Omega$ reads
\begin{equation}
\lambda_2(\Omega)=\|\bar{\Omega}\|=\biggl\|\sum_l p_l \bar{E_l}\biggr\|,
\end{equation}
where $\|\cdot\|$ denotes the operator norm and
\begin{equation}
\bar{\Omega}:=\Omega-|\Psi\>\<\Psi|,\quad \bar{E}_l:=E_l-|\Psi\>\<\Psi|.
\end{equation}
Note that $\lambda_2(\Omega)$ is convex in $\{p_l\}_l$, so that  $\nu(\Omega)$ is concave in $\{p_l\}_l$. In addition,  the minimum of $\lambda_2(\Omega)$ over  $\{p_l\}_l$ can be computed via semidefinite programming,
\begin{equation}\label{eq:MinGapSDP}
\begin{aligned}
&\text{minimize} &f &  \\
&\text{subject to} &\sum_l p_l \bar{E_l}\leq f\openone, \quad p_l \geq0,& & \\
&&	\sum_l p_l=1.& \\
\end{aligned}
\end{equation}

The minimum in \eref{eq:MinGapSDP} can  be derived analytically when $\Omega$ consists of two projective tests thanks to the following lemma, which is proved in Appendix~\ref{app:LemmaTwoTestProof}.
\begin{lemma}\label{lem:2TestStrategy}
	Suppose $\Omega=p P_1 +(1-p)P_2$, where $0\leq p\leq 1$ and  $P_1, P_2$ are test projectors for $|\Psi\>$ with ranks at least 2. Then
	$\lambda_2(\Omega)\geq (1+\sqrt{q})/2$	and
	$\nu(\Omega)\leq(1-\sqrt{q})/2$, where
\begin{equation}
q:=\|\bar{P}_1\bar{P}_2 \bar{P}_1\|=\max_{|\phi\>\in \supp(\bar{P}_1)} \<\phi|P_2|\phi\>.
\end{equation}	
	If $q<1$, then the upper bound for $\nu(\Omega)$ is saturated iff $p=1/2$.
\end{lemma}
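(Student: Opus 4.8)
The plan is to reduce the statement to a two-dimensional computation by means of Jordan's lemma. First I would pass to the ``barred'' operators $\bar{P}_i:=P_i-\ket{\Psi}\bra{\Psi}$. Because $P_i\ket{\Psi}=\ket{\Psi}$ and $P_i$ is a projector, each $\bar{P}_i$ is again an orthogonal projector, of rank $\rk P_i-1\ge1$ by hypothesis, and $\bar{\Omega}=p\bar{P}_1+(1-p)\bar{P}_2$ with $\lambda_2(\Omega)=\|\bar{\Omega}\|$. Moreover $q=\|\bar{P}_1\bar{P}_2\bar{P}_1\|$ is the largest squared cosine of a principal angle between $\operatorname{range}\bar{P}_1$ and $\operatorname{range}\bar{P}_2$; the alternative expression $q=\max_{\ket{\phi}\in\supp\bar{P}_1}\bra{\phi}P_2\ket{\phi}$ from the statement holds because every $\ket{\phi}\in\supp\bar{P}_1$ is orthogonal to $\ket{\Psi}$, so that $\bra{\phi}P_2\ket{\phi}=\bra{\phi}\bar{P}_2\ket{\phi}=\bra{\phi}\bar{P}_1\bar{P}_2\bar{P}_1\ket{\phi}$.

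Next I would invoke Jordan's lemma to choose an orthonormal basis in which $\bar{P}_1$ and $\bar{P}_2$ are simultaneously block diagonal with blocks of dimension one or two. On a one-dimensional block $(\bar{P}_1,\bar{P}_2)$ is one of $(1,1),(1,0),(0,1),(0,0)$, on which $\bar{\Omega}$ acts by the scalar $1,\,p,\,1-p,\,0$ respectively; the rank hypothesis forces at least one $(1,\cdot)$ block and one $(\cdot,1)$ block to occur, whereas a $(1,1)$ block---a vector shared by the two ranges---can occur only if $q=1$. On a two-dimensional block one may choose coordinates with $\bar{P}_1=\ket{0}\bra{0}$ and $\bar{P}_2=\ket{w}\bra{w}$, where $\ket{w}=\cos\theta\ket{0}+\sin\theta\ket{1}$ and $\theta\in(0,\pi/2)$; a direct calculation then gives $\bar{P}_1\bar{P}_2\bar{P}_1=\cos^2\theta\,\ket{0}\bra{0}$, while $\bar{\Omega}$ restricted to the block has largest eigenvalue
\[
g(\theta)=\frac{1+\sqrt{1-4p(1-p)\sin^2\theta}}{2},
\]
which increases monotonically as $\cos^2\theta$ increases. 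Thus $q$ is the largest value of $\cos^2\theta$ appearing among the two-dimensional blocks (with $q=0$ and $q=1$ covering the orthogonal and the intersecting cases). Collecting the block contributions---and using that every one-dimensional block contributes at most $\max(p,1-p)$, which in turn is at most the value of $g$ at $\cos^2\theta=q$---one obtains the closed form
\[
\lambda_2(\Omega)=\|\bar{\Omega}\|=\frac{1+\sqrt{1-4p(1-p)(1-q)}}{2}.
\]

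The lemma then follows by elementary algebra: since $4p(1-p)=1-(1-2p)^2\le1$ and $1-q\ge0$, we have $1-4p(1-p)(1-q)\ge1-(1-q)=q$, hence $\lambda_2(\Omega)\ge(1+\sqrt q)/2$ and $\nu(\Omega)=1-\lambda_2(\Omega)\le(1-\sqrt q)/2$. When $q<1$, the inequality $1-4p(1-p)(1-q)\ge q$ is an equality exactly when $4p(1-p)=1$, that is, when $p=1/2$, so by the closed form the bound $\nu(\Omega)\le(1-\sqrt q)/2$ is saturated iff $p=1/2$. I expect the main obstacle to be not a single hard idea but the careful bookkeeping of the degenerate Jordan blocks: one must check that, once $q<1$ rules out a $(1,1)$ block, no one-dimensional block can push $\lambda_2(\Omega)$ above the value of $g$ at $\cos^2\theta=q$, and that the maximum defining $q$ is attained, which it is in the finite-dimensional Hilbert spaces relevant here.
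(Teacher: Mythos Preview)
Your proposal is correct and follows essentially the same route as the paper's proof: both pass to $\bar{\Omega}=p\bar P_1+(1-p)\bar P_2$, decompose the pair of projectors $(\bar P_1,\bar P_2)$ into canonical 1D/2D blocks (the paper does this by writing explicit orthonormal bases $\{|\phi_j\rangle\}$, $\{|\varphi_k\rangle\}$ with $|\langle\phi_j|\varphi_k\rangle|^2=q_k\delta_{jk}$, while you invoke Jordan's lemma by name), compute the top eigenvalue on each 2D block to obtain the same closed form $\lambda_2(\Omega)=\tfrac12\bigl[1+\sqrt{1-4p(1-p)(1-q)}\bigr]$, and finish with the identical optimization over $p$. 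Your explicit accounting of the 1D blocks (checking they never exceed $\max(p,1-p)\le g|_{\cos^2\theta=q}$) is in fact slightly more careful than the paper, which silently drops the leftover rank-one pieces when taking the maximum.
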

Note that any test projector based on LOCC has rank at least 2 if $|\Psi\>$ is entangled.
Previously, \lref{lem:2TestStrategy} was known in the special case in which $\bar{P}_1$ and $\bar{P}_2$ are orthogonal \cite{ZhuH19O}.

\subsection{\label{sec:sym}Symmetrization of verification operators}
Here we consider another recipe for improving the verification efficiency by employing the symmetry of the target state $|\Psi\>$; similar ideas have already found applications in verifying  bipartite pure states  \cite{PLM18,Wang19,Yu19}.
Suppose $\Omega$ is a verification operator for $|\Psi\>$, so that $\Omega\geq |\Psi\>\<\Psi|$ and $\Omega|\Psi\>=|\Psi\>$. Let $U$ be a unitary operator that leaves $|\Psi\>$ invariant up to a phase factor, that is, $U|\Psi\>\<\Psi|U^\dag =|\Psi\>\<\Psi|$ or, equivalently,
$U|\Psi\>=\rme^{\rmi \phi} |\Psi\>$, where $\phi$ is a phase (a real number). Then  $U\Omega U^\dag$ is also a valid verification operator for  $|\Psi\>$.  Moreover, $U\Omega U^\dag$ and $\Omega$ have the same spectral gap, that is,
\begin{equation}\label{eq:SpectralGapU}
\nu(U\Omega U^\dag)=\nu(\Omega).
\end{equation}

Let $G$ be the group generated by product unitaries and permutations that leave $|\Psi\>\<\Psi|$ invariant. Then $U\Omega U^\dag$ for any $U\in G$
can be realized by LOCC (is separable) iff $\Omega$ can be realized by LOCC (is separable).
Let $S$ be a subset of $G$ and define
\begin{equation}\label{eq:OmegaSym}
\Omega^S:=\int_SU\Omega U^\dag \rmd U,
\end{equation}
where the integral is taken with respect to the normalized probability measure induced from the Haar measure on $G$ (see Chapter~11 in Ref.~\cite{Halmos13} for example). The measure reduces to the normalized Haar measure on $S$ when $S$ is a group and reduces to the counting measure (see page~27 in Ref.~\cite{Schilling05} for example) when $S$ is finite. The verification operator $\Omega$ is called $S$-invariant if $\Omega^S=\Omega$.

If the verification strategy $\Omega$ consists of $m$ distinct tests and has the form $\Omega=\sum_l p_l E_l$, then
\begin{equation}
\Omega^S=\sum_l p_l E_l^S,
\end{equation}
where
\begin{equation}
E_l^S:=\int_SUE_l U^\dag \rmd U.
\end{equation}
If $S$ is a finite set with cardinality $|S|$, then the above equation reduces to
\begin{equation}
E_l^S=\frac{1}{|S|} \sum_{U\in S}U E_l U^\dag.
\end{equation}
If each test operator $E_l$ is a projector, then each $E_l^S$
can be realized by at most $|S|$ distinct projective tests. Therefore, $\Omega^S$ can be realized by at most $m|S|$ distinct projective tests.

\begin{proposition}\label{pro:GapSym}
	Suppose $S\subseteq H\leq G$. Then
\begin{equation}
\nu(\Omega)\leq\nu(\Omega^S)\leq\nu(\Omega^H)\leq \nu(\Omega^G).
\end{equation}	
\end{proposition}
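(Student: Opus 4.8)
The plan is to reduce all three inequalities to a single mechanism: the spectral gap $\nu$ is concave on the (convex) set of verification operators for $|\Psi\>$, while conjugation by any $U\in G$ preserves $\nu$ by \eqref{eq:SpectralGapU}. Concavity holds because $\Omega\mapsto\bar\Omega=\Omega-|\Psi\>\<\Psi|$ is affine and the operator norm is convex, so $\lambda_2(\Omega)=\|\bar\Omega\|$ is convex and $\nu(\Omega)=1-\lambda_2(\Omega)$ is concave; combined with Jensen's inequality this yields $\nu\bigl(\int_A U\Omega U^\dag\,\rmd\mu(U)\bigr)\ge\int_A\nu(U\Omega U^\dag)\,\rmd\mu(U)=\nu(\Omega)$ for any normalized probability measure $\mu$ on any subset $A\subseteq G$, the last equality being \eqref{eq:SpectralGapU}. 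Taking $\mu$ to be the measure on $S$ used to define $\Omega^S$ already gives the first inequality $\nu(\Omega)\le\nu(\Omega^S)$; here $S$ need not carry any group structure.

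For the remaining inequalities I would establish the absorption identity $\Omega^H=(\Omega^S)^H$ whenever $S\subseteq H\le G$ (and, as the special case $S\to H$, $H\to G$, the identity $\Omega^G=(\Omega^H)^G$). Since $G$ is a closed subgroup of the compact group generated by product unitaries and permutations, it is compact, so the Haar measure on $H$ is bi-invariant. Expanding $(\Omega^S)^H=\int_H\int_S(VU)\Omega(VU)^\dag\,\rmd U\,\rmd V$ and interchanging the order of integration (Fubini applies, the integrand being bounded and continuous), for each fixed $U\in S\subseteq H$ the inner integral over $V\in H$ is invariant under $V\mapsto VU^{-1}$ by right-invariance of the Haar measure, hence equals $\int_H V\Omega V^\dag\,\rmd V=\Omega^H$; integrating the resulting constant against the normalized measure on $S$ returns $\Omega^H$.

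With this identity in hand, the second inequality follows by the same Jensen argument: $\Omega^H=(\Omega^S)^H$ is a continuous convex combination of the operators $V\Omega^S V^\dag$ with $V\in H\le G$, each a valid verification operator with $\nu(V\Omega^S V^\dag)=\nu(\Omega^S)$ by \eqref{eq:SpectralGapU}, so concavity of $\nu$ gives $\nu(\Omega^H)=\nu\bigl((\Omega^S)^H\bigr)\ge\nu(\Omega^S)$; applying the same step to $\Omega^G=(\Omega^H)^G$ yields $\nu(\Omega^G)\ge\nu(\Omega^H)$, completing the chain. The main obstacle is not the algebra---which is just Jensen's inequality plus translation invariance of the Haar measure---but the measure-theoretic bookkeeping: checking that $G$ (hence every $H\le G$) is compact so that a bi-invariant Haar measure exists, that the measures ``induced from the Haar measure on $G$'' on $S$ and on $H$ are precisely the normalized restrictions used to define $\Omega^S$ and $\Omega^H$, and that Fubini's theorem legitimately interchanges the two integrals in $(\Omega^S)^H$. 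Once these points are settled, the proof is essentially the three-line concavity argument applied twice.
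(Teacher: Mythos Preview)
Your proof is correct and follows essentially the same route as the paper: concavity of $\nu$ plus \eqref{eq:SpectralGapU} for the first inequality, and the absorption identity $\Omega^H=(\Omega^S)^H$ (stated by the paper as a bare fact) to reduce each subsequent inequality to the first. The paper's argument is the same three lines you identified, only terser; your added measure-theoretic remarks (compactness of $G$, bi-invariance of Haar measure, Fubini) flesh out details the paper leaves implicit.
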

Here the notation $S\subseteq H$ means $S$ is a subset of $H$;  by contrast, the notation
$H\leq G$ means $H$ is a subgroup of $G$. Proposition~\ref{pro:GapSym} shows that symmetrization is an effective way for improving the verification efficiency.
\begin{proof}
The inequality $\nu(\Omega)\leq\nu(\Omega^S)$
follows from \eref{eq:SpectralGapU} and the fact that $\nu(\Omega)$ is concave in $\Omega$.
The inequality $\nu(\Omega^S)\leq\nu(\Omega^H)$
follows from  the inequality $\nu(\Omega)\leq\nu(\Omega^S)$ and the fact that $\Omega^H=(\Omega^S)^H$ given that $S$ is a subset of the group $H$. The inequality $\nu(\Omega^H)\leq\nu(\Omega^G)$
follows from the inequality $\nu(\Omega^S)\leq\nu(\Omega^H)$.
\end{proof}

The following proposition is useful to reducing the number of distinct tests when constructing a verification strategy based on the symmetrization procedure. It is  a corollary of \eref{eq:OmegaSymProjGen} below.
\begin{proposition}\label{pro:GapSym3}
	Suppose $S\leq H\leq G$; in addition, $S$ and $H$ have the same number of irreducible components. Then $\Omega^{S}=\Omega^{H}$ and $\nu(\Omega^S)=\nu(\Omega^{H})$.
\end{proposition}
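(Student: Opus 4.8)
\emph{Proof plan.}
The plan is to deduce both assertions from the general identity \eref{eq:OmegaSymProjGen}, which expresses the symmetrized operator $\Omega^{K}$ for an arbitrary subgroup $K\le G$ purely in terms of the orthogonal projectors onto the irreducible components of the representation of $K$ on $\caH$. This is the natural tool because the averaging map $\Omega\mapsto\Omega^{K}=\int_{K}U\Omega U^{\dag}\,\rmd U$ is exactly the Hilbert--Schmidt orthogonal projection of $\Omega$ onto the commutant of $\{U:U\in K\}$, and \eref{eq:OmegaSymProjGen} writes this projection out componentwise over the isotypic decomposition of $\caH$; in particular $\Omega^{K}$ is completely determined by that decomposition. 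Note that \eref{eq:OmegaSymProjGen} is available for every $K\le G$, not just for $G$ itself, because its derivation uses only that $|\Psi\>$ is left invariant up to a phase by $K$, which $K$ inherits from $G$. An equivalent bookkeeping, parallel to the proof of \pref{pro:GapSym}, is to use $\Omega^{H}=(\Omega^{S})^{H}$ (valid since $S\subseteq H$), so that it suffices to show $\Omega^{S}$ is already $H$-invariant.

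The heart of the argument is the comparison of the decompositions of $\caH$ into irreducible components under $S$ and under $H$. Since $S\le H$, every $H$-irreducible subspace is in particular $S$-invariant, hence splits as a direct sum of one or more $S$-irreducible subspaces; consequently the number of $S$-irreducible components is never smaller than the number of $H$-irreducible components, with equality precisely when no $H$-irreducible subspace splits further under $S$. The hypothesis that $S$ and $H$ have the same number of irreducible components therefore forces each $H$-irreducible subspace to be $S$-irreducible, so the two decompositions of $\caH$ coincide, and with them the families of component projectors entering \eref{eq:OmegaSymProjGen} as well as the commutants of the two representations. Feeding this common decomposition into \eref{eq:OmegaSymProjGen} for $S$ and for $H$ yields $\Omega^{S}=\Omega^{H}$ at once; equivalently, $\Omega^{S}$ now lies in the common commutant, hence is $H$-invariant, so $\Omega^{S}=(\Omega^{S})^{H}=\Omega^{H}$. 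The equality $\nu(\Omega^{S})=\nu(\Omega^{H})$ then requires nothing further, the two verification operators being literally the same; this refines the inequality $\nu(\Omega^{S})\le\nu(\Omega^{H})$ already established in \pref{pro:GapSym}.

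The step I expect to demand the most care is the claim that $\Omega^{K}$ is faithfully recovered from the list of irreducible-component projectors, i.e.\ that ``same decomposition'' really does imply ``same $\Omega^{K}$.'' A priori the commutant of the smaller group $S$ can be strictly larger than that of $H$ even when both have the same number of irreducible components, since restriction to $S$ may fuse components that were inequivalent for $H$; one must therefore verify that for the representations figuring in \eref{eq:OmegaSymProjGen} the commutant is genuinely the span of the component projectors (equivalently, these representations are multiplicity-free), so that agreement of the decompositions upgrades to agreement of the commutants and hence of the symmetrized operators. Granting \eref{eq:OmegaSymProjGen} in this form, the remaining steps are the short arguments sketched above.
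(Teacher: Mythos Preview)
Your approach is exactly the paper's: the paper states only that the proposition ``is a corollary of \eref{eq:OmegaSymProjGen},'' and your proposal is the natural unpacking of that one-line claim (restriction from $H$ to $S$ can only refine the irreducible decomposition, so equality of the component count forces each $H$-irreducible to remain $S$-irreducible, and then \eref{eq:OmegaSymProjGen} yields the same operator for both groups).

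Your final paragraph is more than stylistic caution---you have put your finger on a genuine gap that the paper's terse treatment does not address. Equality of the lists of irreducible \emph{subspaces} does not by itself force $\Omega^{S}=\Omega^{H}$, because two $H$-inequivalent components can become $S$-equivalent upon restriction; then the $S$-isotypic projectors are coarser, the commutant of $S$ is strictly larger, and \eref{eq:OmegaSymProjGen} gives different answers. (A two-dimensional toy example: let $H$ act on $\bbC^{2}$ by two inequivalent characters and let $S$ be trivial; both have two irreducible components, yet $\Omega^{S}=\Omega$ while $\Omega^{H}$ is the diagonal part of $\Omega$.) So the proposition, read literally, needs an extra hypothesis such as multiplicity-freeness of the $S$-representation, or equivalently that $H$-inequivalent components remain $S$-inequivalent. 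In every application the paper makes of \pref{pro:GapSym3} (the group $K$ for $|W_3\>$, the group $R$ for $|W_n\>$) this holds, so the uses are sound; but you are right that it does not follow from the stated hypotheses alone, and your proposal is in this respect more careful than the paper's own justification.
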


Suppose $S$ is a subgroup of $G$ and has $r$ inequivalent irreducible components with dimensions $d_j$ and multiplicities $m_j$, respectively (here we view $S$ as a representation of itself).  Then the Hilbert space $\caH$ decomposes into
\begin{equation}
\caH=\bigoplus_{j=1}^r \caH_j\otimes \bbC^{m_j},
\end{equation}
where $\caH_j$ has dimension $d_j$ and  carries the $j$th irreducible representation, and $\bbC^{m_j}$ denotes the multiplicity space.
Let $Q_j$ be the projector onto $\caH_j\otimes \bbC^{m_j}$, then
\begin{equation}\label{eq:OmegaSymProjGen}
\Omega^S=\sum_{j=1}^r \frac{1}{d_j} \bigl[\openone_{\caH_j}\otimes \tr_{\caH_j}(Q_j \Omega)\bigr] Q_j,
\end{equation}
where $\tr_{\caH_j}$ means the partial trace over $\caH_j$
(cf. the appendix of Ref.~\cite{Gross07}).
If all irreducible components of $S$ are inequivalent, that is, $m_j=1$ for $j=1,2,\ldots, r$, then \eref{eq:OmegaSymProjGen} reduces to
\begin{equation}\label{eq:OmegaSymProj}
\Omega^S=\sum_{j=1}^r \frac{\tr(Q_j \Omega)}{d_j} Q_j,
\end{equation}
where $Q_j$ is the projector onto the $j$th irreducible component. In this case, all $S$-invariant verification operators commute with each other.

\newcommand{\xratio}{1.2}%
\newcommand{\yratio}{1.5}%
\newcommand{\cmark}{\ding{51}}%
\newcommand{\xmark}{\ding{55}}%

\begin{figure*}
\begin{center}
	\includegraphics[width=15.54cm]{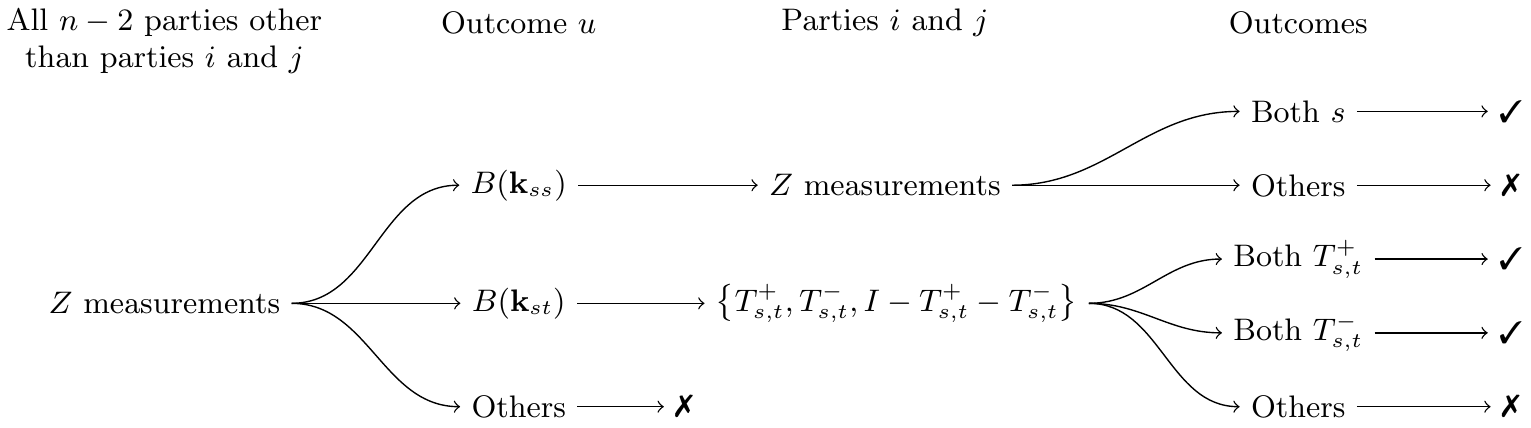}
	\caption{\label{fig:measurement-I} Schematic view of the test $P_{i,j}$ used to verify the Dicke state $\ket{D(\bfk)}$.
All $n-2$ parties other than parties $i$ and $j$ first perform the generalized Pauli-$Z$ measurement and send the outcome $u$ to parties $i$ and $j$.
Conditioned on this outcome, parties $i$ and $j$ then perform  suitable projective measurements. The outcomes corresponding to passing the test are marked by "\cmark".
}
\end{center}
\end{figure*}

\section{VERIFICATION OF Qudit Dicke STATES}
In this section we construct an efficient  protocol for verifying general qudit Dicke states \cite{Wei03}.
Previously,  efficient protocols were known only for qubit Dicke states \cite{Liu19}.

\subsection{Dicke states}
 Up to  a local unitary transformation, each $n$-qudit Dicke state can be labeled by  a sequence of $r+1$ ordered positive integers that sum up to $n$, where $r\leq d-1$. Let
\begin{equation}\label{eq:bfk}
\bfk:=(k_0,k_1,\dots,k_r),
\end{equation}
where $k_0,k_1,\dots,k_r$ are positive integers that satisfy the conditions $\sum_{j=0}^{r}k_j=n$ and  $k_0\geq k_1\geq\cdots\geq k_r\geq 1$.
Denote by $B(\bfk)$ the set of all sequences of $n$ symbols in which $k_i$ symbols are equal to $i$ for $i=0,1,\dots,r$.
Then the $n$-partite Dicke state corresponding to the sequence $\bfk$ has the form
\begin{equation}\label{eq:quditDstate}
\ket{D(\bfk)}=\frac{1}{\sqrt{m}} \sum_{u\in B(\bfk)}|u\>,
\end{equation}
where $m:=|B(\bfk)|=n!/\big(\prod _{j=0}^r k_j!\big)$.
It is worth pointing out that here we consider all Dicke states that can be defined for $n$-qudit systems with local dimension $d$, while some of these states can also  be defined for systems with smaller local dimensions. To avoid trivial cases, we assume that  $n\geq3$ and $k_0<n$ in the rest of this paper unless it is stated otherwise.


When  $\bfk=(2,1,1)$ for example,
the set  $B(\bfk)$ is composed of all sequences of four symbols in which two symbols are equal to 0, one symbol is equal to 1, and one symbol is equal to 2.
More concretely, $B(\bfk)=\{$0012, 0021, 0102, 0120, 0201, 0210, 1002, 1020, 1200, 2001, 2010, 2100$\}$.
The corresponding Dicke state reads
\begin{align}
\ket{D(\bfk)}= \ & \frac{1}{\sqrt{12}}\big( \ket{0012}+\ket{0021}+\ket{0102}+\ket{0120}\nonumber\\
&+\ket{0201}  +\ket{0210}+\ket{1002}+\ket{1020}+\ket{1200} \nonumber\\
&+\ket{2001}+\ket{2010}+\ket{2100}\big).
\end{align}

When $r=1$, $\ket{D(\bfk)}$ is a familiar qubit Dicke state. If in addition  $k_1=1$, then the  Dicke state reduces to a $W$ state \cite{Haff05},
\begin{equation}\label{eq:Wstate}
\ket{W_n}=\frac{1}{\sqrt{n}}\sum_{u\in B_n^1}|u\>,
\end{equation}
where  $B_n^1$ is  the set of strings in $\{0,1\}^n$ with Hamming weight 1.
In particular, the three-qubit $W$ state ($n=3$)  reads
\begin{equation}\label{eq:3qubitW}
\ket{W_3}=\frac{1}{\sqrt{3}}\big(\ket{001}+\ket{010}+\ket{100}\big).
\end{equation}

Denote by $S$ the group of all permutations of the $n$ parties (realized as unitary transformations); denote by $H$ the group of all unitary transformations of the form $U^{\otimes n}$, where $U$ is diagonal in the computational basis; let $G=SH=HS$. The Dicke state $\ket{D(\bfk)}$ is invariant under
any permutation of the $n$ parties and is thus invariant under the action of $S$. In addition, it is invariant (up to an overall phase factor) under any unitary transformation in $H$ or $G$. These observations are instructive to constructing efficient protocols for verifying the  state $\ket{D(\bfk)}$. Given any verification operator $\Omega$ for $\ket{D(\bfk)}$, we can construct potentially more efficient verification operators $\Omega^H$, $\Omega^S$, $\Omega^G$ according to \eref{eq:OmegaSym} and Proposition~\ref{pro:GapSym}.

\subsection{\label{sec:DickeVerify}Efficient verification of qudit Dicke states}
To construct an efficient protocol for verifying the Dicke state $|D(\bfk)\>$ defined in \eref{eq:quditDstate},
it is convenient to introduce some additional notations and concepts. Let
\begin{align}
\bfk_t^s &:=(k_0,\dots,k_s+1,  \dots,k_t-1,  \dots,k_r),   \label{eq:kst} \\
\bfk_{st}&:=(k_0,\dots,k_s-1,  \dots,k_t-1,  \dots,k_r),      \label{eq:kst2}   \\
\bfk_{ss}      &:=(k_0,\dots,k_s-2,\dots,k_r)  \quad  {\rm for}\ \, k_s\geq2. \label{eq:kss}
\end{align}
Here we assume that $0\leq s, t\leq r$ and $s\ne t$ in \eref{eq:kst},  $0\leq s< t\leq r$ in \eref{eq:kst2}, and
$0\leq s\leq r$ in \eref{eq:kss}.
Now the sets  $B(\bfk_t^s)$, $B(\bfk_{st})$, and $B(\bfk_{ss})$ can be defined in the same way as $B(\bfk)$.
The generalized Pauli-$Z$ operator acting on a single qudit is defined as
\begin{equation}\label{eq:Z}
Z=\sum_{j=0}^{d-1} \omega^j |j\>\<j|,  \qquad  \omega=\rme^{2\pi\rmi/d}.
\end{equation}
The generalized Pauli-$Z$ measurement is the projective measurement on the computational basis.

\begin{table}
	\caption{\label{tab:Protocol}
		The efficiencies of various verification strategies for the $n$-qubit $W$ state, three-qubit $W$ state, $n$-partite Dicke states, phased Dicke states, and antisymmetric basis state.
		Here $\nu(\Omega)$ denotes the spectral gap of each strategy, and $N(\epsilon,\delta,\Omega)$ denotes the number of tests required to verify the target state within infidelity $\epsilon$ and significance level $\delta$. In addition, the coefficients $a$ and $b$ read $a=\sqrt{\pi/2}\tanh(\pi/2)\approx 1.15$ and $b=\sqrt{\pi/2}\coth(\pi/2)\approx 1.37$.
	}		
	\begin{math}
	\begin{array}{c|cc}
	\hline\hline
	\mbox{Strategy} &\nu(\Omega) &N(\epsilon,\delta,\Omega) \\[0.5ex]
	\hline
	\Omega_{W_n}   \ (n\gg1, n \ $is odd$)\    & b/(4\sqrt{n})        &4\sqrt{n} (b\epsilon)^{-1}\ln \delta^{-1}               \\[0.5ex]
	\Omega_{W_n}   \ (n\gg1, n \ $is even$)    & a/(4\sqrt{n})        &4\sqrt{n} (a\epsilon)^{-1}\ln \delta^{-1}               \\[0.5ex]
	\Omega_{W_n}^G \ (n\gg1, n \ $is odd$)\    & b/\sqrt{n}         &\sqrt{n} (b\epsilon)^{-1}\ln \delta^{-1}              \\[0.5ex]
	\Omega_{W_n}^G \ (n\gg1, n \ $is even$)    & a/\sqrt{n}         &\sqrt{n} (a\epsilon)^{-1}\ln \delta^{-1}              \\[0.5ex]
	\Omega_{\1}\ (n=3)                                &  0.305                &3.28 \epsilon^{-1}\ln \delta^{-1}                       \\[0.5ex]
	\Omega_{\2}\  (n=3)                              & 5/8                   &(8/5) \epsilon^{-1}\ln \delta^{-1}                      \\[0.5ex]
	\Omega_\bfk \ $and$\ \Omega_\bfk^\phi \  (\bfk=(2,1))    & 1/3     &3 \epsilon^{-1}\ln \delta^{-1}                          \\[0.5ex]
	\Omega_\bfk \ $and$\ \Omega_\bfk^\phi \  (\bfk\ne(2,1))  & 1/(n-1) &(n-1) \epsilon^{-1}\ln \delta^{-1}                      \\[0.5ex]
	\Omega_{\AS_n\!}                           & 1/(n-1)               &(n-1) \epsilon^{-1}\ln \delta^{-1}                      \\[0.5ex]		
	\Omega^{\tilde{G}}_{\AS_n\!}               & n/(n+1)               &(n+1)n^{-1} \epsilon^{-1}\ln \delta^{-1}                \\[0.5ex]
	\hline\hline
	\end{array}	
	\end{math}
\end{table}

Our verification protocol consists of $\binom{n}{2}$ distinct tests performed with  uniform probabilities. Each test is associated with a pair of parties among the $n$ parties and is based on adaptive local projective measurements.
To be specific, the test $P_{i,j}$ associated with parties $i$ and $j$ is  illustrated in Fig.~\ref{fig:measurement-I} and  realized as follows.
All $n-2$ parties other than parties $i$ and $j$ perform  the generalized Pauli-$Z$ measurements, and their outcomes are labeled by a sequence $u$ of $n-2$ symbols, which corresponds to the product state $|u\>$. The measurements of parties $i$ and $j$ depend on the outcome $u$, and  we need to distinguish three cases. Suppose $k_0,k_1,\dots,k_g\geq2$ and $k_{g+1}=k_{g+2}=\cdots k_r=1$, where $-1\leq g\leq r$.
\begin{enumerate}
\item[1.] $u\in B(\bfk_{ss})$ with $0\leq s\leq g$. \\
In this case, the normalized reduced state of parties $i$ and $j$ reads $|s\>_i|s\>_j$ (if the target Dicke state is measured).
Then the two parties both perform $Z$ measurement, and the test is passed if they both obtain outcome $s$.

\item[2.] $u\in B(\bfk_{st})$ with $0\leq s<t\leq r$.\\
In this case, the normalized reduced state of parties $i$ and $j$ reads $\frac{1}{\sqrt{2}}(|s\>_i|t\>_j+|t\>_i|s\>_j)$.
Then the two parties both perform the projective measurement $\big\{T_{s,t}^+,T_{s,t}^-,I-T_{s,t}^+ -T_{s,t}^-\big\}$, where $I$ is the identity operator for one qudit and
\begin{align}
T_{s,t}^+=\frac{1}{2}(\ket{s}+\ket{t})(\bra{s}+\bra{t}),\label{eq:Pst+} \\
T_{s,t}^-=\frac{1}{2}(\ket{s}-\ket{t})(\bra{s}-\bra{t}).\label{eq:Pst-}
\end{align}
The test is passed if they both obtain the first outcome (corresponding to $T_{s,t}^+$) or if they both obtain the second outcome (corresponding to $T_{s,t}^-$).

\item[3.] Other cases.\\
The state cannot be the target state $\ket{D(\bfk)}$, so the test is not passed.
\end{enumerate}

\begin{figure}
\begin{center}
	\includegraphics[width=8.6cm]{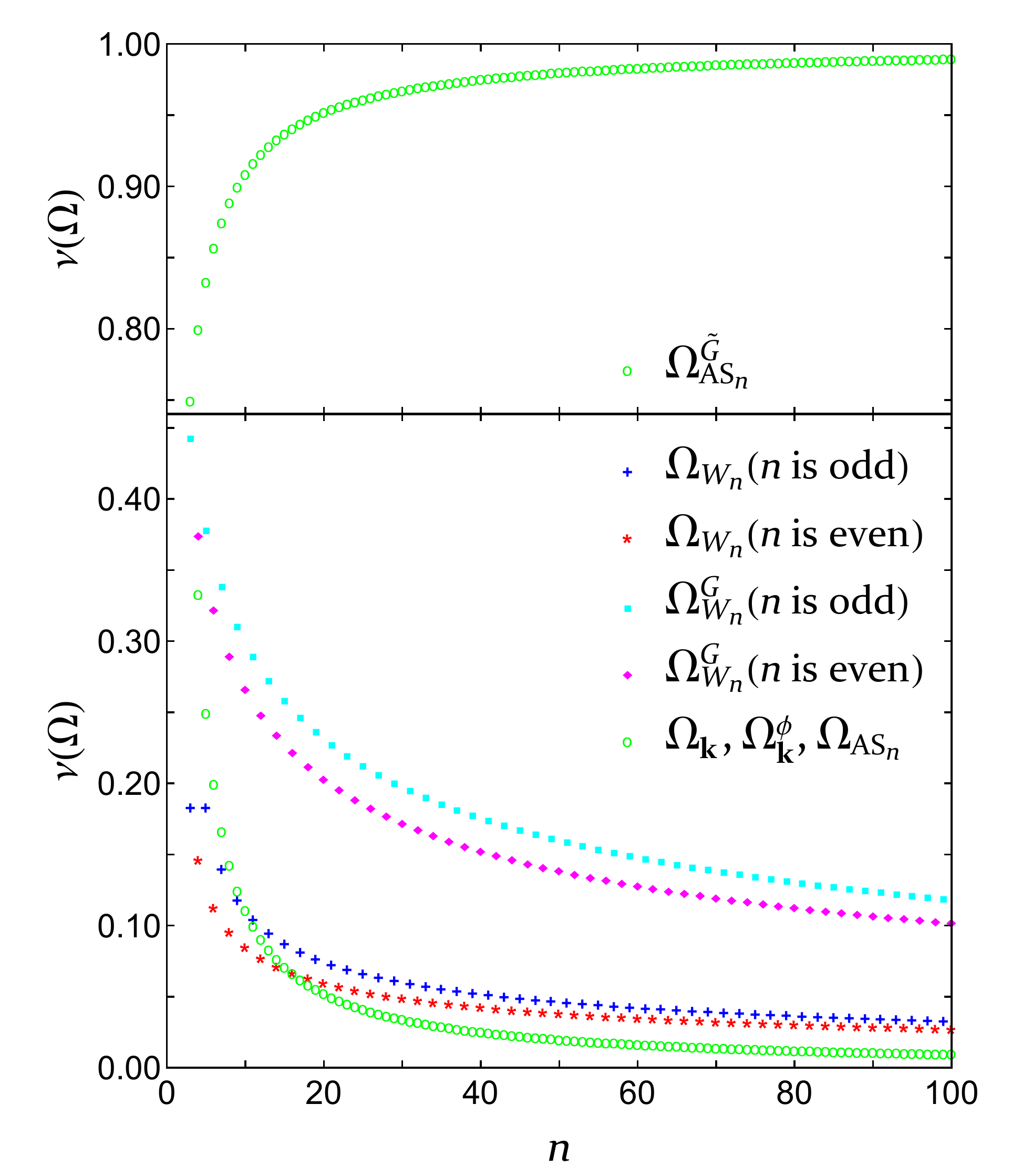}
	\caption{\label{fig:nu} Spectral gaps  $\nu(\Omega)$ of verification strategies for the $n$-qubit $W$ state, $n$-partite Dicke states, phased Dicke states, and antisymmetric basis state. The values of $\nu\big(\Omega_{W_n}\big)$ and $\nu\big(\Omega^G_{W_n}\big)$ oscillate with the parity of $n$.
Strategies $\Omega_\bfk$, $\Omega_\bfk^\phi$, and $\Omega_{\AS_n}$ have the same spectral gap when $n\geq4$  [cf.~\eqssref{eq:nuOmegaD}{eq:nuOmegaD'}{eq:nuOmegaAS}].
}
\end{center}
\end{figure}

The resulting test projector reads
\begin{align}\label{eq:Pij}
P_{i,j}=&   \sum_{s=0}^g \bcaZ_{i,j}(\bfk_{ss}) \otimes \big[(\ket{s}\bra{s})^{\otimes2}\big]_{i,j}\nonumber\\
&+\sum_{s<t}   \bcaZ_{i,j}(\bfk_{st}) \otimes \big[(T_{s,t}^+)^{\otimes 2}+(T_{s,t}^-)^{\otimes 2}\big]_{i,j},
\end{align}
where
\begin{align}
\bcaZ_{i,j}(\bfk_{ss})&=\sum_{u\in B(\bfk_{ss}) }\ket{u}\bra{u},\label{eq:barZijss}\\
\bcaZ_{i,j}(\bfk_{st})&=\sum_{u\in B(\bfk_{st}) }\ket{u}\bra{u}.\label{eq:barZijst}
\end{align}
Here the subscripts $i,j$ and the overbar   indicate that the operators $\bcaZ_{i,j}(\bfk_{ss})$ and $\bcaZ_{i,j}(\bfk_{st})$
act on the $n-2$ parties other than $i$ and $j$. By contrast,
the subscripts $i,j$ in $\big[(\ket{s}\bra{s})^{\otimes2}\big]_{i,j}$ and $\big[(T_{s,t}^+)^{\otimes 2}+(T_{s,t}^-)^{\otimes 2}\big]_{i,j}$  indicate that these operators act on  parties $i$ and $j$.
We perform each test with probability $1/\binom{n}{2}$, and the resulting verification operator reads
\begin{equation}\label{eq:OmegaD}
\Omega_{\bfk}=\binom{n}{2}^{-1}\sum_{i<j}P_{i,j}.
\end{equation}
The efficiency of this protocol is guaranteed by the following theorem, which is proved in Appendix~\ref{app:TheoDickeProof}. The result is summarized in Table~\ref{tab:Protocol} and illustrated in Fig.~\ref{fig:nu}.
Here it is worth pointing out that the spectral gap of $\Omega_{\bfk}$ is closely related to the spectrum of the transposition graph \cite{Chase1973,Caputo2010}, which is of interest to some researchers beyond quantum information science.

\begin{theorem}\label{thm:Dicke}
The spectral gap of $\Omega_\bfk$ reads
\begin{align}
\nu(\Omega_{\bfk})&=\begin{cases}
  1/2          & \bfk=(1,1,1),  \\
  1/3          & \bfk=(2,1),    \\
  1/(n-1) \ \  & n\ge4.
\end{cases}\label{eq:nuOmegaD}
\end{align}
To verify the Dicke state $\ket{D(\bfk)}$ within infidelity $\epsilon$ and significance level $\delta$, the number of tests required reads
\begin{align}\label{eq:NumberTestDicke}
N(\epsilon,\delta,\Omega_{\bfk})&\approx\begin{cases}
2 \epsilon^{-1} \ln\delta^{-1}          & \bfk=(1,1,1),  \\
3 \epsilon^{-1} \ln\delta^{-1}          & \bfk=(2,1),    \\
(n-1) \epsilon^{-1}\ln\delta^{-1}  \    & n\ge4.
\end{cases}
\end{align}
\end{theorem}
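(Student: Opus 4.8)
The plan is to compute the spectral gap of $\Omega_\bfk = \binom{n}{2}^{-1}\sum_{i<j}P_{i,j}$ by exploiting the permutation symmetry of the Dicke state $\ket{D(\bfk)}$. Since each test $P_{i,j}$ is built symmetrically and the collection $\{P_{i,j}\}_{i<j}$ is permuted transitively by $S_n$, the operator $\Omega_\bfk$ is already $S$-invariant in the sense of Section~\ref{sec:sym}, so by Proposition~\ref{pro:GapSym3} we may work within the decomposition of $\caH$ into $S_n$-isotypic components and compute $\lambda_2(\Omega_\bfk) = \|\bar\Omega_\bfk\|$ block by block. First I would verify that $P_{i,j}\ket{D(\bfk)} = \ket{D(\bfk)}$ (so each $P_{i,j}$ is a genuine test projector) by checking, case by case on the outcome $u$ of the $Z$-measurements on the other $n-2$ parties, that the reduced state of parties $i,j$ is exactly the claimed state $\ket{s}_i\ket{s}_j$ or $\tfrac{1}{\sqrt2}(\ket{s}_i\ket{t}_j+\ket{t}_i\ket{s}_j)$, both of which are fixed by the projective measurements described. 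Then I would identify $\bar P_{i,j} = P_{i,j} - \ket{D(\bfk)}\bra{D(\bfk)}$ explicitly on the relevant subspaces.

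The key reduction is that the action of $\bar\Omega_\bfk$ decouples into independent pieces indexed by the "type" of basis string. A computational basis string $\ket{v}$ with $v\in B(\bfk)$ (i.e.\ a string of the correct multiplicities) lies in the support of $P_{i,j}$ with some weight depending on whether $v_i = v_j$ or $v_i\ne v_j$; strings not in $B(\bfk)$ and more complicated superpositions are handled separately and contribute eigenvalues bounded above by those coming from the $B(\bfk)$-block. The heart of the matter is the restriction of $\bar\Omega_\bfk$ to $\operatorname{span}\{\ket{v}: v\in B(\bfk)\}$, an $m\times m$ matrix whose off-diagonal structure is: $\ket{v}$ and $\ket{v'}$ are connected exactly when $v'$ is obtained from $v$ by transposing the entries in some pair of positions $i,j$ (with a coefficient $\propto 1/\binom{n}{2}$ coming from the $T^{\pm}_{s,t}$ terms), while diagonal terms collect the "pass with probability $1$" contributions. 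This is precisely (a weighted version of) the adjacency operator of the \emph{transposition graph} on $B(\bfk)$ — equivalently, the Cayley-type graph on the set of words — restricted to the orthogonal complement of the all-ones vector. So computing $\nu(\Omega_\bfk)$ amounts to computing $1-\lambda_2$ of that graph, which (by the references \cite{Chase1973,Caputo2010} invoked in the text) has a known closed-form second eigenvalue; carrying this through yields $1/(n-1)$ for $n\ge 4$, with the small cases $\bfk=(1,1,1)$ and $\bfk=(2,1)$ done by direct diagonalization.

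Concretely the steps are: (1)~confirm each $P_{i,j}$ is a valid test projector and write $\bar P_{i,j}$ in a convenient block form; (2)~use $S_n$-invariance plus Proposition~\ref{pro:GapSym3} to restrict attention to the $B(\bfk)$-supported block (and bound the leftover blocks by an easy direct argument showing they contribute at most $\lambda_2\le 1-1/(n-1)$); (3)~recognize the block as the transposition-graph Laplacian/adjacency operator and invoke the known spectral gap, or for $r\ge1$ reduce it to the simpler "interchange process" whose gap equals that of the underlying complete graph $K_n$, namely $n/(n-1)$ scaled appropriately, giving $\lambda_2=1-1/(n-1)$; (4)~treat $n=3$ by hand, distinguishing $\bfk=(1,1,1)$ (the $W_3$ state, where $\lambda_2=1/2$) from $\bfk=(2,1)$ (where $\lambda_2=2/3$); (5)~substitute into \eref{eq:NumberTest} to get the test numbers in \eqref{eq:NumberTestDicke}. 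The main obstacle I anticipate is step~(3): pinning down the exact second eigenvalue of the weighted transposition graph on multisets of symbols, and in particular verifying that the relevant eigenvalue is the one coming from the "defect" degree of freedom rather than some other representation-theoretic sector — equivalently, checking that Aldous-type spectral-gap identities for the interchange process apply in this weighted setting and that no competing eigenvalue from the non-$B(\bfk)$ blocks exceeds it. Handling the degenerate boundary cases $\bfk=(2,1)$ and $\bfk=(1,1,1)$ separately, and making sure the rank-$\ge2$ hypothesis of \lref{lem:2TestStrategy} is not secretly needed, will require some care as well.
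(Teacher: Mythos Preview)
Your approach is essentially the same as the paper's: decompose $\Omega_\bfk$ into orthogonal blocks according to the ``type'' of the basis strings, identify the $B(\bfk)$-block with (a shift of) the adjacency matrix of the transposition graph, and invoke the Aldous/Caputo--Liggett--Richthammer spectral-gap result \cite{Caputo2010} for step~(3); the ``leftover'' blocks are exactly the paper's $M_{(s,t)}$ terms, whose top eigenvalue $[k_s(k_s+1)+k_t(k_t+1)]/[2n(n-1)]$ is computed directly via Perron--Frobenius and compared against $(n-2)/(n-1)$.

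Two minor corrections. First, Proposition~\ref{pro:GapSym3} is not the right tool in step~(2): the block-diagonal structure of $\Omega_\bfk$ with respect to type (i.e.\ the orthogonality of $M_1$ and the $M_{(s,t)}$) follows immediately from the explicit form of $P_{i,j}$ in \eref{eq:Pij}, not from any symmetrization identity, and the $S_n$-isotypic decomposition is strictly finer than (and unnecessary for) this. Second, $\bfk=(1,1,1)$ is \emph{not} the $W_3$ state --- it is the three-\emph{qutrit} symmetric state $\tfrac{1}{\sqrt6}\sum_\sigma\ket{\sigma(0)\sigma(1)\sigma(2)}$; the $W_3$ state corresponds to $\bfk=(2,1)$, and indeed for $\bfk=(2,1)$ it is the off-type block $M_{(0,1)}$ (not $M_1$) that produces the dominant eigenvalue $\lambda_2=2/3$, so your ``easy direct bound'' on the leftover blocks must be sharpened to an exact computation there.
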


By construction $\Omega_{\bfk}$ is invariant under any permutation of the $n$ parties; actually  we have $\Omega_{\bfk}=P_{1,2}^S$,
where $S$ is the group of all permutations of the $n$ parties.  Therefore,   $\Omega_{\bfk}^S=\Omega_{\bfk}$ and
$\Omega_\bfk^G=\Omega_\bfk^H$, where $G=HS$, and $H$ is the group of all diagonal unitary operators of the form
$U^{\otimes n}$.
As shown in  Appendix~\ref{app:ComputeOmegaD^G}, the spectral gap of $\Omega_{\bfk}^G$ reads
\begin{align}
\nu(\Omega_{\bfk}^G)=
\frac{1}{n-1},\quad  n\ge3.\label{eq:nuOmegaD2}
\end{align}
So we have $\nu(\Omega_{\bfk}^G)=\nu(\Omega_{\bfk})$ whenever $n\geq 4$, although $\Omega_{\bfk}^G\neq \Omega_{\bfk}$ in general;
the symmetrization procedure discussed in Sec.~\ref{sec:sym} does not help in this case.

\section{Efficient verification of $W$ states}
In this section we present two more efficient protocols for verifying the $n$-qubit $W$ state  defined in \eref{eq:Wstate} \cite{Wei03,Haff05}. These protocols can reduce the number of tests quadratically with respect to the number of qubits.

\subsection{Efficient protocol based on two distinct tests}
The first protocol consists of only two distinct tests.
In the first test, called the standard test, all parties perform the Pauli-$Z$ measurements, and the test is passed if only one of the $n$ outcomes is 1.
The test projector reads
\begin{equation}\label{eq:P1}
P_1=\sum_{u\in B_n^1}|u\>\<u|,
\end{equation}
where  $B_n^1$ is  the set of strings in $\{0,1\}^n$ with Hamming weight 1.
In the other test, each of the first $n-1$ parties performs $X$ measurements; denote the outcome by 0 (1) if the measurement result is $+1$ $(-1)$.
The $n-1$ outcomes are labeled by a string $x\in \{0,1\}^{n-1}$ of $n-1$ bits, which corresponds to the product state
\begin{equation}\label{eq:alphax}
|\alpha_x\>=\frac{1}{\sqrt{2^{n-1}}}\sum_{y\in\{0,1\}^{n-1}} (-1)^{x\cdot y} |y\>\,.
\end{equation}
The reduced state of party $n$ reads
\begin{equation}
|\beta_{x}\>=\frac{|1\>+(n-1-2|x|)|0\>}{\sqrt{1+(n-1-2|x|)^2}},
\end{equation}
where $|x|$ denotes the Hamming weight of $x$.
Then party $n$ performs the two-outcome projective measurement $\big\{|\beta_{x}\>\<\beta_{x}|,I-|\beta_{x}\>\<\beta_{x}|\big\}$,
and the test is passed if the first outcome (corresponding to $|\beta_{x}\>\<\beta_{x}|$) is obtained.
The resulting test projector reads
\begin{equation}\label{eq:P2}
P_2=\sum_{x\in\{0,1\}^{n-1}} |\alpha_x\>\<\alpha_x|\otimes|\beta_{x}\>\<\beta_{x}|\,.
\end{equation}

If we perform the two tests $P_1$ and $P_2$ with probability $p$ and $1-p$, respectively, then the verification operator reads
\begin{equation}\label{eq:OmegaWn}
\Omega_{W_{n}}=pP_1+(1-p)P_2.
\end{equation}
According to \lref{lem:2TestStrategy}, the spectral gap  $\nu(\Omega_{W_{n}})$ is maximized when $p=1/2$, in which case  $\Omega_{W_n}=(P_1+P_2)/2$ and $\nu(\Omega_{W_n})=(1-\sqrt{q})/2$, where
\begin{align}\label{eq:q}
q=\|\bar{P}_1\bar{P}_2 \bar{P}_1\|=\begin{cases}
\frac{2}{5} & n=3,\\
1-h(n-3) & n\geq 4,
\end{cases}
\end{align}
with
\begin{align}\label{eq:hn}
h(n):=\frac{1}{2^{n}}\sum_{j=0}^{n}\frac{\binom{n}{j}}{1+(n-2j)^2} \,.
\end{align}
Here the second equality in \eref{eq:q} is derived in  Appendix~\ref{app:derive q}.
Therefore, we have
$\nu(\Omega_{W_{\!n}}\!)=(1/2)-(1/\sqrt{10})$ for $n=3$, and
\begin{equation}\label{eq:nuW}
\nu(\Omega_{W_n})=\frac{1-\sqrt{1-h(n-3)}}{2}> \frac{h(n-3)}{4} \quad  {\rm for}\ \, n\geq 4.
\end{equation}
The dependence of $\nu(\Omega_{W_{\!n}})$ on $n$ is  illustrated in Fig.~\ref{fig:nu}.

The function $\sqrt{n}\,h(n)$ has the following properties as proved in Appendix~\ref{app:h(n)property}.
\begin{proposition}\label{pro:sqrt{n}h(n)Monot}
$\sqrt{n}\,h(n)$ is strictly  monotonically increasing in $n$ for odd $n$ and even $n$, respectively, assuming $n\geq0$.
\end{proposition}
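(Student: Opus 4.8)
The plan is to replace the combinatorial sum defining $h(n)$ by a clean integral, read off a simple two‑step recursion, and then finish with elementary algebra. First I would use the elementary identity $\tfrac{1}{1+x^2}=\int_0^\infty \rme^{-t}\cos(xt)\,\rmd t$ together with $\tfrac{1}{2^n}\sum_{j}\binom{n}{j}\rme^{\rmi(n-2j)t}=(\cos t)^n$ (equivalently, $h(n)$ is the expectation of $1/(1+S_n^2)$ for a simple $\pm1$ walk and $(\cos t)^n$ is its characteristic function) to rewrite
\[
h(n)=\int_0^\infty \rme^{-t}(\cos t)^n\,\rmd t ,
\]
the interchange of the finite sum with the absolutely convergent integral being immediate. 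This form also makes transparent that $0<h(n)\le1$, with $h(0)=1$ and $h(n)<1$ for $n\ge1$ (the latter is already clear from the defining sum, since $1/[1+(n-2j)^2]<1$ as soon as $n-2j\ne0$).

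Next I would extract a recursion by integrating by parts twice. Taking $\rmd v=\rme^{-t}\rmd t$ in $h(m)=\int_0^\infty\rme^{-t}(\cos t)^m\,\rmd t$ gives $h(m)=1-mJ(m)$ with $J(m):=\int_0^\infty\rme^{-t}(\cos t)^{m-1}\sin t\,\rmd t$. Integrating $J(m)$ by parts with $u=(\cos t)^{m-1}$ and $\rmd v=\rme^{-t}\sin t\,\rmd t$ (so $v=-\tfrac12\rme^{-t}(\sin t+\cos t)$), then using $\sin^2 t=1-\cos^2 t$ and $\int_0^\infty\rme^{-t}(\cos t)^{m-2}\sin^2 t\,\rmd t=h(m-2)-h(m)$, and finally eliminating $J(m)$ through $J(m)=(1-h(m))/m$, one is left with the two‑term recursion
\[
(m^2+1)\,h(m)=m(m-1)\,h(m-2)+1,\qquad m\ge2 ,
\]
with base values $h(0)=1$, $h(1)=1/2$. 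Keeping careful track of the boundary terms at $t=0$ and checking that the stray $J(m)$ contributions cancel is the only mildly delicate bookkeeping, but it is routine.

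From the recursion the proposition follows quickly. Rearranging it gives
\[
m\,h(m)-(m-1)\,h(m-2)=\frac{m-(m-1)\,h(m-2)}{m^2+1}\ge\frac{1}{m^2+1}>0
\]
since $h(m-2)\le1$; hence $m\,h(m)>(m-1)\,h(m-2)$ for every $m\ge2$. Because $(m-1)^2=m(m-2)+1>m(m-2)$, we have $(m-1)/\sqrt m>\sqrt{m-2}$, and therefore, using $h(m-2)>0$,
\[
\sqrt m\,h(m)=\frac{m\,h(m)}{\sqrt m}>\frac{(m-1)\,h(m-2)}{\sqrt m}>\sqrt{m-2}\,h(m-2).
\]
Putting $m=n+2$ yields $\sqrt{n+2}\,h(n+2)>\sqrt n\,h(n)$ for all $n\ge0$, which is exactly the claimed strict monotonicity of $\sqrt n\,h(n)$ along the even integers and along the odd integers. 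The \emph{main obstacle} is really Step~1/Step~2: spotting the integral representation, equivalently the linear recursion with polynomial coefficients, after which everything is elementary, whereas a direct term‑by‑term comparison of the binomial sums for $h(n)$ and $h(n+2)$ looks considerably messier and I would avoid it. As a byproduct the same recursion pins down $\sqrt n\,h(n)\to\sqrt{\pi/2}\coth(\pi/2)$ for even $n$ and $\to\sqrt{\pi/2}\tanh(\pi/2)$ for odd $n$, i.e.\ the constants $b$ and $a$ of Table~\ref{tab:Protocol}, though this is not needed for the proposition itself.
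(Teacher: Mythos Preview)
Your proof is correct, and it takes a genuinely different route from the paper. The paper proves $\sqrt{n+2}\,h(n+2)>\sqrt n\,h(n)$ by brute-force binomial manipulation: it bounds $\sqrt{(n+2)/n}>(n+2)/(n+1)$, shifts the summation index in the $h(n+2)$ sum, and then checks termwise that $\tfrac{n+2}{n+1}\binom{n+2}{j+1}\ge 4\binom{n}{j}$. No integral representation or recursion appears.

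Your approach via $h(n)=\int_0^\infty\rme^{-t}(\cos t)^n\,\rmd t$ and the resulting two-term recursion $(m^2+1)h(m)=m(m-1)h(m-2)+1$ is both valid and more structural: once the recursion is in hand, the monotonicity of $\sqrt n\,h(n)$ drops out from the single observation $m\,h(m)-(m-1)h(m-2)=[m-(m-1)h(m-2)]/(m^2+1)>0$ together with $(m-1)/\sqrt m>\sqrt{m-2}$. The recursion has the further payoff, which you note, of giving the even/odd limits of $\sqrt n\,h(n)$ (the constants $a,b$ of Proposition~\ref{pro:LimSqrt(n)h(n)}) more transparently than the paper's separate and rather lengthy limit computation. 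The trade-off is that the paper's argument stays entirely within elementary combinatorics, whereas yours imports a Laplace-type integral and two integrations by parts; but the bookkeeping you flag (boundary terms, cancellation of the auxiliary $J(m)$) is indeed routine and your derivation of the recursion checks out against small cases ($h(2)=3/5$, $h(3)=2/5$).
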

\begin{proposition}\label{pro:LimSqrt(n)h(n)}
When $n\rightarrow+\infty$, $\sqrt{n}\,h(n)$ converges  for odd $n$ and even $n$, respectively,
\begin{align}
&\lim_{n\rightarrow+\infty} \sqrt{2n+1}\,h(2n+1)
= \sqrt{\frac{\pi}{2}}  \tanh\Bigl(\frac{\pi}{2}\Bigr)\approx 1.15,             \label{eq:LimSqrt(n)h(n)Odd}\\
&\lim_{n\rightarrow+\infty} \sqrt{2n}\,h(2n)
                 = \sqrt{\frac{\pi}{2}}  \coth\Bigl(\frac{\pi}{2}\Bigr)\approx 1.37.                     \label{eq:LimSqrt(n)h(n)Even}
\end{align}
Here we assume that $n$ is an integer when taking the limits.
\end{proposition}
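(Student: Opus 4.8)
The plan is to recast the defining sum in probabilistic form and then pass to the limit term by term. Substituting $k=n-2j$, so that $k$ runs over the integers of the same parity as $n$ with $|k|\le n$, yields $\sqrt{n}\,h(n)=\sum_{k}\sqrt{n}\,b_{n,k}/(1+k^2)$, where $b_{n,k}:=\binom{n}{(n-k)/2}2^{-n}$ is the probability that a simple symmetric $n$-step random walk ends at position $k$. I would then let $n\to\infty$ inside the sum.

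For the pointwise limit, Stirling's formula (equivalently, the local central limit theorem) gives $\sqrt{n}\,b_{n,k}\to\sqrt{2/\pi}$ as $n\to\infty$ through integers of a fixed parity, for each fixed $k$ of that parity. To interchange limit and summation I would invoke the crude uniform bound $b_{n,k}\le\binom{n}{\lfloor n/2\rfloor}2^{-n}\le C/\sqrt{n}$, valid with an absolute constant $C$ for all $n\ge1$ and all $k$; hence $\sqrt{n}\,b_{n,k}\le C$ and the summand is dominated by the summable sequence $C/(1+k^2)$. Dominated convergence then gives $\lim_{n\to\infty}\sqrt{n}\,h(n)=\sqrt{2/\pi}\sum_{k}(1+k^2)^{-1}$, the sum now running over all integers $k$ of the relevant parity. (Existence of both limits in fact already follows from the monotonicity in \pref{pro:sqrt{n}h(n)Monot} together with the same bound, since it gives $\sqrt{n}\,h(n)\le C\pi\coth\pi$; the steps above are then needed only to pin down the values.)

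It remains to evaluate the two lattice sums, for which I would use the classical identity $\sum_{k\in\bbZ}(a^2+k^2)^{-1}=(\pi/a)\coth(\pi a)$. For even $k=2m$ this gives $\sum_{m\in\bbZ}(1+4m^2)^{-1}=\frac14\sum_{m\in\bbZ}\bigl((1/2)^2+m^2\bigr)^{-1}=\frac{\pi}{2}\coth(\pi/2)$, and subtracting this from the full sum gives $\sum_{k\ \mathrm{odd}}(1+k^2)^{-1}=\pi\coth\pi-\frac{\pi}{2}\coth(\pi/2)=\frac{\pi}{2}\tanh(\pi/2)$, where the last equality is the double-angle identity $\coth(2x)=\frac12[\coth x+\tanh x]$ at $x=\pi/2$. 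Multiplying by $\sqrt{2/\pi}$ reproduces $\sqrt{\pi/2}\,\tanh(\pi/2)$ for odd $n$ and $\sqrt{\pi/2}\,\coth(\pi/2)$ for even $n$, which is the assertion.

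The one genuinely delicate point is the domination step, which requires a bound on $\sqrt{n}\,b_{n,k}$ uniform in both $n$ and $k$; this is exactly the standard fact that the central binomial coefficient dominates the others and is $\Theta(2^n/\sqrt{n})$. Everything else is routine. As an alternative route that bypasses the sum, one may use $\frac{1}{1+k^2}=\int_0^\infty\rme^{-t}\cos(kt)\,\rmd t$ together with $\sum_{j=0}^n\binom{n}{j}\cos[(n-2j)t]=(2\cos t)^n$ to obtain the closed form $h(n)=\int_0^\infty\rme^{-t}\cos^n t\,\rmd t$; a Laplace-type analysis---the mass of $\cos^n t$ concentrates near $t\in\pi\bbZ_{\ge0}$, with weight $\rme^{-k\pi}$ and sign $(-1)^{nk}$ at $t=k\pi$---then recovers the same limits through a geometric series and makes the parity dependence transparent.
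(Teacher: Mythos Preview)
Your main argument is essentially the paper's: both factor out the central binomial coefficient to get $\sqrt{2/\pi}$ via Stirling/Wallis, argue that the ratio $\binom{n}{(n-k)/2}\big/\binom{n}{\lfloor n/2\rfloor}$ may be replaced by $1$ in the remaining sum, and then evaluate the lattice sum $\sum_k(1+k^2)^{-1}$ over the appropriate parity class using the cotangent/$\coth$ identity. Your dominated-convergence packaging is cleaner than the paper's explicit splitting and tail estimate, and your closing integral representation $h(n)=\int_0^\infty \rme^{-t}\cos^n t\,\rmd t$ with a Laplace-type analysis near $t\in\pi\bbZ_{\ge0}$ is a genuinely different and elegant route that the paper does not pursue.
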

The above two propositions imply
the following inequalities:
\begin{align}
\frac{1}{2}&\leq \sqrt{n}h(n)\leq \sqrt{\frac{\pi}{2}}  \tanh\Bigl(\frac{\pi}{2}\Bigr),\quad n\geq 1\mbox{ is odd}, \label{eq:hnbound1}\\
\frac{3\sqrt{2}}{5}&\leq \sqrt{n}h(n)\leq \sqrt{\frac{\pi}{2}}  \coth\Bigl(\frac{\pi}{2}\Bigr),\quad n\geq 2\mbox{ is even}.\label{eq:hnbound2}
\end{align}
By virtue of these results, we can derive lower and upper bounds for the spectra gap, namely,
\begin{align}
\frac{1}{4\sqrt{n}}<\nu(\Omega_{W_n})<
\begin{cases}
3/(8\sqrt{n}) & n\geq 3, \ n\ne 5,\\
1/(2\sqrt{n}) & n=5;
\end{cases}
\label{eq:bound-nuW}
\end{align}
these  bounds can be improved when the parity of  $n$ is given;
see  Appendix~\ref{app:ProofEq:bound-nuW} for more details.
As a consequence of \eref{eq:bound-nuW},  the number of tests required to verify  $\ket{W_n}$ within infidelity $\epsilon$ and significance level $\delta$ satisfies
\begin{equation}
N(\epsilon,\delta,\Omega_{W_n})\leq \biggl\lceil\frac{4\sqrt{n}}{\epsilon} \ln \delta^{-1}\biggr\rceil.  \\
\end{equation}
In addition,  $\nu(\Omega_{W_n})$ admits the following limits
\begin{gather}
\!\!\lim_{n \to +\infty} \sqrt{2n+1}\nu(\Omega_{W_{2n+1}})
\!=\! \frac{\sqrt{2\pi}}{8} \coth\Bigl(\frac{\pi}{2}\Bigr)
\!\approx\! 0.342,\label{eq:Lim sqrt(2n+1)nu}\\
\lim_{n \to +\infty} \sqrt{2n}\nu(\Omega_{W_{2n}})
= \frac{\sqrt{2\pi}}{8} \tanh\Bigl(\frac{\pi}{2}\Bigr)
\approx 0.287,\label{eq:Lim sqrt(2n)nu}
\end{gather}
as proved in Appendix~\ref{app:Proof Lim sqrt(n)nu}.
When $n\gg1$, we have
\begin{align}
&\nu(\Omega_{W_n})\approx\begin{cases}
\frac{\sqrt{2\pi}}{8\sqrt{n} } \coth\bigl(\frac{\pi}{2}\bigr)\approx \frac{0.342}{\sqrt{n}}   & n \ $is odd$,\\[0.8ex]
\frac{\sqrt{2\pi}}{8\sqrt{n} } \tanh\bigl(\frac{\pi}{2}\bigr)\approx \frac{0.287}{\sqrt{n} }\ \ & n \ $is even$;
\end{cases}\label{eq:lim-nuW}\\
&N(\epsilon,\delta,\Omega_{W_n})\approx\begin{cases}
2.93\sqrt{n} \epsilon^{-1}\ln \delta^{-1}    & n \ $is odd$,\\
3.48\sqrt{n} \epsilon^{-1}\ln \delta^{-1}  \ & n \ $is even$.
\end{cases}
\end{align}
These results are summarized in Table~\ref{tab:Protocol} and illustrated in Fig.~\ref{fig:nu}.
Compared with the protocol in Ref.~\cite{Liu19} which achieves $\nu=1/(n-1)$ with $\binom{n}{2}$ distinct tests when $n\geq 4$ (cf.~Sec.~\ref{sec:DickeVerify}), the current protocol achieves a much better scaling behavior in $n$ and a higher efficiency whenever $n\geq15$, although only two distinct tests are required.

\subsection{\label{sec:symW}
Higher efficiency from symmetrization}
The efficiency of the above protocol can be improved by applying the symmetrization procedure described in Sec.~\ref{sec:sym}. Let $G$  be the group generated by all permutations of the $n$ qubits and diagonal unitary operators of the form $U^{\otimes n}$. Consider the symmetrized verification operator
\begin{equation}
\Omega_{W_{n}}^G=pP_1^G+(1-p)P_2^G=pP_1+(1-p)P_2^G,
\end{equation}
Note that $P_1^G=P_1$ is a projector, but $P_2^G$ is not a projector. So
\lref{lem:2TestStrategy} is not applicable, and  here the optimal choice of $p$  is not $1/2$ in contrast to \eref{eq:OmegaWn}. Denote by $\caH_1$ the support of $P_1$ and by $\caH_2$ the orthogonal complement of $\caH_1$. Then $\caH_1$ and $\caH_2$ are invariant subspaces of $G$. In addition, $G$ has two inequivalent irreducible components in $\caH_1$:  one component is spanned by $|W_n\>$ and is one dimensional; the other component consists of all vectors in $\caH_1$ that are orthogonal to $|W_n\>$. Each irreducible component in $\caH_2$ is not equivalent to any irreducible component in $\caH_1$.
Consequently, $P_2^G$ is block diagonal with respect to $\caH_1$ and $\caH_2$; in addition, $P_1 \bar{P}_2^G P_1$ is proportional to a projector.  Let $R$ be the subgroup of $G$ generated by $\diag(1,\rme^{2\pi\rmi/(n+1)})^{\otimes n}$ and a cyclic permutation of order $n$; note that $R$ has order $n(n+1)$. By virtue of  Proposition~\ref{pro:GapSym3}, it is not difficult to verify that $\Omega_{W_{n}}^R=\Omega_{W_{n}}^G$, given that $P_1$ is invariant under all permutations, while $P_2$ is invariant under permutations of the first $n-1$ parties. Therefore, the strategy $\Omega_{W_{n}}^G$ can be realized using $n^2+n+1$ distinct projective tests.

As derived in Appendix~\ref{app:Wsymmetrization}, we have
\begin{align}\label{eq:TraceP1P2}
\tr(P_1 P_2^G)=\tr(P_1 P_2)=n-1-(n-2)h(n-1),
\end{align}
where $h(n)$ is defined in \eref{eq:hn}.
It follows that
\begin{equation}
\|P_1 \bar{P}_2^G P_1\|=\frac{(n-2)[1-h(n-1)]}{n-1}\leq 1-h(n-1).
\end{equation}
Let
\begin{equation}\label{eq:OptProbW}
p=\frac{1-\|P_1 \bar{P}_2^G P_1\|}{2-\|P_1 \bar{P}_2^G P_1\|}=\frac{1+(n-2)h(n-1)}{n+(n-2)h(n-1)};
\end{equation}
then we have
\begin{align}
\lambda_2(\Omega_{W_{n}}^G)&=1-p=\frac{n-1}{n+(n-2)h(n-1)},  \label{eq:lambda(Omega^G)}\\
\nu(\Omega_{W_{n}}^G)&=p=\frac{1+(n-2)h(n-1)}{n+(n-2)h(n-1)}>\frac{1}{\sqrt{n}+1},\label{eq:nu(Omega^G)Bound}
\end{align}
as shown in  Appendix~\ref{app:Wsymmetrization}. In addition, by virtue of Proposition~\ref{pro:LimSqrt(n)h(n)} as well as  Eqs.~(\ref{eq:hnbound1}) and \eqref{eq:hnbound2}, we can deduce the following limits,
\begin{gather}
\!\!\lim_{n \to +\infty} \sqrt{2n+1}\,\nu \big(\Omega_{W_{2n+1}\!}^G\big)
=  \sqrt{\frac{\pi}{2}} \coth\Bigl(\frac{\pi}{2}\Bigr)
\approx 1.37,\\
\lim_{n \to +\infty} \sqrt{2n}\,  \nu \bigl(\Omega_{W_{2n}}^G\bigr)
= \sqrt{\frac{\pi}{2}} \tanh\Bigl(\frac{\pi}{2}\Bigr)
\approx 1.15.
\end{gather}
Numerical calculation shows that
a good approximation of $\nu(\Omega_{W_n})$ can be expressed as follows,
\begin{align}
&\nu(\Omega_{W_n})\approx\begin{cases}
 \frac{1.37}{\sqrt{n}+1.37}   & n \ $is odd$,\\
 \frac{1.15}{\sqrt{n}+1.11 }\ \ & n \ $is even$.
\end{cases}
\end{align}
When $n\gg 1$, we have $\nu(\Omega_{W_{n}}^G)\approx 4\nu(\Omega_{W_n})$, so the symmetrization
procedure can improve the efficiency  by about four times.

A comparison of the strategies $\Omega_{\bfk}$, $\Omega_{W_{n}}$, and $\Omega_{W_{n}}^G$ indicates that  $\Omega_{W_{n}}^G$ has the largest spectral gap and thus the highest
efficiency for all $n\geq3$ [cf.~\eqssref{eq:nuOmegaD}{eq:bound-nuW}{eq:nu(Omega^G)Bound}], as illustrated in Fig.~\ref{fig:nu}.
The strategy $\Omega_{W_{n}}$ requires only two distinct tests, which is much fewer than the number $O(n^2)$ of distinct tests
required by the other two strategies. On the other hand, the strategies $\Omega_{W_{n}}$ and $\Omega_{W_{n}}^G$ only apply to
$W$ states, while the strategy $\Omega_{\bfk}$ applies to all  qudit (including qubit) Dicke states.

\section{Nearly Optimal Verification of the three-qubit $W$ state}
In this section we construct a nearly optimal protocol for verifying  the three-qubit $W$ state $|W_3\>$ \cite{Dur00} shared by Alice, Bob, and Charlie. Before presenting this protocol, it is instructive to set an upper bound for the spectral gap of any verification operator based on LOCC.

According to Ref.~\cite{Wang19}, for a normalized two-qubit entangled pure state $s_0|00\>+s_1|11\>$
with Schmidt coefficients $s_0, s_1$ ($0<s_0,s_1<1$ and $s_0^2+s_1^2=1$),  the maximum spectral gap of any verification
operator based on LOCC or separable measurements is $1/(1+s_0s_1)$. With respect to the partition between Alice and the other two parties, $|W_3\>$  can be regarded as a two-qubit state in a proper subspace and has two Schmidt coefficients equal to $\sqrt{1/3}$ and $\sqrt{2/3}$, respectively. Therefore, the  spectral gap of any verification
operator based on LOCC or separable measurements is upper bounded by
\begin{equation}
\frac{1}{1+\sqrt{2/9}}=\frac{9-3\sqrt{2}}{7}\approx 0.6796.
\end{equation}
If each test  of the verification strategy can be realized by LOCC with one-way communication, then the upper bound can be reduced to $2/3$ according to Refs.~\cite{Wang19,Yu19}.

\subsection{Nearly optimal verification protocol}
To start with, we construct an efficient protocol using three distinct tests.
In the first test, all three parties perform $Z$ measurements, and
the test is passed if only one of the three outcomes is $1$.
The test projector reads
\begin{equation}\label{eq:W3P1}
P_1 = |001\>\<001|+ |010\>\<010|+ |100\>\<100|,
\end{equation}
which is a special case of the projector defined in \eref{eq:P1}.
The other two tests are based on adaptive local projective measurements.
The second test  $P_2$ is defined in \eref{eq:P2} with $n=3$ and  has the form
\begin{align}
P_2=& \; X^+X^+\otimes|\gamma_+\>\<\gamma_+| + X^-X^-\otimes|\gamma_-\>\<\gamma_-| \nonumber\\
& \, + (X^+X^- + X^-X^+) \otimes |1\>\<1| ,
\end{align}
where $|\gamma_\pm\>=\frac{1}{\sqrt{5}}(2|0\>\pm|1\>)$,
$X^\pm=|\pm\>\<\pm|$, and $|\pm\>=\frac{1}{\sqrt{2}}(|0\>\pm|1\>)$ are eigenstates of the operator $X$.
For the third test, Alice performs $Z$ measurement and send her outcome to Bob and Charlie.
If the outcome of Alice is $1$, so that the normalized reduced state of Bob and Charlie is $\ket{00}$ (if $|W_3\>$ is measured), then
both Bob and Charlie perform $Z$ measurement, and the test is passed if their outcomes are both $0$.
If the outcome of Alice is $0$, so that the normalized reduced state reads $\frac{1}{\sqrt{2}}(\ket{01}+\ket{10})$ (if $|W_3\>$ is measured),
then  both Bob and Charlie perform $X$ measurement, and the test is passed if their outcomes coincide.
The resulting test projector reads
\begin{equation}
P_3 = |100\>\<100| + |0\>\<0| \otimes (X^+X^+ + X^-X^-) .
\end{equation}
Note that the three test projectors $P_1$, $P_2$, and $P_3$ have ranks 3, 4, and 3, respectively.

If we perform the three tests $P_1$, $P_2$, and $P_3$ with
probabilities $p_1$, $p_2$, and $1-p_1-p_2$, respectively, then the verification operator is given by
\begin{equation}
\Omega_{\1}=p_1P_1+p_2P_2+(1-p_1-p_2)P_3.
\end{equation}
Note that this strategy can be realized using local projective measurements with one-way communication.
Numerical calculation shows that $\lambda_2(\Omega_{\1})\geq 0.695$,
and the lower bound is approximately saturated when $p_1\approx0.246$ and $p_2\approx0.444$, in which
case we have $\nu(\Omega_{\1})\approx 0.305$.

The efficiency of the above protocol can be improved by applying the symmetrization procedure described in Sec.~\ref{sec:sym}.
Let $G$ be the group generated by the six permutations and diagonal unitary operators of the form $U^{\otimes3}$. Then $G$ has six irreducible components, all of which are inequivalent.
Let
\begin{equation}
\begin{aligned}
&|\tau_0\>:=|000\>,  \qquad  |\tau_1\>:=|111\>, \\
&|\tau_2\> :=(|001\>-|010\>)/\sqrt{2},          \\
&|\tau_3\> :=(|001\>+|010\>-2|100\>)/\sqrt{6},  \\
&|\tau_4\> :=(|011\>+|101\>+|110\>)/\sqrt{3},   \\
&|\tau_5\>:=(|011\>-|101\>)/\sqrt{2},           \\
&|\tau_6\>:=(|011\>+|101\>-2|110\>)/\sqrt{6}.
\end{aligned}
\end{equation}
Then  four one-dimensional irreducible components of $G$ are spanned by $|W_3\>$, $|\tau_0\>$, $|\tau_1\>$, $|\tau_4\>$, respectively.
One two-dimensional component is spanned by $|\tau_2\>$ and $|\tau_3\>$, and the other   two-dimensional component is spanned by $|\tau_5\>$ and $|\tau_6\>$. Given any verification operator $\Omega$ for $|W_3\>$, then $\Omega^G$ has the form
\begin{align}\label{eq:SymmOmegaPPT}
\Omega^G
= &\,|W_3\>\<W_3| +\mu_0|\tau_0\>\<\tau_0| +\mu_1|\tau_1\>\<\tau_1| +\mu_4|\tau_4\>\<\tau_4| \nonumber\\
&  +\mu_2\left(|\tau_2\>\<\tau_2| +|\tau_3\>\<\tau_3|\right)+\mu_3\left(|\tau_5\>\<\tau_5| +|\tau_6\>\<\tau_6|\right)
\end{align}
according to \eref{eq:OmegaSymProj},
where $0\leq \mu_0,\mu_1,\mu_2,\mu_3,\mu_4\leq1$. On the other hand, any verification operator of this form is $G$-invariant.

Let $K$ be the subgroup of $G$ that is generated by six permutations and $U_{\pi/2}^{\otimes 3}$ with $U_{\pi/2}=\diag(1,\rmi)$; note that $K$ has order 24.  Then $K$ has the same number of irreducible components as $G$, so $\Omega^K=\Omega^G$ for any verification operator of $|W_3\>$ according to Proposition~\ref{pro:GapSym3}. In addition, if $\Omega$ can be realized by $m$ distinct projective  tests, then $\Omega^K$ can be realized by at most $24m$ distinct projective tests.

Consider the verification operator
\begin{equation}
\Omega_{\2}:=\Omega_{\1}^K=p_1P_1+p_2P_2^K+(1-p_1-p_2)P_3^K;
\end{equation}
note that $P_1^K=P_1$.
Each test operator $P_j^K$ for $j=1,2,3$  has the form in \eref{eq:SymmOmegaPPT}
with at most five distinct eigenvalues. The parameter vectors $\mu=(\mu_0,\mu_1,\mu_2,\mu_3,\mu_4)$ associated with the three test operators are respectively given by
\begin{equation}	
\begin{aligned}
\mu&=(0,0,1,0,0) \quad  {\rm for}\ \, P_1^K, \\
\mu&=\frac{1}{15}(6,9,3,8,8) \quad  {\rm for}\ \, P_2^K,\\
\mu&=\frac{1}{6}(3,0,3,1,1) \quad  {\rm for}\ \, P_3^K.
\end{aligned}
\end{equation}
Therefore, the second largest eigenvalue of $\Omega_{\2}$ reads
\begin{align}
\lambda_2(\Omega_{\2})
=&\max_{\substack{p_1\!,\,p_2\geq0\\p_1+p_2\leq1}}
\bigg\{ \frac{5-5p_1-p_2}{10}, \frac{5-5p_1+11p_2}{30},\nonumber\\
&\  \frac{5+5p_1-3p_2}{10}, \frac{3}{5}p_2 \bigg\}
\geq\frac{3}{8} \,.
\end{align}
The bound is saturated iff $p_1=1/8$ and $p_2=5/8$, in which case we have
\begin{equation}\label{eq:OurStraforW3}
\Omega_{\2}=|W_3\>\<W_3|+\frac{3}{8}\big(\openone-|W_3\>\<W_3|\big),
\end{equation}
and
\begin{equation}\label{eq:nu5/8}
\nu(\Omega_{\2}) =\frac{5}{8}\,, \qquad
N(\epsilon,\delta,\Omega_{\2})\approx\frac{8}{5\epsilon} \ln \delta^{-1}.
\end{equation}
Compared with the protocol in Ref.~\cite{Liu19} which achieves $\nu=1/3$ (cf.~Sec.~\ref{sec:sym}), this  protocol has a much higher efficiency.
In addition, the spectral gap is only 8.04\% smaller than the upper bound $\nu(\Omega)\leq(9-3\sqrt{2})/7$ for strategies based on LOCC or separable measurements. Accordingly, the number of tests required by the strategy $\Omega_{\2}$ is only 8.74\% more than the optimal strategy based on separable measurements.

\begin{figure*}
\begin{center}
	\includegraphics[width=15.95cm]{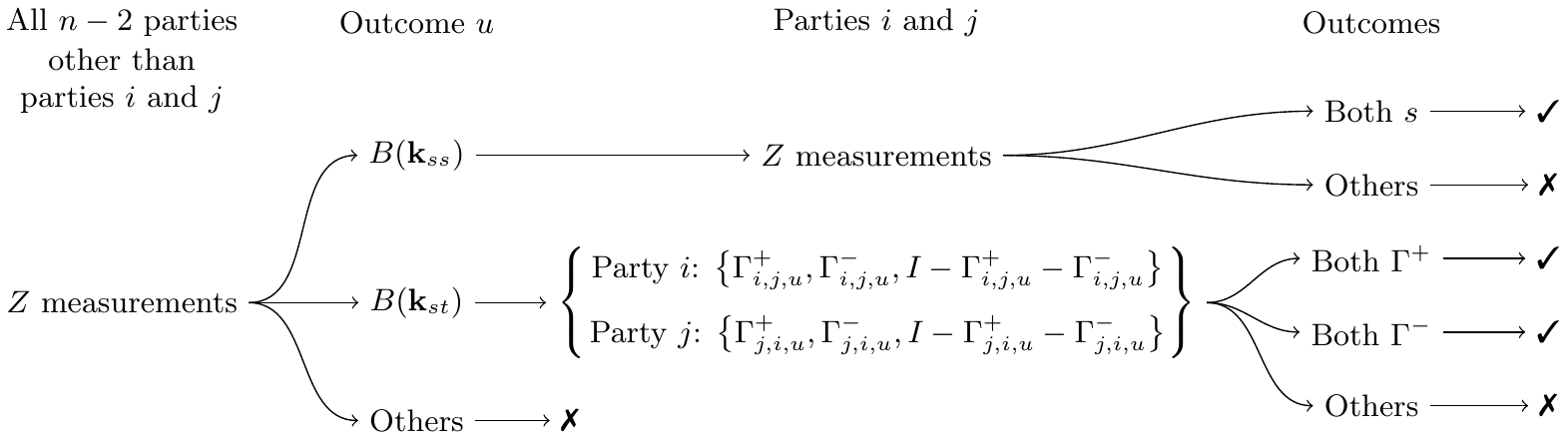}
	\caption{\label{fig:measurement-III} Schematic view of the test $P^{\phi}_{i,j}$ used to verify the phased Dicke state $\ket{D_{\phi}(\bfk)}$.
All $n-2$ parties other than parties $i$ and $j$ first perform the generalized Pauli-$Z$ measurement and send the outcome $u$ to parties $i$ and $j$.
Conditioned on this outcome, parties $i$ and $j$ then perform  suitable projective measurements. The outcomes corresponding to passing the test are marked by "\cmark".
}
\end{center}
\end{figure*}

\subsection{Additional applications}
The strategy  $\Omega_{\2}$ is homogeneous and so  can be applied to fidelity estimation \cite{ZhuEVQPSlong19}.
Note that the passing probability of  any state $\rho$ is related to its fidelity with the target state $|W_3\>$ as follows, $\tr(\rho\Omega_{\2})=\frac{5}{8}\<W_3|\rho|W_3\>+\frac{3}{8}$, which implies that
\begin{equation}
F=\<W_3|\rho|W_3\>=\frac{8}{5}\tr(\rho\Omega_{\2})-\frac{3}{5}.
\end{equation}
According to Ref.~\cite{ZhuEVQPSlong19}, the standard deviation of this
estimation is given by $\Delta F\!=\!\sqrt{(1-F)(F+3/5)/N}$, where $N$ is the number of tests performed.

Besides fidelity estimation, our protocol in \eref{eq:OurStraforW3} is also useful for state verification in the adversarial scenario, in which case the state to be verified  is prepared by a potentially malicious adversary \cite{ZhuEVQPSshort19,ZhuEVQPSlong19}.
If there is no restriction on the accessible measurements, the optimal strategy for verifying $|\Psi\>$ in the adversarial
scenario can be chosen to be homogeneous,
\begin{equation}\label{eq:HomoStra}
\Omega=|\Psi\>\<\Psi|+\lambda_2(\Omega)\big(\openone-|\Psi\>\<\Psi|\big).
\end{equation}
According to Refs.~\cite{ZhuEVQPSshort19,ZhuEVQPSlong19},
in the high-precision limit  $\epsilon,\delta\rightarrow 0$,  the minimal number of tests required to verify $|\Psi\>$ reads    (assuming $\lambda_2(\Omega)>0$),
\begin{equation}\label{eq:NumTestAdv}
N\approx[\lambda_2(\Omega)\epsilon\ln \lambda_2(\Omega)^{-1}]^{-1}  \ln \delta^{-1}.
\end{equation}
This number is minimized when $\lambda_2(\Omega)=1/\rme$, which yields $N\approx \rme\epsilon^{-1} \ln \delta^{-1}$.
Since our verification strategy $\Omega_{\2}$ for $|W_3\>$ is homogeneous with $\lambda_2(\Omega_{\2})=3/8$,
it can be applied to the adversarial scenario directly.
For high-precision state verification, the number of tests required reads
$N\approx 2.7188 \epsilon^{-1} \ln\delta^{-1}$,
which is only about 0.02\% more than the optimal strategy.
When  $\epsilon,\delta$ are   small but not infinitesimal (say $\epsilon,\delta \leq 0.01$), our strategy is still nearly optimal.

\section{\label{sec:PhasedDicke}Verification of phased Dicke states}
In this section we consider the verification of phased Dicke states \cite{Krammer09,Chiuri10}, which have  the form
\begin{equation}\label{eq:quditDstateGen}
\ket{D_{\phi}(\bfk)}=\frac{1}{\sqrt{m}} \sum_{u\in B(\bfk)}\rme^{\rmi\phi(u)}|u\>,
\end{equation}
where $m=|B(\bfk)|=n!/\big(\prod _{j=0}^r k_j!\big)$ and the phase $\phi(u)$ is a real-valued function of the sequence $u$.

Similar to the verification protocol for Dicke states, our protocol for $|D_{\phi}(\bfk)\>$ consists of $\binom{n}{2}$ distinct tests based on adaptive local projective measurements.
Each test is associated with a pair of parties among the $n$ parties.
The test $P^{\phi}_{i,j}$ associated with parties $i$ and $j$  is realized  as follows as illustrated in Fig.~\ref{fig:measurement-III}.
All $n-2$ parties other than parties $i$ and $j$ perform the generalized Pauli-$Z$ measurements, and their outcomes are labeled by a sequence $u$ of $n-2$ symbols, which corresponds to the product state $|u\>$. The measurements of parties $i$ and $j$ depend on the outcome $u$, and  we need to distinguish three cases.
Recall that $\bfk_{st}$ and $\bfk_{ss}$ are defined in  Eqs.~(\ref{eq:kst2}) and~(\ref{eq:kss}), respectively.
Suppose $k_0,k_1,\dots,k_g\geq2$ and $k_{g+1}=k_{g+2}=\cdots k_r=1$, where $-1\leq g\leq r$.
\begin{enumerate}
\item[1.] $u\in B(\bfk_{ss})$ with $0\leq s\leq g$. \\
In this case, the normalized reduced state of parties $i$ and $j$ reads $|s\>_i|s\>_j$ up to an irrelevant phase factor (if the target phased Dicke state is measured).
Then the two parties both perform $Z$ measurement, and the test is passed if they both obtain outcome~$s$.

\item[2.] $u\in B(\bfk_{st})$ with $0\leq s<t\leq r$.\\
In this case, the normalized reduced state of parties $i$ and $j$ reads,
\begin{align}
&\frac{1}{\sqrt{2}}\bigl[|s\>_i|t\>_j+\rme^{\rmi\theta(i,j,u)}|t\>_i|s\>_j\bigr], \\
&\theta(i,j,u):=\phi(v(j,i,u)) -\phi(v(i,j,u)).
\end{align}
Here  $v(i,j,u), v(j,i,u)\in B(\bfk)$ are defined as follows,
\begin{align}
v_i(i,j,u)&=s,  \quad v_j(i,j,u)=t, \quad  v_{\overline{{i,j}}}(i,j,u)=u,\label{eq:viju}\\
v_i(j,i,u)&=t,  \quad v_j(j,i,u)=s, \quad  v_{\overline{{i,j}}}(j,i,u)=u, \label{eq:vjiu}
\end{align}
where $v_{\overline{{i,j}}}(i,j,u)$ means the subsequence of $v(i,j,u)$ without the $i$th and $j$th components, and $v_{\overline{{i,j}}}(j,i,u)$ is defined in the same way. Note that the parameters  $s$ and $t$ are determined by $u$.
Then parties $i$ and $j$ perform projective measurements
$\big\{\Gamma_{i,j,u}^+,\Gamma_{i,j,u}^-,I-\Gamma_{i,j,u}^+ - \Gamma_{i,j,u}^-\big\}$
and $\big\{\Gamma_{j,i,u}^+,\Gamma_{j,i,u}^-,I-\Gamma_{j,i,u}^+ - \Gamma_{j,i,u}^-\big\}$, respectively, where
\begin{align}
\Gamma_{i,j,u}^+=\frac{1}{2}\big[\ket{s}+\rme^{\rmi\theta(i,j,u)/2}\ket{t}\big]\!
\big[\bra{s}+\rme^{-\rmi\theta(i,j,u)/2}\bra{t}\big], \\
\Gamma_{i,j,u}^-=\frac{1}{2}\big[\ket{s}-\rme^{\rmi\theta(i,j,u)/2}\ket{t}\big]\!
\big[\bra{s}-\rme^{-\rmi\theta(i,j,u)/2}\bra{t}\big],
\end{align}
and $\Gamma_{j,i,u}^\pm$ are defined in a similar way with $\theta(i,j,u)$ replaced by  $\theta(j,i,u)=-\theta(i,j,u)$.
The test is passed if they both obtain the first outcome (corresponding to $\Gamma^+$) or if they both obtain the second outcome (corresponding to $\Gamma^-$).

\item[3.] Other cases.\\
The state cannot be the target state $\ket{D_{\phi}(\bfk)}$, so the test is not passed.
\end{enumerate}
The resulting test projector reads
\begin{align}
P^{\phi}_{i,j}&=  \sum_{s=0}^g \bcaZ_{i,j}(\bfk_{ss}) \otimes \big[(\ket{s}\bra{s})^{\otimes2}\big]_{i,j}+\sum_{s<t}\sum_{u\in B(\bfk_{st})}|u\>\<u| \nonumber\\
&\quad  \otimes
\bigl(\Gamma_{i,j,u}^+\otimes \Gamma_{j,i,u}^+ + \Gamma_{i,j,u}^- \otimes \Gamma_{j,i,u}^-\bigr)_{i,j},
\end{align}
where $\bcaZ_{i,j}(\bfk_{ss})$ is the projector defined in \eref{eq:barZijss}.
Each test is performed with probability $1/\binom{n}{2}$, and the resulting verification operator reads
\begin{equation}\label{eq:OmegaD'}
\Omega^{\phi}_{\bfk}=\binom{n}{2}^{-1}\sum_{i<j}P^{\phi}_{i,j}.
\end{equation}
The efficiency of this protocol is guaranteed by the following theorem, which is proved in Appendix~\ref{app:TheoDstateGenProof}. As in the case of Dicke states,  the spectral gap of $\Omega^{\phi}_{\bfk}$ is closely related to the spectrum of the transposition graph \cite{Chase1973,Caputo2010}.
\begin{theorem}\label{thm:DstateGen}
The spectral gap of $\Omega^{\phi}_{\bfk}$ is the same as that of $\Omega_{\bfk}$ in \eref{eq:nuOmegaD}, namely,
\begin{equation}\label{eq:nuOmegaD'}
 \nu\big(\Omega^{\phi}_{\bfk}\big)=\nu(\Omega_{\bfk})=\begin{cases}
 1/2          & \bfk=(1,1,1),  \\
 1/3          & \bfk=(2,1),    \\
 1/(n-1) \ \  & n\ge4.
 \end{cases}
\end{equation}
To verify the phased Dicke state $\ket{D_{\phi}(\bfk)}$ within infidelity $\epsilon$ and significance level $\delta$, the number of tests required reads
\begin{align}\label{eq:NOmegaD'}
N\big(\epsilon,\delta,\Omega^{\phi}_{\bfk}\big)&\approx\begin{cases}
2 \epsilon^{-1} \ln\delta^{-1}          & \bfk=(1,1,1),  \\
3 \epsilon^{-1} \ln\delta^{-1}          & \bfk=(2,1),    \\
(n-1) \epsilon^{-1}\ln\delta^{-1}  \    & n\ge4.
\end{cases}
\end{align}
\end{theorem}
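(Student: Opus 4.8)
The strategy is to relate the phased protocol to the unphased one through a single \emph{diagonal} (but in general nonlocal) unitary on $(\bbC^d)^{\otimes n}$, and then to invoke \tref{thm:Dicke}. Define
\begin{equation}
V:=\sum_{u\in B(\bfk)}\rme^{\rmi\phi(u)}\ket{u}\bra{u}+\sum_{u\notin B(\bfk)}\ket{u}\bra{u},
\end{equation}
which is unitary, diagonal in the computational basis, and satisfies $V\ket{D(\bfk)}=\ket{D_{\phi}(\bfk)}$. Note that $V$ is only an analytic device: the protocol for $\ket{D_{\phi}(\bfk)}$ is still the local adaptive one described above, not the (possibly nonlocal) $V$. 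Once we establish $\Omega^{\phi}_{\bfk}=V\Omega_{\bfk}V^{\dag}$ we are done. Indeed, conjugation by $V$ turns $\Omega_{\bfk}\geq\ket{D(\bfk)}\bra{D(\bfk)}$ and $\Omega_{\bfk}\ket{D(\bfk)}=\ket{D(\bfk)}$ into the corresponding statements for $\Omega^{\phi}_{\bfk}$ and $\ket{D_{\phi}(\bfk)}$, so $\Omega^{\phi}_{\bfk}$ is a legitimate verification operator for $\ket{D_{\phi}(\bfk)}$; moreover $\Omega^{\phi}_{\bfk}$ and $\Omega_{\bfk}$ are unitarily equivalent, hence isospectral, so $\lambda_2(\Omega^{\phi}_{\bfk})=\lambda_2(\Omega_{\bfk})$ and $\nu(\Omega^{\phi}_{\bfk})=\nu(\Omega_{\bfk})$ (here $\nu(\Omega_{\bfk})>0$ by \tref{thm:Dicke}, so in each operator the target is the nondegenerate top eigenvector and the gap is measured from it). Equations \eqref{eq:nuOmegaD'} and \eqref{eq:NOmegaD'} then follow from \tref{thm:Dicke} and \eref{eq:NumberTest}.

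The equality $\Omega^{\phi}_{\bfk}=V\Omega_{\bfk}V^{\dag}$ is proved test by test, i.e., $P^{\phi}_{i,j}=V P_{i,j}V^{\dag}$ for each pair $\{i,j\}$. Because $\bcaZ_{i,j}(\bfk_{ss})$ and $\bcaZ_{i,j}(\bfk_{st})$ are diagonal in the computational basis of the $n-2$ bystander parties, both $P_{i,j}$ and $P^{\phi}_{i,j}$ are block diagonal with respect to their joint outcome $u$, and within the block labeled by $u$ the operator $V$ acts on parties $i,j$ as the diagonal unitary $D_u$ whose $\ket{ab}_{i,j}$ entry is the phase that $V$ assigns to the length-$n$ string obtained by placing $a,b$ at positions $i,j$ and $u$ elsewhere. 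For $u\in B(\bfk_{ss})$ with $s\leq g$ the block operator is the rank-one projector $(\ket{s}\bra{s})^{\otimes2}$ onto a string already in $B(\bfk)$, so it is unaffected by $D_u$ and matches the corresponding term of $P^{\phi}_{i,j}$. For $u\in B(\bfk_{st})$ with $s<t$ (where $s,t$ are fixed by $u$), a direct computation gives $(T_{s,t}^{+})^{\otimes2}+(T_{s,t}^{-})^{\otimes2}=R_0+R_1$, where $R_0$ is the projector onto $\tfrac{1}{\sqrt{2}}(\ket{ss}+\ket{tt})$ and $R_1$ the projector onto $\tfrac{1}{\sqrt{2}}(\ket{st}+\ket{ts})$; the strings carrying $\ket{ss}$ or $\ket{tt}$ at positions $i,j$ lie outside $B(\bfk)$ (wrong multiplicities) and are assigned phase $1$, whereas the strings carrying $\ket{st}$ and $\ket{ts}$ are exactly $v(i,j,u)$ and $v(j,i,u)$ in $B(\bfk)$, with phases $\rme^{\rmi\phi(v(i,j,u))}$ and $\rme^{\rmi\phi(v(j,i,u))}$. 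Hence $D_u$ fixes $R_0$ and sends $R_1$ to the projector onto $\tfrac{1}{\sqrt{2}}\bigl(\ket{st}+\rme^{\rmi\theta(i,j,u)}\ket{ts}\bigr)$ with $\theta(i,j,u)=\phi(v(j,i,u))-\phi(v(i,j,u))$; comparing with the explicit forms of $\Gamma^{\pm}_{i,j,u}$ and $\Gamma^{\pm}_{j,i,u}$ shows that $D_u(R_0+R_1)D_u^{\dag}=\Gamma_{i,j,u}^{+}\otimes\Gamma_{j,i,u}^{+}+\Gamma_{i,j,u}^{-}\otimes\Gamma_{j,i,u}^{-}$. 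Summing the blocks gives $VP_{i,j}V^{\dag}=P^{\phi}_{i,j}$, and averaging over $\{i,j\}$ with weight $\binom{n}{2}^{-1}$ gives $V\Omega_{\bfk}V^{\dag}=\Omega^{\phi}_{\bfk}$.

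The only real work is the index bookkeeping in the second step: one must track which reduced-state configuration of parties $i,j$ corresponds to which element of $B(\bfk)$, and check that the phase assignments are mutually consistent across all $\binom{n}{2}$ pairs --- which they trivially are, since $V$ is one fixed diagonal operator and every string outside $B(\bfk)$ is simply given phase $1$. Everything else is immediate from unitary invariance of the spectrum. A more laborious alternative would be to repeat the eigenvalue analysis underlying \tref{thm:Dicke} (the reduction to the transposition-graph spectrum) while carrying the phases $\rme^{\rmi\theta(i,j,u)}$ through the computation; this works but duplicates effort, so the covariance argument above is the cleaner route.
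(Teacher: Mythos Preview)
Your proof is correct and follows essentially the same route as the paper: both establish that $\Omega^{\phi}_{\bfk}$ is related to $\Omega_{\bfk}$ by a diagonal unitary in the computational basis, hence isospectral, and then invoke \tref{thm:Dicke}. The paper first expands $\Omega^{\phi}_{\bfk}$ into the block decomposition $\frac{1}{n(n-1)}\bigl[M_1'+\sum_{s<t}M_{(s,t)}\bigr]$ (with the same $M_{(s,t)}$ as in the unphased case and $M_1'$ differing from $M_1$ only by phases on the off-diagonal entries) and then observes the diagonal unitary equivalence; you instead prove the slightly stronger test-by-test identity $P^{\phi}_{i,j}=VP_{i,j}V^{\dag}$ directly, which is a cleaner packaging of the same computation.
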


\begin{figure*}
\begin{center}
	\includegraphics[width=15.9cm]{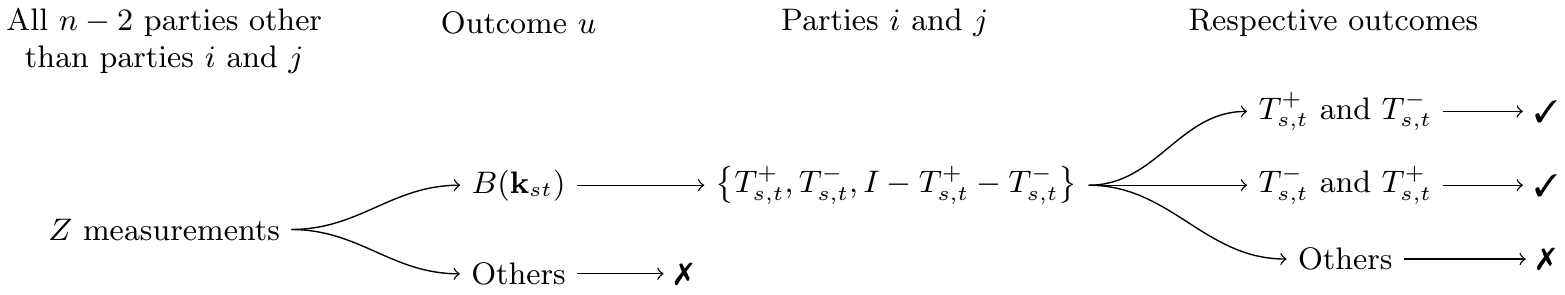}
	\caption{\label{fig:measurement-II} Schematic view of the test $P_{i,j}^{\AS}$ used to verify the $n$-partite antisymmetric basis state $|\AS_n\>$.
All $n-2$ parties other than parties $i$ and $j$ first perform the generalized Pauli-$Z$ measurement and send the outcome $u$ to parties $i$ and $j$.
Conditioned on this outcome, parties $i$ and $j$ then perform  suitable projective measurements. The outcomes corresponding to passing the test are marked by "\cmark".
}
\end{center}
\end{figure*}

\section{Optimal verification of antisymmetric basis states}
Finally,  we consider the verification of the $n$-partite antisymmetric basis state, also known as the Slater determinant
state \cite{Denni01,Zanar02,Bravyi03}. It has the following form
\begin{equation}\label{eq:AntisymState}
|\AS_n\>=\frac{1}{\sqrt{n!}}\sum_{j_1,j_2,\dots,j_n} \tilde\epsilon_{j_1,\dots,j_n} \ket{j_1-1}\otimes\cdots\otimes\ket{j_n-1} ,
\end{equation}
where $j_1,j_2\dots,j_n\in\{1,2,\dots,n\}$ and $\tilde\epsilon_{j_1,j_2,\dots,j_n}$ is the Levi-Civita symbol.
Note that $|\AS_n\>$ can be regarded as a bipartite maximally entangled state of Schmidt rank $n$ between one party and the other parties. So the spectral gap of any verification operator based on LOCC or separable measurements is upper bounded by $n/(n+1)$ according to  known results on the verification of  maximally entangled states \cite{HayaMT06,Haya09,ZhuH19O}. Here we shall show that this upper bound can be saturated for any antisymmetric basis state with $n\geq 2$. When $n=2$, the state $|\AS_2\>$ is a singlet and can be verified using protocols for bipartite pure states proposed in Refs.~\cite{HayaMT06,Haya09,ZhuH19O,LHZ19,Wang19,Yu19}. Here we focus on the multipartite case with $n\geq3$.

\subsection{Efficient verification protocol}
Note that the antisymmetric basis state $|\AS_n\>$ in \eref{eq:AntisymState} is a  special case of phased Dicke states in \eref{eq:quditDstateGen} with $\bfk=(1,1,\ldots, 1)$ and $\phi(u)=1$ ($-1$) if $u$ is an even (odd) permutation of $0,1,2,\ldots, n-1$. Therefore, $|\AS_n\>$
can be verified using the strategy presented in \eref{eq:OmegaD'} tailored to this specific case.  Here we shall construct a variant  protocol that also consists of $\binom{n}{2}$ distinct tests, and each test is associated with a pair of parties.
The test $P_{i,j}^{\AS}$ associated with parties $i$ and $j$ is illustrated in Fig.~\ref{fig:measurement-II} and realized as follows. All $n-2$ parties other than parties $i$ and $j$ perform the generalized Pauli-$Z$ measurements, and their outcomes are labeled by a sequence $u$ of $n-2$ symbols, which corresponds to the product state $|u\>$. The measurements of parties $i$ and $j$ depend on the outcome $u$, and  we need to distinguish two cases.
\begin{enumerate}
	\item[1.] $u\in B(\bfk_{st})$ with $0\leq s<t\leq n-1$.\\
	In this case, the normalized reduced state of parties $i$ and $j$ reads $\frac{1}{\sqrt{2}}(|s\>_i|t\>_j-|t\>_i|s\>_j)$.
	Then the two parties both perform the projective measurement $\big\{T_{s,t}^+,T_{s,t}^-,I-T_{s,t}^+ -T_{s,t}^-\big\}$, where
	$T_{s,t}^+$ and $T_{s,t}^-$ are projectors defined in \eqsref{eq:Pst+}{eq:Pst-}.
	The test is passed if one of them obtains the first outcome (corresponding to $T_{s,t}^+$) and the other one obtains the second outcome (corresponding to $T_{s,t}^-$).
	
	\item[2.] Other cases.\\
	The state cannot be the target state $\ket{\AS_n}$, so the test is not passed.
\end{enumerate}
The resulting test projector reads
\begin{equation}\label{eq:PijAS}
P_{i,j}^{\AS}=\sum_{s<t}  \bcaZ_{i,j}(\bfk_{st}) \otimes \left(T_{s,t}^+\otimes T_{s,t}^- + T_{s,t}^-\otimes T_{s,t}^+\right)_{i,j},
\end{equation}
where $\bcaZ_{i,j}(\bfk_{st})$ is defined in \eref{eq:barZijst} and acts on the tensor product space of all parties other than $i$ and $j$.

We perform each test with probability $1/\binom{n}{2}$, and the resulting verification operator reads
\begin{equation}\label{eq:OmegaAS}
\Omega_{\AS_n}=\binom{n}{2}^{-1} \sum_{i<j}P_{i,j}^{\AS}.
\end{equation}
The efficiency of this protocol is guaranteed by the following theorem, which is proved in Appendix~\ref{app:TheoAntiStateProof}.
\begin{theorem}\label{thm:Antisymmetric}
	The spectral gap of $\Omega_{\AS_n}$ with $n\geq3$ reads
	\begin{equation}\label{eq:nuOmegaAS}
	\nu\big(\Omega_{\AS_n}\big) =  \frac{1}{n-1}.
	\end{equation}
	To verify the antisymmetric basis state $\ket{\AS_n}$ within infidelity $\epsilon$ and significance level $\delta$, the number of tests required reads
	\begin{equation}\label{eq:NumberTestAS}
	N\big(\epsilon,\delta,\Omega_{\AS_n}\big) \approx \frac{n-1}{\epsilon}\ln\delta^{-1}.
	\end{equation}
\end{theorem}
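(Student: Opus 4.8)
The plan is to reduce the spectrum of $\Omega_{\AS_n}$ to that of the transposition graph on $S_n$. First I would introduce the permutation subspace $V:=\mathrm{span}\{|v\>:v$ a permutation of $0,1,\dots,n-1\}$, of dimension $n!$, which carries the regular representation of $S_n$ acting by permuting tensor factors; note $|\AS_n\>\in V$. The crucial computational step is to show that each test projector $P_{i,j}^{\AS}$ maps $V$ into itself (hence, being Hermitian, also maps $V^\perp$ into itself) and that on $V$ it equals $\tfrac12(\openone-\tau_{ij})$, where $\tau_{ij}$ swaps the $i$th and $j$th tensor factors. This follows by evaluating $T_{s,t}^+\otimes T_{s,t}^- + T_{s,t}^-\otimes T_{s,t}^+$ on the two-dimensional space $\mathrm{span}(|st\>,|ts\>)$ and checking that it acts there as the rank-one projector onto $(|st\>-|ts\>)/\sqrt2$, so that $|v\>\mapsto\tfrac12(|v\>-\tau_{ij}|v\>)$ with a sign that does not depend on the ordering of $v_i,v_j$. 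Consequently $\Omega_{\AS_n}|_V=\tfrac12\openone-\tfrac{1}{n(n-1)}\sum_{i<j}\tau_{ij}$, and $\sum_{i<j}\tau_{ij}$ is precisely the adjacency operator of the Cayley graph of $S_n$ generated by all transpositions.

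Since $\sum_{i<j}\tau_{ij}$ is central in $\mathbb{C}[S_n]$, it acts on the irreducible component labeled by a partition $\lambda\vdash n$ as the scalar $c_\lambda:=\sum_{(i,j)\in\lambda}(j-i)$ (the content sum, equal to $\binom{n}{2}\chi^\lambda(\tau)/f^\lambda$ for any transposition $\tau$), so the eigenvalues of $\Omega_{\AS_n}|_V$ are $e_\lambda=\tfrac12\bigl(1-c_\lambda/\binom{n}{2}\bigr)$. The sign representation $\lambda=(1^n)$ has $c_{(1^n)}=-\binom{n}{2}$, giving $e_{(1^n)}=1$, which is the eigenvalue at $|\AS_n\>$ and is nondegenerate in $V$ because $f^{(1^n)}=1$. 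For the second-largest eigenvalue I would prove the elementary fact that among $\lambda\neq(1^n)$ the content sum is minimized uniquely by $\lambda=(2,1^{n-2})$, with $c_{(2,1^{n-2})}=-\binom{n}{2}+n$: using $c_\lambda=\sum_i\binom{\lambda_i}{2}-\sum_j\binom{\lambda'_j}{2}$ together with convexity of $m\mapsto\binom{m}{2}$ (equivalently, proving the conjugate statement that $(n-1,1)$ uniquely maximizes $c_\lambda$ over $\lambda\neq(n)$). This gives $e_{(2,1^{n-2})}=\tfrac{n-2}{n-1}$ as the second-largest eigenvalue of $\Omega_{\AS_n}$ on $V$, and it is attained since $f^{(2,1^{n-2})}=n-1>0$.

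It remains to bound $\Omega_{\AS_n}$ on $V^\perp$. Here I would observe that $\Omega_{\AS_n}$ annihilates every computational basis vector except those $|w\>$ in which exactly one symbol is repeated, exactly twice, exactly one symbol is missing, and the two repeated entries occupy some pair of positions $\{i,j\}$; each such $|w\>$ is touched only by $P_{i,j}^{\AS}$, and only through the two-dimensional block $\mathrm{span}(|w\>,|w'\>)$, where $|w'\>$ exchanges the repeated and the missing symbol, on which $P_{i,j}^{\AS}$ acts as the rank-one projector onto $(|w\>-|w'\>)/\sqrt2$. These blocks are mutually orthogonal and each is hit by a single test, so $\Omega_{\AS_n}|_{V^\perp}$ is $\binom{n}{2}^{-1}$ times a direct sum of rank-one projectors; hence $\|\Omega_{\AS_n}|_{V^\perp}\|=\binom{n}{2}^{-1}<\tfrac{n-2}{n-1}$ for $n\geq3$.

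Combining the two pieces gives $\lambda_2(\Omega_{\AS_n})=\tfrac{n-2}{n-1}$, i.e. $\nu(\Omega_{\AS_n})=\tfrac1{n-1}$, and the test count~\eqref{eq:NumberTestAS} then follows from~\eqref{eq:NumberTest}. I expect the main obstacle to be the representation-theoretic step: correctly identifying $V$ with the regular representation, invoking the content-sum formula for the action of the transposition sum, and pinning down that $(2,1^{n-2})$ supplies the second-largest eigenvalue; by contrast, the $V^\perp$ bookkeeping and the $T^\pm$ identity, though the source of most of the sign-chasing, are routine.
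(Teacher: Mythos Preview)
Your proof is correct and structurally parallel to the paper's: both split the analysis into the permutation subspace $V=\mathrm{span}\{|v\rangle:v\in B(\bfk)\}$ and its complement, observe that the off-$V$ part contributes only eigenvalue $\binom{n}{2}^{-1}$ from disjoint rank-one blocks (the paper's $X_{s,t}^{u,v}$ terms), and identify the restriction to $V$ with the transposition Cayley graph on $S_n$.

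The genuine difference is in how the $V$-part is handled. The paper writes $M_1^{\AS}=\tfrac{n(n-1)}{2}\caZ(\bfk)-A(\bfk)$ and appeals to \lref{lem:CaylGraphSpect}, which it obtains from the general transposition-graph result (\lref{lem:GraphSpect}, citing the Caputo--Liggett--Richthammer proof of Aldous' conjecture) together with bipartiteness. You instead note that $\Omega_{\AS_n}|_V=\tfrac12\openone-\tfrac{1}{n(n-1)}\sum_{i<j}\tau_{ij}$ acts on the regular representation, read off the eigenvalues $e_\lambda=\tfrac12\bigl(1-c_\lambda/\binom{n}{2}\bigr)$ via the content-sum formula, and give a short majorization/convexity argument that $(2,1^{n-2})$ minimizes $c_\lambda$ over $\lambda\neq(1^n)$. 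This makes the proof self-contained: for the complete transposition generating set the spectral gap is the classical Diaconis--Shahshahani fact and does not require the full strength of the Aldous result the paper cites. The paper's route, on the other hand, keeps a uniform graph-theoretic language across Theorems~\ref{thm:Dicke}--\ref{thm:Antisymmetric} and reuses the same lemma. Your $V^\perp$ bookkeeping and the identity $T_{s,t}^+\!\otimes T_{s,t}^- + T_{s,t}^-\!\otimes T_{s,t}^+ = |\psi_{s,t}^-\rangle\langle\psi_{s,t}^-|+|\varphi_{s,t}^-\rangle\langle\varphi_{s,t}^-|$ on $\mathrm{span}\{|ss\rangle,|st\rangle,|ts\rangle,|tt\rangle\}$ match the paper's decomposition exactly.
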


Incidentally, the measurement $\big\{T_{s,t}^+,T_{s,t}^-,I-T_{s,t}^+ -T_{s,t}^-\big\}$ employed in the above verification protocol can be replaced by the alternative $\big\{\tilde{T}_{s,t}^+,\tilde{T}_{s,t}^-,I-\tilde{T}_{s,t}^+ -\tilde{T}_{s,t}^-\big\}$, where
\begin{align}
\tilde{T}_{s,t}^+&=\frac{1}{2}(\ket{s}+\rmi\ket{t})(\bra{s}-\rmi\bra{t}),\\
\tilde{T}_{s,t}^-&=\frac{1}{2}(\ket{s}-\rmi\ket{t})(\bra{s}+\rmi\bra{t}).
\end{align}
Accordingly, the test projector $P_{i,j}^{\AS}$ is replaced by
\begin{equation}
\tilde{P}_{i,j}^{\AS}=\sum_{s<t}  \bcaZ_{i,j}(\bfk_{st}) \otimes \bigl(\tilde{T}_{s,t}^+\otimes \tilde{T}_{s,t}^- + \tilde{T}_{s,t}^-\otimes \tilde{T}_{s,t}^+\bigr)_{i,j},
\end{equation}
and the resulting verification operator $\tilde{\Omega}_{\AS_n}$ is given by \eref{eq:OmegaAS} with $P_{i,j}^{\AS}$  replaced by $\tilde{P}_{i,j}^{\AS}$.
This verification strategy
is a special case of the strategy presented in Sec.~\ref{sec:PhasedDicke} (tailored to the antisymmetric basis state). According to Theorems~\ref{thm:DstateGen} and \ref{thm:Antisymmetric}, we have
\begin{equation}
\nu\big(\tilde{\Omega}_{\AS_n}\big) =  \frac{1}{n-1}=\nu\big(\Omega_{\AS_n}\big).
\end{equation}
Therefore, the two strategies $\Omega_{\AS_n}$ and $\tilde{\Omega}_{\AS_n}$ are equally efficient.

\subsection{\label{sec:OptimalASn}Optimal verification protocol based on symmetrization}
Let $\tilde{H}$ be the group of all unitary transformations of the form $U^{\otimes n}$ with $U\in \rmU(\bbC^n)$ (here $U$ is not required to be diagonal), let $S$ be the group of all permutations of the $n$ parties, and let $\tilde{G}=\tilde{H}S$. Then the  projector onto the antisymmetric basis state $|\AS_n\>$ is invariant under $\tilde{G}$.  Therefore, we can construct a symmetrized  strategy $\Omega_{\AS_n}^{\tilde{G}}$ according to Sec.~\ref{sec:sym}. Similar to $\Omega_{\bfk}$, by construction $\Omega_{\AS_n}$ is invariant under $S$,
so we have $\Omega_{\AS_n}^S=\Omega_{\AS_n}$ and $\Omega_{\AS_n}^{\tilde{G}}=\Omega_{\AS_n}^{\tilde{H}}$.
For the convenience of practical applications,  the group $\rmU(\bbC^n)$ used to construct $\tilde{H}$ can also be replaced by a unitary $t$-design with $t=n$ \cite{DankCEL09,Gross07}.

To determine $\Omega_{\AS_n}^{\tilde{G}}$, note that
$\tilde{H}$ is a representation of $\rmU(\bbC^n)$ and $S$ is a representation of the symmetric group $\scrS_n$ of $n$ letters. Accordingly, $\tilde{G}$ is a representation of $\rmU(\bbC^n)\times \scrS_n$. By Schur-Weyl duality \cite{Weyl1931,Proc2007}, all the  irreducible components of $\tilde{G}$ in $(\bbC^n)^{\otimes n}$ are multiplicity free, and each irreducible component is labeled by a partition of $n$. Meanwhile,
$(\bbC^n)^{\otimes n}$ has the following decomposition
\begin{equation}
(\bbC^n)^{\otimes n}=\bigoplus_{\mu \vdash n} \caH_\mu=\bigoplus_{\mu \vdash n} \caW_\mu\otimes \caS_\mu,
\end{equation}
where the notation  $\mu \vdash n$ means $\mu=(\mu_1, \mu_2,\ldots, \mu_n)$ is  a partition of $n$, which means $\mu_j$ are nonnegative integers arranged in decreasing order and sum up to $n$. Here $\caW_\mu$ carries the irreducible representation of the unitary group  $\rmU(\bbC^n)$, while $\caS_\mu$ carries the irreducible representation of the symmetric  group $\scrS_n$.
Let  $D_\mu=\dim(\caW_\mu)$ and $d_\mu=\dim (\caS_\mu)$; let $P_\mu$ be the projector onto $\caH_\mu$; then we have $\tr(P_\mu)=\dim(\caH_\mu)=d_\mu D_\mu$. The following theorem is proved in  Appendix~\ref{app:ComputeOmegaASG}.
\begin{theorem}\label{thm:OmegaASG}
For $n\geq 3$ we have	
	\begin{align}
	\Omega_{\AS_n}^{\tilde{G}}&=\sum_{\mu} \frac{d_\mu}{D_\mu} P_\mu, \label{eq:OmegaASG} \\
	\beta(\Omega_{\AS_n}^{\tilde{G}})&=\frac{1}{n+1}, \quad \nu(\Omega_{\AS_n}^{\tilde{G}})=\frac{n}{n+1}. \label{eq:nuOmegaASG}
	\end{align}
	To verify the antisymmetric basis state $\ket{\AS_n}$ within infidelity $\epsilon$ and significance level $\delta$, the number of tests required reads
	\begin{equation}\label{eq:NumberTestASG}
	N\big(\epsilon,\delta,\Omega_{\AS_n}^{\tilde{G}}\big) \approx \frac{n+1}{n\epsilon}\ln\delta^{-1}.
	\end{equation}
\end{theorem}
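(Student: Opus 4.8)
The strategy is to compute $\Omega_{\AS_n}^{\tilde G}$ in closed form by Schur--Weyl duality and then read off its spectrum. Since $\tilde G=\tilde HS$ is a representation of $\rmU(\bbC^n)\times\scrS_n$ whose isotypic components $\caH_\mu$ ($\mu\vdash n$) are multiplicity free, \eref{eq:OmegaSymProj} applies and gives $\Omega_{\AS_n}^{\tilde G}=\sum_\mu c_\mu P_\mu$ with $c_\mu=\tr(P_\mu\Omega_{\AS_n})/\tr(P_\mu)=\tr(P_\mu\Omega_{\AS_n})/(d_\mu D_\mu)$. Each $P_\mu$ commutes with every permutation of the parties, so $\tr(P_\mu P_{i,j}^{\AS})$ does not depend on $\{i,j\}$ and hence $\tr(P_\mu\Omega_{\AS_n})=\tr(P_\mu P_{1,2}^{\AS})$. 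Thus the whole proof reduces to the identity $\tr(P_\mu P_{1,2}^{\AS})=d_\mu^2$; granting it, $c_\mu=d_\mu/D_\mu$, which is \eref{eq:OmegaASG}.

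To prove $\tr(P_\mu P_{1,2}^{\AS})=d_\mu^2$ I would first diagonalize $P_{1,2}^{\AS}$. Using that $T_{s,t}^+\otimes T_{s,t}^-+T_{s,t}^-\otimes T_{s,t}^+$ is the projector onto $\operatorname{span}\{\ket{s}\ket{t}-\ket{t}\ket{s},\ \ket{s}\ket{s}-\ket{t}\ket{t}\}$, \eref{eq:PijAS} shows that $\mathrm{range}(P_{1,2}^{\AS})$ is the orthogonal direct sum, over all pairs $\{s,t\}$ and all arrangements $u$ of the remaining $n-2$ symbols on the other parties, of the two-dimensional space spanned by $\ket{s,t,u}-\ket{t,s,u}$ and $\ket{s,s,u}-\ket{t,t,u}$ (so $\operatorname{rank}P_{1,2}^{\AS}=n!=\sum_\mu d_\mu^2$, consistent with the claim). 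This yields an orthonormal eigenbasis of $P_{1,2}^{\AS}$ made of $n!/2$ vectors $\ket{\psi^a}=\tfrac1{\sqrt2}(\ket{s,t,u}-\ket{t,s,u})$, for which $(s,t,u)$ is a permutation of $0,1,\dots,n-1$, together with $n!/2$ vectors $\ket{\psi^d}=\tfrac1{\sqrt2}(\ket{s,s,u}-\ket{t,t,u})$. For each I would compute $\<\psi|P_\mu|\psi\>$ from the projector formula $P_\mu=\tfrac{d_\mu}{n!}\sum_{\pi\in\scrS_n}\chi^\mu(\pi)\,\rho(\pi)$, where $\rho(\pi)$ permutes the tensor factors and $\chi^\mu$ is the $\scrS_n$-character labeled by $\mu$. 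Because $\rho(\pi)$ sends a computational-basis product vector to one with the same multiset of symbols, only permutations stabilizing the relevant symbol pattern contribute: the stabilizer of a full permutation word is trivial, that of $\ket{s,s,u}$ is the transposition $\tau$ of the two parties, and---crucially---$\<s,s,u|P_\mu|t,t,u\>=0$ since $\ket{s,s,u}$ and $\ket{t,t,u}$ omit different symbols. This gives $\<\psi^a|P_\mu|\psi^a\>=d_\mu(d_\mu-\chi^\mu(\tau))/n!$ and $\<\psi^d|P_\mu|\psi^d\>=d_\mu(d_\mu+\chi^\mu(\tau))/n!$; summing over the $n!/2$ vectors of each type the $\chi^\mu(\tau)$ terms cancel and $\tr(P_\mu P_{1,2}^{\AS})=d_\mu^2$.

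With \eref{eq:OmegaASG} established, the eigenvalues of $\Omega_{\AS_n}^{\tilde G}$ are $\{d_\mu/D_\mu:\mu\vdash n\}$; the largest equals $1$ and is attained only at $\mu=(1^n)$, the component containing $\ket{\AS_n}$, so $\beta(\Omega_{\AS_n}^{\tilde G})=\max_{\mu\neq(1^n)}d_\mu/D_\mu$. By the hook-length and hook-content formulas, $d_\mu/D_\mu=n!/\prod_{(i,j)\in\mu}(n+j-i)$, so it remains to prove the combinatorial bound $\prod_{(i,j)\in\mu}(n+j-i)\ge (n+1)\,n!$ for all $\mu\vdash n$ with $\mu\neq(1^n)$, with equality exactly when $\mu=(2,1^{n-2})$. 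I would prove this by a box-moving argument: moving a cell from the end of a row down to a lower row strictly decreases this product, $(1^n)$ is its unique minimizer, and $(2,1^{n-2})$ is the only partition obtained from $(1^n)$ by a single such move, which forces every other shape to exceed $(n+1)\,n!$. Hence $\beta(\Omega_{\AS_n}^{\tilde G})=1/(n+1)$ and $\nu(\Omega_{\AS_n}^{\tilde G})=n/(n+1)$, matching the upper bound $n/(n+1)$ for separable strategies noted above and thus establishing optimality; substituting into \eref{eq:NumberTest} yields $N\approx(n+1)(n\epsilon)^{-1}\ln\delta^{-1}$, as in \eref{eq:NumberTestASG}.

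The main obstacle is the identity $\tr(P_\mu P_{1,2}^{\AS})=d_\mu^2$ of the second step: one must correctly pin down the eigenstructure of $P_{1,2}^{\AS}$---in particular notice that the adaptive local measurements accept the "spurious" symmetric vector $\ket{s,s,u}-\ket{t,t,u}$ alongside the genuinely antisymmetric one---and then carry out the representation-theoretic bookkeeping, the decisive point being that the two families of eigenvectors contribute opposite multiples of $\chi^\mu(\tau)$, which cancel. The content inequality in the third step is elementary but needs a little care to secure the equality case.
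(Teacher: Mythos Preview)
Your proposal is correct and follows essentially the same architecture as the paper: reduce to the coefficients $c_\mu=\tr(P_\mu\Omega_{\AS_n})/(d_\mu D_\mu)$ via \eref{eq:OmegaSymProj}, show $\tr(P_\mu\Omega_{\AS_n})=d_\mu^2$, and then optimize $d_\mu/D_\mu$ over $\mu\neq(1^n)$. The key cancellation of the $\pm\chi_\mu(\tau)$ contributions from the two eigenvector families is exactly the mechanism in the paper's first derivation, where the same cancellation appears between $\tr(P_\mu M_1^{\AS})$ and $\sum_{s,t,u,v}\tr(P_\mu X_{s,t}^{u,v})$ in \eqsref{eq:PmuM}{eq:PmuX}; you have simply organized the computation around the eigenbasis of a single $P_{1,2}^{\AS}$ rather than the global block decomposition of $\Omega_{\AS_n}$.

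Two points of comparison are worth noting. First, the paper also offers an alternative derivation that sidesteps the $\chi_\mu(\tau)$ bookkeeping altogether: conjugating by the local unitary $U_{st}^{\otimes n}$ (which commutes with $P_\mu$) sends $T_{s,t}^+\otimes T_{s,t}^-+T_{s,t}^-\otimes T_{s,t}^+$ to $|st\rangle\langle st|+|ts\rangle\langle ts|$, so $\tr(P_\mu P_{1,2}^{\AS})=\tr[P_\mu\,\caZ(\bfk)]=d_\mu^2$ in one line. Second, for the spectral bound the paper's \lref{lem:DimRatio} is slicker than your box-moving argument: writing $D_\mu/d_\mu=\tfrac{1}{n!}\prod_j\Gamma(n+\mu_j-j+1)/\Gamma(n-j+1)$ and invoking convexity of $\ln\Gamma$ gives Schur convexity in $\mu$ directly, so the second largest $d_\mu/D_\mu$ is automatically attained at the unique partition covering $(1^n)$ in dominance order, namely $(2,1^{n-2})$. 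Your box-moving argument is the same statement in disguise, but making it airtight essentially amounts to reproving Schur convexity along covering relations; the $\ln\Gamma$ route avoids that case analysis.
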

Equation \eref{eq:nuOmegaASG} in Theorem~\ref{thm:OmegaASG} follows from \eref{eq:OmegaASG} and \lref{lem:DimRatio} below, which imply that  the second largest eigenvalue of $\Omega_{\AS_n}^{\tilde{G}}$ is $d_\mu/D_\mu$ with $\mu=(2,1,\ldots, 1)$. In this case, we have $d_\mu=n-1$ and $D_\mu=n^2-1$, which yields \eref{eq:nuOmegaASG}. Theorem~\ref{thm:OmegaASG} implies that our protocol associated with the  verification operator $\Omega_{\AS_n}^{\tilde{G}}$ is optimal for verifying the antisymmetric basis state $|\AS_n\>$ under LOCC. This
is the only optimal protocol known for
multipartite nonstabilizer states.  For quantum states with GME, it is extremely difficult to construct  optimal verification protocols under LOCC, and
such optimal protocols were known previously only  for GHZ states \cite{LiGHZ19} (optimal protocols for some other stabilizer states were constructed recently \cite{DangHZ20} after the initial posting of this paper).

A partition $\mu\vdash n$ is majorized by another partition $\mu'\vdash n$, denoted by $\mu\prec \mu'$, if
\begin{equation}
\sum_{j=1}^k\mu_j \leq \sum_{j=1}^k\mu_j'\quad \forall k=1,2\ldots, n.
\end{equation}
Note that the inequality is saturated when $k=n$.
The following lemma as  proved in  Appendix~\ref{app:ComputeOmegaASG} is very instructive to understanding the spectrum and spectral gap of the verification operator $\Omega_{\AS_n}^{\tilde{G}}$.
\begin{lemma}\label{lem:DimRatio}
	Suppose $\mu, \mu' \vdash n$  and $\mu\prec\mu'$; then
	\begin{equation}\label{eq:DimRatio}
	\frac{D_\mu}{d_\mu}\leq \frac{D_{\mu'}}{d_{\mu'}}.
	\end{equation}
\end{lemma}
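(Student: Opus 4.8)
The plan is to convert the dimension ratio $D_\mu/d_\mu$ into a product over the cells of the Young diagram of $\mu$ and then show this product is monotone along the dominance order. By the hook--content formula, the dimension of the irreducible $\mathrm{U}(\bbC^n)$-module labelled by $\mu\vdash n$ is $D_\mu=s_\mu(1,\dots,1)=\prod_{(i,j)\in\mu}\frac{n+j-i}{h_\mu(i,j)}$, where $h_\mu(i,j)$ is the hook length of the cell $(i,j)$; by the hook length formula, $d_\mu=n!/\prod_{(i,j)\in\mu}h_\mu(i,j)$. Dividing, the hook lengths cancel and
\begin{equation}
\frac{D_\mu}{d_\mu}=\frac{1}{n!}\prod_{(i,j)\in\mu}(n+j-i).
\end{equation}
Since $\mu$ has at most $n$ parts, every cell satisfies $n+j-i\ge 1>0$, so all factors are positive; hence \eref{eq:DimRatio} is equivalent to $\prod_{(i,j)\in\mu}(n+j-i)\le\prod_{(i,j)\in\mu'}(n+j-i)$ whenever $\mu\prec\mu'$.

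Next I would use the classical fact that in the dominance order on partitions of $n$, any relation $\mu\prec\mu'$ is realised by a finite chain of elementary moves, each of which removes a single cell from the end of some row $b$ and re-attaches it at the end of a strictly higher row $a<b$, the intermediate diagrams always being partitions of $n$ (see Brylawski's work on the lattice of integer partitions, or Macdonald's \emph{Symmetric Functions and Hall Polynomials}). A self-contained argument is also short: take $a$ to be the least index with $\mu_a<\mu_a'$ and $b$ the least index $>a$ with $\sum_{i\le b}\mu_i=\sum_{i\le b}\mu_i'$; one checks that row $b$ is then a removable corner, that moving its last cell to row $a$ produces a partition $\nu$ with $\mu\prec\nu\preceq\mu'$, and iterates. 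Either way it suffices to prove the product inequality across a single such move.

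For one move taking $\lambda\vdash n$ to $\lambda'\vdash n$, the cell removed from row $b$ occupied column $\lambda_b$ and had content $\lambda_b-b$, while the cell added to row $a$ occupies column $\lambda_a+1$ and has content $\lambda_a+1-a$; all other cells are unchanged, so
\begin{equation}
\frac{\prod_{(i,j)\in\lambda'}(n+j-i)}{\prod_{(i,j)\in\lambda}(n+j-i)}=\frac{n+\lambda_a+1-a}{n+\lambda_b-b}.
\end{equation}
Because $a\le b-1$ we have $1-a\ge 2-b$, and because $\lambda$ is a partition we have $\lambda_a\ge\lambda_b$; adding these gives $\lambda_a+1-a\ge\lambda_b+2-b>\lambda_b-b$, so together with positivity of the denominator the ratio exceeds $1$. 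Multiplying the ratios along a chain from $\mu$ to $\mu'$ and dividing by $n!$ yields \eref{eq:DimRatio}, in fact with strict inequality unless $\mu=\mu'$.

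The only genuinely nontrivial ingredient is the combinatorial input about the dominance order---that comparable partitions of $n$ are linked by single-cell up-moves through partitions of $n$; the rest is a one-line cancellation plus an elementary inequality between contents. As a byproduct the lemma controls the spectrum of $\Omega^{\tilde{G}}_{\AS_n}=\sum_\mu (d_\mu/D_\mu)P_\mu$: the ratio $d_\mu/D_\mu$ is order-reversing for $\prec$, equals $1$ only at $\mu=(1,\dots,1)$ (which is one-dimensional and spans $\ket{\AS_n}$), and has its next largest value at the unique partition $(2,1,\dots,1)$ covering $(1,\dots,1)$, giving $\lambda_2\big(\Omega^{\tilde{G}}_{\AS_n}\big)=(n-1)/(n^2-1)=1/(n+1)$, which is exactly what feeds \eref{eq:nuOmegaASG}.
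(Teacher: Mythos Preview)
Your proof is correct and reaches the same formula $D_\mu/d_\mu=\frac{1}{n!}\prod_{(i,j)\in\mu}(n+j-i)$ that the paper uses (the paper writes it row by row as $\frac{1}{n!}\prod_j (n+\mu_j-j)!/(n-j)!$, which is the same product). The divergence is in how monotonicity under dominance is established. The paper takes logarithms and invokes Schur convexity: since $\ln\Gamma$ is convex, $\sum_j\ln\Gamma(n+\mu_j-j+1)$ is convex in $\mu$ and hence Schur convex, so $\mu\prec\mu'$ gives the inequality in one stroke. You instead reduce to the covering structure of the dominance lattice (Brylawski/Macdonald single-box raises) and check the elementary content inequality $\lambda_a+1-a>\lambda_b-b$ across each raise. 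Your route is more hands-on and entirely self-contained once the combinatorial fact about chains of raises is granted; it also yields the strict inequality for $\mu\ne\mu'$ as a byproduct. The paper's route is shorter to state but leans on the Schur-convexity machinery---and strictly speaking that step needs the observation that $\mu\prec\mu'$ is equivalent to $(\mu_j-j)_j\prec(\mu'_j-j)_j$, after which the function becomes symmetric in the shifted variables; you sidestep that subtlety entirely. Your closing identification of $\lambda_2(\Omega_{\AS_n}^{\tilde G})=1/(n+1)$ via the unique cover $(2,1^{n-2})$ of $(1^n)$ is exactly how the paper extracts \eref{eq:nuOmegaASG} from the lemma.
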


\subsection{Efficient certification of GME}
A multipartite pure state is  genuinely multipartite entangled if it is not separable across every bipartition.
According to Ref.~\cite{Guhne09}, a quantum state $\rho$ is genuinely multipartite entangled if its fidelity with some
multipartite entangled state $|\Psi\>$ is larger than $C_\Psi$, where $C_\Psi$ is the square of the maximum  Schmidt coefficient of $|\Psi\>$ maximized over  all  bipartitions.
Note that $C_{\Psi}$ equals $1/n$ when $|\Psi\>$ is the antisymmetric basis state $|\AS_n\>$.
Thus a state $\rho$ is genuinely multipartite entangled if $\tr(\rho|\AS_n\>\<\AS_n|)>1/n$.
Given a verification strategy $\Omega$ for $|\AS_n\>$, to certify the GME of the antisymmetric basis state with significance level $\delta$, the number of tests is determined by \eref{eq:NumberTest} with $\epsilon=(n-1)/n$. If $\Omega$ is the optimal local strategy with $\nu(\Omega)=n/(n+1)$ (the strategy $\Omega_{\AS_n}^{\tilde{G}}$ constructed in Sec.~\ref{sec:OptimalASn} for example), then this number reads
\begin{equation}
N_\mathrm{E}=\biggl\lceil\frac{ \ln \delta}{\ln2-\ln(n+1)}\biggr\rceil,
\end{equation}
which decreases monotonically with $n$.
We have  $N_\mathrm{E}=1$ when $n\geq2\delta^{-1}-1$, so the GME of the antisymmetric basis state
can be certified with any given significance level using only one test when the number $n$ of particles is large enough. Previously, single-copy certification of GME is known only for GHZ states  \cite{LiGHZ19} and qudit stabilizer states \cite{ZhuEVQPSlong19}.
The current result is of special interest because it may shed light on the certification of GME of other nonstabilizer states.

\section{Comparison with quantum state tomography}

Before concluding this paper, it is instructive to compare our verification protocols with traditional tomography \cite{Haff05,Haah17,Donnell16}; cf. Ref.~\cite{ZhuEVQPSlong19}.
First, they have different assumptions. In quantum state tomography, it is usually assumed that the states  prepared in different runs are identical.
However, this assumption is difficult to guarantee in many scenarios of practical interest.
In quantum state verification, we can drop this assumption and thus draw a stronger conclusion \cite{ZhuEVQPSshort19,ZhuEVQPSlong19}.

Second, the two approaches address different tasks and have different goals. Quantum state tomography aims to determine the density matrix of an unknown quantum state completely.
That is why the resource required grows exponentially with the number $n$ of qubits (qudits),
given that the system size increases exponentially with $n$.
By contrast, the aim of quantum state verification is to determine whether the states prepared are sufficiently close to the target state on average.
If these states are far from the target state, then they will fail the tests quickly, so we can avoid false positive conclusion with high probability, but we can get little information about the true state in this case.
In a word, quantum state verification tries to extract the key information---the fidelity with the target state---as efficiently as possible. It is a powerful tool in many scenarios of practical interest in which quantum state tomography is too resource consuming to apply. It cannot replace  tomography completely, but is  a useful addition to the traditional tomographic approaches.
In practice, the choice of a specific method depends on the specific task and goal under consideration.

To see the inefficiency of tomography, here we review the number of copies (tests) required when tomography is employed to estimate an unknown quantum state within a  given precision as quantified by the trace distance $\epsilon_1$ or the infidelity $\epsilon_2$.
Recall that the fidelity between two density matrices $\rho$ and $\sigma$ is defined as
\begin{equation}
F(\rho,\sigma):=\left[ \tr\left(\sqrt{\sqrt{\rho}\sigma\sqrt{\rho}}\right) \right]^2.
\end{equation}
To clarify the efficiency  limit of the traditional approach, here we consider quantum state tomography with optimal collective measurements (the most general measurements allowed by quantum mechanics). The efficiency can only decrease if only individual local measurements are accessible.
Suppose $\sigma$ is an unknown $D$-dimensional quantum state, and $\rho$ is our estimator constructed using quantum state tomography.
If we want to achieve precision $\epsilon_1$ in trace distance, that is, $\frac{1}{2}\|\rho-\sigma\|_1\leq \epsilon_1$ (with constant probability close to 1); then at least $\Omega(D^2/\epsilon_1^2)$ copies are required according to Ref.~\cite{Haah17}.
In addition, Ref.~\cite{Donnell16} shows that $O(D^2/\epsilon_1^2)$ copies are sufficient to accomplish this task.
On the other hand, if we want to achieve infidelity $\epsilon_2$, that is, $1-F(\rho,\sigma)\leq \epsilon_2$, then $\Omega(D^2/\epsilon_2)$ copies are necessary, while
$O(D^2/\epsilon_2)\ln(D/\epsilon_2)$ copies are sufficient according to Ref.~\cite{Haah17}.

Next, we devise a scheme based on tomography for determining whether a given unknown state $\sigma$ is sufficiently close to the $D$-dimensional target pure state $|\Psi\>\<\Psi|$, that is, whether the infidelity $\epsilon_\sigma:=1-\<\Psi|\sigma|\Psi\>$ is smaller than some threshold $\epsilon$.
First, we do tomography using $N$ copies of  $\sigma$  and obtain the estimator $\rho$. Second, we calculate $\epsilon_\rho:=1-\<\Psi|\rho|\Psi\>$.
In order to ensure the condition $\epsilon_\sigma\leq\epsilon$, it suffices to ensure the following condition
\begin{equation}
\frac{1}{2}\|\rho-\sigma\|_1+\epsilon_\rho\leq \epsilon,
\end{equation}
because $\<\Psi|\rho|\Psi\> \leq \<\Psi|\sigma|\Psi\>+\frac{1}{2}\|\rho-\sigma\|_1$ [see Eq.~(9.96) in Ref.~\cite{Wilde19} for example].
In the tomography procedure, we  require that our estimator $\rho$ satisfies the condition $\frac{1}{2}\|\rho-\sigma\|_1\leq \epsilon/2$ (with constant probability);
and finally we accept the state $\sigma$ if and only if $\epsilon_\rho\leq \epsilon/2$. Suppose optimal collective measurements are accessible; then this scheme requires $\Theta(D^2/\epsilon^2)$ copies.
If the target state is an $n$-qudit state, then $N=\Theta(d^{2n}/\epsilon^2)$ copies are necessary and sufficient.

In the previous sections we have shown that only $O(n/\epsilon)$, $O(\sqrt{n}/\epsilon)$, and $O(1/\epsilon)$ tests are required to verify the $n$-partite phased Dicke states, $W$ state, and antisymmetric basis state, respectively, within infidelity $\epsilon$. Compared with tomography, quantum state verification
can extract the key information---the fidelity with the target state---exponentially more efficiently.
Nevertheless, it should be pointed out again that  the two
approaches rely on different assumptions and have different scopes of applications. Hence the above comparison cannot be completely fair.

\section{Summary}
Motivated by practical applications, we proposed several efficient protocols
for verifying general  phased Dicke states, including $W$ states and qudit Dicke states.
Our protocols only require adaptive local projective measurements, which are as simple as one can expect and are quite appealing in practice. To verify any $n$-qudit phased Dicke state within infidelity $\epsilon$ and significance level $\delta$, the number of tests required is only  $O(n\epsilon^{-1}\ln\delta^{-1})$, which is exponentially more efficient than previous approaches based on quantum state tomography and direct fidelity estimation. In addition, this number can be further reduced to $O(\sqrt{n}\,\epsilon^{-1}\ln\delta^{-1})$ for $W$ states.
One of our protocols for the three-qubit $W$ state is nearly optimal for both nonadversarial and adversarial scenarios, and it can also be applied to fidelity estimation.
Moreover, we constructed an optimal protocol for verifying the antisymmetric basis state; the number of tests required decreases monotonically with the number $n$ of particles. By virtue of this protocol, the GME of the  antisymmetric basis state can be certified with any given significance level using
 only one test when $n$ is sufficiently large.
In this way, our work provides powerful tools for characterizing and verifying various phased Dicke states. In the course of study,
we introduced several methods for  improving the efficiency of a given verification strategy, which are useful to studying quantum  verification in general. In addition, our work highlights the significance of graph theory and representation theory in studying quantum verification, which is of interest to many researchers beyond quantum information science.

\section*{Acknowledgments}
Z.L. thanks Jiahao Li for helpful discussion on the proof of  Proposition~\ref{pro:LimSqrt(n)h(n)}. H.Z. is grateful to Prof.~Eiichi Bannai for stimulating discussion on the spectrum of the transposition graph.
This work is supported by  the National Natural Science Foundation of China (Grant No.~11875110) and  Shanghai Municipal Science and Technology Major Project (Grant No.~2019SHZDZX01). J.S. acknowledges support by the Beijing Institute of Technology Research Fund Program for Young Scholars and the National Natural Science Foundation of China (Grant No.~11805010).

\onecolumngrid
\appendix
\section{\label{app:LemmaTwoTestProof}Proof of \lref{lem:2TestStrategy}}
\begin{proof}
Note that $|\Psi\>$ is an eigenstate of $P_1$ and $P_2$ with eigenvalue 1 by assumption.
Without loss of generality, we can assume that $P_1$ has rank $l+1$ and $P_2$ has rank $h+1$ with $h\leq l$. Then we can find two sets of orthonormal  states $\{|\phi_j\>\}_{j=1}^l$ and $\{|\varphi_k\>\}_{k=1}^h$ such that
\begin{equation}\label{eq:qk}
P_1=|\Psi\>\<\Psi|+\sum_{j=1}^l|\phi_j\>\<\phi_j|,       \qquad
P_2=|\Psi\>\<\Psi|+\sum_{k=1}^h|\varphi_k\>\<\varphi_k|, \qquad
|\<\phi_j|\varphi_k\>|^2=q_k\delta_{jk},
\end{equation}
where the overlaps $q_k$ are arranged in decreasing order, that is, $1\geq q_1\geq q_2\geq\cdots q_h\geq0$. As a consequence, we have
\begin{align}
\bar{P}_1=&\sum_{j=1}^l|\phi_j\>\<\phi_j|,       \qquad
\bar{P}_2=\sum_{k=1}^h|\varphi_k\>\<\varphi_k|,\\
q:=&\left\|\bar{P}_1\bar{P}_2 \bar{P}_1\right\|
=\Bigg\|\bigg(\sum_{j=1}^l|\phi_j\>\<\phi_j|\bigg) \bigg(\sum_{k=1}^h|\varphi_k\>\<\varphi_k|\bigg) \bigg(\sum_{j=1}^l|\phi_j\>\<\phi_j|\bigg)\Bigg\|
=\Bigg\| \sum_{k=1}^h q_k |\phi_k\>\<\phi_k| \Bigg\|
=q_1,\\
\max_{|\phi\>\in \supp(\bar{P}_1)} \<\phi|P_2|\phi\>=&\max_{\sum_{j=1}^l|c_j|^2=1}\sum_{j,k=1}^l c_j^* c_k\<\phi_j |P_2|\phi_k\>=\max_{\sum_{j=1}^l|c_j|^2=1} \sum_{j=1}^h q_j |c_j|^2=q_1 =\left\|\bar{P}_1\bar{P}_2 \bar{P}_1\right\|=q.
\end{align}

In addition, the verification operator $\Omega$ can be expressed as follows,
\begin{equation}
\Omega=pP_1+(1-p)P_2=|\Psi\>\<\Psi|+\sum_{j=1}^h \big[ p|\phi_j\>\<\phi_j|+(1-p)|\varphi_j\>\<\varphi_j| \big]+p\sum_{k=h+1}^l|\phi_k\>\<\phi_k|.
\end{equation}
So the second largest eigenvalue of $\Omega$ reads
\begin{align}
\lambda_2(\Omega)&=\bigl\|\Omega-|\Psi\>\<\Psi|\bigr\|
 =\max_{1\leq j\leq h} \bigl\|p|\phi_j\>\<\phi_j|+(1-p)|\varphi_j\>\<\varphi_j|\bigr\| =\max_{1\leq j\leq h} \frac{1}{2}\Big[1+\sqrt{(2p-1)^2+4p(1-p)q_j} \,\Big]
 \nonumber\\
 &=\frac{1}{2}\Big[1+\sqrt{(2p-1)^2+4p(1-p)q_1}\, \Big]=\frac{1}{2}\Big[1+\sqrt{(2p-1)^2+4p(1-p)q}\, \Big] \nonumber\\
&= \frac{1}{2}\Big[1+\sqrt{4(1-q)p^2-4(1-q)p+1}\Big]
\geq  \frac{1+\sqrt{q}}{2}.
\end{align}
If $q<1$, then   the lower bound  is saturated iff $p=1/2$, in which case we have $\Omega=(P_1+P_2)/2$.
Therefore, the spectral gap satisfies $\nu(\Omega)\leq(1-\sqrt{q})/2$, and the upper bound is saturated iff $p=1/2$, which confirms \lref{lem:2TestStrategy}.
\end{proof}

\section{\label{app:TheoDickeProof}Proof of \tref{thm:Dicke}}
\begin{proof}
The verification operator $\Omega_\bfk$ can be expressed as
\begin{align}\label{eq:decomOmegaD}
\Omega_\bfk
&= \binom{n}{2}^{-1}\sum_{i<j} \sum_{s=0}^g \bcaZ_{i,j}(\bfk_{ss})       \otimes \left[(\ket{s}\bra{s})^{\otimes2}\right]_{i,j}
      +\binom{n}{2}^{-1}\sum_{i<j}\sum_{s<t}\bcaZ_{i,j}(\bfk_{st}) \otimes
      \left[(T_{s,t}^+)^{\otimes 2}+(T_{s,t}^-)^{\otimes 2}\right]_{i,j}\nonumber\\
&=\frac{1}{n(n-1)} \Bigg(\sum_{s=0}^r k_s^2-n\Bigg) \mathcal{Z}(\bfk)
      +\frac{2}{n(n-1)}\sum_{i<j}\sum_{s<t}   \bcaZ_{i,j}(\bfk_{st})
      \otimes \left[\bigl(\ket{\psi_{s,t}^+}\bra{\psi_{s,t}^+}\bigr)_{i,j}+\bigl(\ket{\varphi_{s,t}^+}\bra{\varphi_{s,t}^+}\bigr)_{i,j}\right] \nonumber\\
&= \frac{1}{n(n-1)}\bigg[M_1+\sum_{s<t}M_{(s,t)}\bigg]\,.
\end{align}
Here $\ket{\psi_{s,t}^+}=\frac{1}{\sqrt{2}}(\ket{s}\ket{t}+\ket{t}\ket{s})$, $\ket{\varphi_{s,t}^+}=\frac{1}{\sqrt{2}}(\ket{s}\ket{s}+\ket{t}\ket{t})$,
$\caZ(\bfk)=\sum_{u\in B(\bfk) }\ket{u}\bra{u}$, and
\begin{align}
M_1&:=\Bigg(\sum_{s=0}^r k_s^2-n+\sum_{s<t} k_s k_t \Bigg)\caZ(\bfk)+\sum_{\substack{u,v\in B(\bfk)\\ u\sim v }}\ket{u}\bra{v}
    =\frac{1}{2}\Bigg(n^2-2n+\sum_{s=0}^r k_s^2 \Bigg)\caZ(\bfk)+\sum_{u,v\in B(\bfk) }A_{u v}\ket{u}\bra{v}\,,\label{eq:M1}\\
M_{(s,t)}
   &:= \frac{k_s(k_s+1)}{2}\sum_{u\in B(\bfk^s_{t}) }\ket{u}\bra{u}
     +\frac{k_t(k_t+1)}{2}\sum_{v\in B(\bfk^t_{s}) }\ket{v}\bra{v}
     +\sum_{\substack{u\in B(\bfk^s_{t})\\v\in B(\bfk^t_{s})\\ u\sim v}}\bigr(\ket{u}\bra{v}+\ket{v}\bra{u}\bigr),
\label{eq:M(s,t)}
\end{align}
where the notation $u\sim v$ means  $u_j\neq v_j$ for exactly two values of $j$.
The coefficient matrix $(A_{uv})$ for $u,v\in B(\bfk)$ happens to be the
adjacency matrix $A(\bfk)$ of the transposition graph $G(\bfk)$ \cite{Chase1973} explained in  Appendix~\ref{app:SpecGraph}.
Note that $M_1$ and all $M_{(s,t)}$ (with $s,t=0,1,\dots,r$ and  $s<t$) are Hermitian and have mutually orthogonal supports, so all of them are
positive semidefinite given that $\Omega_\bfk$ is positive semidefinite by construction.

According to \lref{lem:GraphSpect} in Appendix~\ref{app:SpecGraph}, the maximum eigenvalue of $A(\bfk)$ is $d=\big(n^2-\sum_{s=0}^r k_s^2\big)/2$ with multiplicity~1,
and the second largest eigenvalue of $A(\bfk)$ is equal to $d-n$.
Therefore, the two largest eigenvalues of $M_1$ read
\begin{equation}\label{eq:M1lambda}
\lambda_1(M_1)=n(n-1), \qquad \lambda_2(M_1)=n(n-1)-n=n(n-2).
\end{equation}
In addition, direct calculations show that $M_{(s,t)}$ has an eigenstate
\begin{align}
\ket{\Theta_{s,t}}&=\frac{1}{\sqrt{k_s(k_s+1)+k_t(k_t+1)}}\Big[ \sqrt{k_s(k_s+1)}\,  \big|D(\bfk^s_{t})\big\>
 +\sqrt{k_t(k_t+1)}\,  \big|D(\bfk^t_{s})\big\>  \Big]\nonumber\\
 &=\sqrt{\frac{\prod_{j=0}^r k_j!}{ k_s k_t[k_s(k_s+1)+k_t(k_t+1)](n!)}}\,\Biggl[k_s(k_s+1)\sum_{u\in B(\bfk^s_{t})} |u\>+ k_t(k_t+1)\sum_{u\in B(\bfk^t_{s})} |u\>\Biggr],
\label{eq:defphiD}
\end{align}
with eigenvalue
\begin{equation}\label{eq:M(s,t)lambda}
\lambda_1\big(M_{(s,t)}\big)=\frac{k_s(k_s+1)}{2}+\frac{k_t(k_t+1)}{2}\,.
\end{equation}
According to the Perron-Frobenius theorem (see Chapter~8 in Ref.~\cite{Meyer2000} for example), this is the largest  eigenvalue of $M_{(s,t)}$, given that $M_{(s,t)}$ is irreducible in the subspace spanned by $\ket{u}$ with $u\in B(\bfk^s_{t})\cup B(\bfk^t_{s})$, that is, the graph corresponding to the third term of
$M_{(s,t)}$ in \eref{eq:M(s,t)} is connected.
In conjunction with Eqs.~\eqref{eq:decomOmegaD} and \eqref{eq:M1lambda}, we can deduce
the second largest eigenvalue and  spectral gap of $\Omega_\bfk$, with the result
\begin{align}
\lambda_2(\Omega_\bfk)&=\max\left\{\frac{\lambda_2(M_1)}{n(n-1)},\,\max_{s<t}\frac{ \lambda_1\big(M_{(s,t)}\big) }{n(n-1)} \right\}
=\max\left\{\frac{n-2}{n-1}, \frac{ k_0(k_0+1)+k_1(k_1+1) }{2n(n-1)}   \right\},\\
\nu(\Omega_\bfk)&=1-\lambda_2(\Omega_\bfk)=\min\left\{\frac{1}{n-1},1-\frac{ k_0(k_0+1)+k_1(k_1+1) }{2n(n-1)} \right\}.
\label{eq:gapD}
\end{align}
When $n\ge 4$, the above equations reduce to
\begin{equation}
  \lambda_2(\Omega_\bfk)=\frac{n-2}{n-1}\,,\qquad\qquad \nu(\Omega_\bfk)=\frac{1}{n-1}\,,
\end{equation}
which confirms \eref{eq:nuOmegaD}. When $n=3$, \eref{eq:nuOmegaD} can be verified directly by virtue of \eref{eq:gapD}.

Equation \eqref{eq:NumberTestDicke} follows from \eqsref{eq:NumberTest}{eq:nuOmegaD}. This observation completes the proof of \tref{thm:Dicke}.
\end{proof}

\section{\label{app:ComputeOmegaD^G}Determination of  $\Omega_\bfk^G$ and proof of \eref{eq:nuOmegaD2}}
Denote by $H$ the group of all unitary transformations of the form $U^{\otimes n}$, where  $U$ is diagonal in the computational basis. According to \eqsref{eq:OmegaSym}{eq:decomOmegaD}, we have
\begin{align}\label{eq:decomOmegaD^G}
\Omega_\bfk^G=\Omega_\bfk^H=\frac{M_1^H+\sum_{s<t}M_{(s,t)}^H}{n(n-1)},
\end{align}
where
\begin{equation}\label{eq:M^H}
M_1^H=M_1, \quad   M_{(s,t)}^H =
\frac{k_s(k_s+1)}{2}\sum_{u\in B(\bfk^s_{t}) }\ket{u}\bra{u}
+\frac{k_t(k_t+1)}{2}\sum_{v\in B(\bfk^t_{s}) }\ket{v}\bra{v}.
\end{equation}
Equation \eqref{eq:M^H} follows from \eqsref{eq:M1}{eq:M(s,t)} as well as the fact that $(|u\>\<v|)^H=|u\>\<v|$ if $u$ and $v$ can be turned into each other by a permutation, while $(|u\>\<v|)^H=0$ otherwise.

Note that $M_1$ and all $M^H_{(s,t)}$ (with $s,t=0,1,\dots,r$ and $s<t$) are positive semidefinite and have mutually orthogonal supports.
The largest eigenvalue of $M^H_{(s,t)}$ reads $\lambda_1\big(M_{(s,t)}^H\big)=k_s(k_s+1)/2$.
In conjunction with \eqssref{eq:decomOmegaD^G}{eq:M^H}{eq:M1lambda}, we can deduce
the second largest eigenvalue and the spectral gap of $\Omega_\bfk^G$, with the result
\begin{equation}
\lambda_2\big(\Omega^G_\bfk\big)
=\max\left\{\frac{\lambda_2(M_1^H)}{n(n-1)},\,\max_{s<t}\frac{ \lambda_1\big(M_{(s,t)}^H\big) }{n(n-1)} \right\}=
\max\left\{\frac{n-2}{n-1}, \frac{ k_0(k_0+1) }{2n(n-1)}   \right\}= \frac{n-2}{n-1}, \qquad\quad
\nu\big(\Omega^G_\bfk\big)
=\frac{1}{n-1},
\end{equation}
which confirms \eref{eq:nuOmegaD2}.

\section{\label{app:derive q}Proof of \eref{eq:q}}
\begin{proof}
Note that each ket $|\zeta\>$ in the support of the test projector	$P_1$ in \eref{eq:P1}  has the following form
\begin{equation}\label{eq:zetaState}
|\zeta\>=a_1|10\dots00\>+a_2|01\dots00\>+\cdots+a_n|00\dots01\>,
\end{equation}
where $a_1,a_2,\dots,a_n$ are complex numbers that satisfy the normalization condition $\sum_{j=1}^n |a_j|^2=1$.
In conjunction with  \lref{lem:2TestStrategy} and Eqs.~\eqref{eq:alphax}-\eqref{eq:P2}, we can deduce that
\begin{align}\label{eq:q expansion}
q=&\|\bar{P}_1\bar{P}_2 \bar{P}_1\|=\max_{\substack{\<W_n|\zeta\>=0\\ \<\zeta|P_1|\zeta\>=1}}\<\zeta|P_2|\zeta\>
     =\frac{1}{2^{n-1}}  \max_{\substack{\sum_{j} a_j=0\\\sum_{j} |a_j|^2=1}} \sum_{x\in\{0,1\}^{n-1}}
                 \frac{\left|a_n+ (n-1-2|x|)\sum_{j=1}^{n-1}(-1)^{x_j}a_j \right|^2}{1+\left(n-1-2|x|\right)^2}
                     \nonumber\\
  =&\frac{1}{2^{n-1}}  \max_{\substack{\sum_{j} a_j=0\\\sum_{j} |a_j|^2=1}} \sum_{k=0}^{n-1} \frac{1}{1+\left(n-1-2k\right)^2}
                         \sum_{\substack{x\in\{0,1\}^{n-1}\\|x|=k}}
                         \left|a_n+ (n-1-2k)\sum_{j=1}^{n-1}(-1)^{x_j}a_j \right|^2,
\end{align}
where $|x|$ denotes the Hamming weight of $x$.
When  $x\in\{0,1\}^{n-1}$ and $a_1,a_2,\dots,a_n$ satisfy the conditions $\sum_{j} a_j=0$ and  $\sum_{j}|a_j|^2=1$,  we can derive the following equalities,
\begin{align}
&\sum_{|x|=k}
                         \left|a_n+ (n-1-2k)\sum_{j=1}^{n-1}(-1)^{x_j}a_j \right|^2
                                     \nonumber\\
&=\sum_{|x|=k} |a_n|^2
                        +2\rmRe\!\left[(n-1-2k) a_n^* \sum_{|x|=k}\sum_{j=1}^{n-1}(-1)^{x_j}a_j\right]
                        +(n-1-2k)^2 \sum_{|x|=k} \left|\sum_{j=1}^{n-1}(-1)^{x_j}a_j \right|^2
                                     \nonumber\\
&=\frac{4k(n-1-k)(n-1-2k)^2}{(n-1)(n-2)}\binom{n-1}{k}
                        +\binom{n-1}{k}\left\{1+(n-1-2k)^2\left[1-\frac{2}{n-1}-\frac{8k(n-1-k)}{(n-1)(n-2)}\right]\right\}|a_n|^2\label{eq:|.| expansion}
\end{align}
for $k=0,1,\dots,n-1$, where $a_j^*$ denotes the complex conjugate of $a_j$.
The last equality in \eref{eq:|.| expansion} follows from the two equations below,
\begin{align}
\sum_{|x|=k}\sum_{j=1}^{n-1}(-1)^{x_j}a_j
&=\frac{n-1-2k}{n-1}\binom{n-1}{k}\sum_{j=1}^{n-1}a_j=-\frac{n-1-2k}{n-1}\binom{n-1}{k}a_n, \\
\sum_{|x|=k} \Biggl|\sum_{j=1}^{n-1}(-1)^{x_j}a_j \Biggr|^2
&= \binom{n-1}{k} \left[ \sum_{j=1}^{n-1}|a_j|^2
+\frac{(n-1)(n-2)-4k(n-1-k)}{(n-1)(n-2)} \sum_{\substack{ i,j=1,\dots,n-1 \\i\ne j}} a_i a_j^*\right]  \nonumber\\
&=\binom{n-1}{k}\left[ \frac{4k(n-1-k)}{(n-1)(n-2)}+\frac{(n-1)(n-2)-8k(n-1-k)}{(n-1)(n-2)}|a_n|^2\right]. \label{eq:AltSumSq}
\end{align}
In deriving the second equality in \eref{eq:AltSumSq}, we have employed  the following facts
\begin{equation}
\sum_{j=1}^{n-1}|a_j|^2=1-|a_n|^2,\quad\sum_{\substack{ i,j=1,\dots,n-1 \\i\ne j}} a_i a_j^*=\left|\sum_{ i=1}^{n-1} a_i\right|^2 -\sum_{j=1}^{n-1}|a_j|^2=|-a_n|^2-(1-|a_n|^2)=2|a_n|^2-1.
\end{equation}

Now by plugging \eref{eq:|.| expansion} into \eref{eq:q expansion} we obtain
\begin{align}\label{eq:qan}
q= c_1(n)+\max_{\substack{\sum_{j} a_j=0\\\sum_{j} |a_j|^2=1}} c_2(n)|a_n|^2,
\end{align}
where the coefficients $c_1(n)$ and $c_2(n)$ read
\begin{align}
c_1(n)
:=&\frac{1}{2^{n-1}}\sum_{k=0}^{n-1}\frac{\binom{n-1}{k}4k(n-1-k)(n-1-2k)^2}{(n-1)(n-2)[1+(n-1-2k)^2]}=\frac{1}{2^{n-3}}\sum_{k=1}^{n-2}\frac{\binom{n-3}{k-1}(n-1-2k)^2}{[1+(n-1-2k)^2]}\nonumber\\=&\frac{1}{2^{n-3}}\sum_{k=0}^{n-3}\frac{\binom{n-3}{k}[1+(n-3-2k)^2-1]}{[1+(n-3-2k)^2]}
=1-h(n-3),\\
c_2(n)
:=&\frac{1}{2^{n-1}}\sum_{k=0}^{n-1}\frac{\binom{n-1}{k}}{1+(n-1-2k)^2}\left\{1+(n-1-2k)^2\left[1-\frac{2}{n-1}-\frac{8k(n-1-k)}{(n-1)(n-2)}\right]\right\} \nonumber\\
=&\frac{1}{2^{n-1}}\sum_{k=0}^{n-1}\binom{n-1}{k}-\frac{2}{n-1}\frac{1}{2^{n-1}}\sum_{k=0}^{n-1}\frac{\binom{n-1}{k}(n-1-2k)^2}{1+(n-1-2k)^2}-2c_1(n)\nonumber\\
=&1-\frac{2}{n-1}[1-h(n-1)]-2[1-h(n-3)]
=\frac{2}{n-1}h(n-1)+2h(n-3)-\frac{n+1}{n-1},
\end{align}
and $h(n)$ is defined in \eref{eq:hn}.

If  $n=3$, then $c_1(n)=0$ and $c_2(n)=\frac{3}{5}>0$, so the maximum in \eref{eq:qan} is attained when
$a_1=a_2=-\frac{1}{\sqrt{6}}$  and $a_3=\sqrt{\frac{2}{3}}$, in which case  we have
\begin{equation}
q=c_1(3)+\frac{2}{3}c_2(3)=\frac{2}{5},
\end{equation}
which confirms \eref{eq:q} in the case $n=3$.

If $n=4,5$, then  $c_2(n)<0$ by direct calculation.  If $n\geq 6$, then
\begin{equation}
c_2(n)=\frac{2}{n-1}h(n-1)+2h(n-3)-\frac{n+1}{n-1}< \frac{1}{n-1}+1-\frac{n+1}{n-1}=-\frac{1}{n-1}<0,
\end{equation}
where the first inequality follows from the fact that $h(n)<1/2$ for $n\geq3$, which is easy to prove.
Therefore, $c_2(n)<0$ for $n\geq4$. In this case, the  maximum in \eref{eq:qan} is attained when
 $a_1=-a_2=\frac{1}{\sqrt{2}}$ and $a_j=0$ for $j=3,4,\dots,n$, which yields
\begin{equation}
q=c_1(n)=1-h(n-3)
\end{equation}
and confirms \eref{eq:q}.
\end{proof}

\section{\label{app:h(n)property}Proofs of Propositions~\ref{pro:sqrt{n}h(n)Monot} and \ref{pro:LimSqrt(n)h(n)}}
\begin{proof}[Proof of Proposition~\ref{pro:sqrt{n}h(n)Monot}]
To prove Proposition~\ref{pro:sqrt{n}h(n)Monot}, it suffices to prove that $\sqrt{n+2}\,h(n+2)>\sqrt{n}\,h(n)$ for each integer $n\geq0$.
When $n=0$, the inequality is obvious;
when $n\geq1$, the inequality can be proved as follows,
\begin{align}
&\frac{2^{n+2}}{\sqrt{n}}\left[\sqrt{n+2}\,h(n+2)-\sqrt{n}\,h(n)\right]
=\sqrt{\frac{n+2}{n}}\sum_{j=0}^{n+2}\frac{\binom{n+2}{j}}{1+(n+2-2j)^2}
-\sum_{j=0}^{n}\frac{4\binom{n}{j}}{1+(n-2j)^2}
\nonumber\\
&\quad> \frac{n+2}{n+1}\sum_{j=0}^{n+2}\frac{\binom{n+2}{j}}{1+(n+2-2j)^2}
-\sum_{j=0}^{n}\frac{4\binom{n}{j}}{1+(n-2j)^2}\nonumber= \frac{n+2}{n+1}\sum_{k=-1}^{n+1}\frac{\binom{n+2}{k+1}}{1+(n-2k)^2}
-\sum_{j=0}^{n}\frac{4\binom{n}{j}}{1+(n-2j)^2}
\nonumber\\
&\quad > \sum_{j=0}^{n}\frac{1}{1+(n-2j)^2}\left[\frac{n+2}{n+1}\binom{n+2}{j+1}-4\binom{n}{j}\right]\geq0.
\end{align}
Here the first inequality holds because $\sqrt{\frac{n+2}{n}}>\frac{n+2}{n+1}$, and
the last inequality holds because
\begin{equation}
\frac{n+2}{n+1}\binom{n+2}{j+1}-4\binom{n}{j}
=\left[\frac{(n+2)^2}{(j+1)(n+1-j)}-4\right]\binom{n}{j}
\geq0,\qquad j=0,1,\dots,n.
\end{equation}
Therefore, $\sqrt{n}\,h(n)$ is strictly monotonically increasing in $n$ for odd $n$ and even $n$, respectively.
\end{proof}

\begin{proof}[Proof of Proposition~\ref{pro:LimSqrt(n)h(n)}]First, \eref{eq:LimSqrt(n)h(n)Even} in Proposition~\ref{pro:LimSqrt(n)h(n)} can be derived as follows,
	\begin{align}
	& \lim_{n \to +\infty} \sqrt{2n}\,h(2n)
	=\lim_{n \to +\infty} \frac{\sqrt{2n}}{2^{2n}}\sum_{j=0}^{2n}\frac{\binom{2n}{j}}{1+(2n-2j)^2} = \left[\lim_{n \to +\infty} \frac{\sqrt{2n}}{2^{2n}}\binom{2n}{n}\right]
	\left[\lim_{n\to+\infty} \sum_{j=0}^{2n}\frac{\binom{2n}{n}^{-1}\binom{2n}{j}}{1+(2n-2j)^2}
	\right] \nonumber\\
	&=\sqrt{\frac{2}{\pi}}
	\left[\lim_{n\to+\infty} \sum_{j=0}^{2n}\frac{1}{1+(2n-2j)^2}
	\right]= \sqrt{\frac{2}{\pi}}
	\left[1+ \lim_{n \to +\infty} \sum_{k=1}^{n}\frac{2}{1+(2k)^2}\right] \nonumber\\
	&= \sqrt{\frac{2}{\pi}} \left[1+ \frac{\pi}{2}\coth\left(\frac{\pi}{2}\right)-1 \right]
	= \sqrt{\frac{\pi}{2}}  \coth\Bigl(\frac{\pi}{2}\Bigr)
	\approx 1.37,  \label{eq:limEvenProof0}
	\end{align}
	where the third equality follows from \eqsref{eq:limEvenProof1}{eq:limEvenProof2} below, and the	
	fifth equality is a corollary of \eref{eq:limEvenProof3} below,
	\begin{align}
	&\lim_{n \to +\infty} \frac{\sqrt{2n}}{2^{2n}}\binom{2n}{n}
	= \lim_{n \to +\infty} \frac{\sqrt{2n}(2n)!}{2^{2n}(n!)^2}
	= \sqrt{\frac{2}{\pi}}\, , \label{eq:limEvenProof1}  \\
	&\lim_{n\to+\infty} \sum_{j=0}^{2n}\frac{\binom{2n}{n}^{-1}\binom{2n}{j}}{1+(2n-2j)^2}=\lim_{n\to+\infty} \sum_{j=0}^{2n}\frac{1}{1+(2n-2j)^2}, \label{eq:limEvenProof2}\\
	&\lim_{n \to +\infty} \sum_{k=1}^{n}\frac{2}{1+(2k)^2}
	=\frac{\rmi}{2} \lim_{n \to +\infty} \sum_{k=1}^{n} \Big(\frac{1}{\rmi/2-k}+\frac{1}{\rmi/2+k}\Big)
	=\frac{\rmi}{2}\Big[\pi\cot\Bigl(\frac{\rmi\pi}{2}\Bigr)+2\rmi\Big]
	=\frac{\pi}{2}\coth\Bigl(\frac{\pi}{2}\Bigr)-1. \label{eq:limEvenProof3}
	\end{align}
	The second equality in \eref{eq:limEvenProof1} follows from the Wallis formula [see Eq.~(1) in Ref.~\cite{Piros03} for example] or the Stirling formula;
 the second equality in  \eref{eq:limEvenProof3} follows from  Theorem~6.12 in Ref.~\cite{Ullri08}.

To prove \eref{eq:limEvenProof2},
note that the left-hand side in \eref{eq:limEvenProof2} cannot be larger than the right-hand side thanks to the inequality $\binom{2n}{j}\leq \binom{2n}{n}$. To complete the proof, it suffices to prove the opposite inequality, which can be derived as follows.
\begin{align}
&\lim_{n\to+\infty} \sum_{j=0}^{2n}\frac{1}{1+(2n-2j)^2}-
\lim_{n\to+\infty} \sum_{j=0}^{2n}\frac{\binom{2n}{n}^{-1}\binom{2n}{j}}{1+(2n-2j)^2}
=2\lim_{n\to+\infty} \sum_{k=1}^{n}\frac{1-\binom{2n}{n}^{-1}\binom{2n}{n+k}}{1+4k^2}\nonumber\\
&=
2\lim_{n\to+\infty} \sum_{k=1}^{\lceil n^{2/3}\rceil}\frac{1-\binom{2n}{n}^{-1}\binom{2n}{n+k}}{1+4k^2}+
2\lim_{n\to+\infty} \sum_{\lceil n^{2/3}\rceil+1}^n\frac{1-\binom{2n}{n}^{-1}\binom{2n}{n+k}}{1+4k^2}
\nonumber\\
&\leq 2\lim_{n\to+\infty}
  \sum_{k=1}^{\lceil n^{2/3}\rceil}\frac{k^2}{n(1+4k^2)}+2\lim_{n\to+\infty} \sum_{\lceil n^{2/3}\rceil}^n\frac{1}{1+4k^2}\leq
 2\lim_{n\to+\infty}
  \frac{\lceil n^{2/3}\rceil}{4n}+2\lim_{n\to+\infty} \frac{n}{4n^{4/3}}=0,
\end{align}
where the first inequality is a consequence of the following equation,
\begin{equation}
 \binom{2n}{n}^{-1}\binom{2n}{n+k}=\frac{(n!)^2}{(n+k)!(n-k)!}=\frac{n(n-1)\cdots (n-k+1)}{(n+k)(n+k-1)\cdots (n+1)}\geq \left(\frac{n-k}{n}\right)^k\geq 1-\frac{k^2}{n},\quad k\in \{1,2,\dots,n\}.
\end{equation}

Next, \eref{eq:LimSqrt(n)h(n)Odd} in Proposition~\ref{pro:LimSqrt(n)h(n)} can be derived as follows,
	\begin{align}
& \lim_{n \to +\infty} \sqrt{2n+1}\,h(2n+1)
=\lim_{n \to +\infty} \frac{\sqrt{2n+1}}{2^{2n+1}}\sum_{j=0}^{2n+1}\frac{\binom{2n+1}{j}}{1+(2n+1-2j)^2} \nonumber\\
&= \left[\lim_{n \to +\infty} \frac{\sqrt{2n+1}}{2^{2n+1}}\binom{2n+1}{n}\right]
\left[\lim_{n\to+\infty} \sum_{j=0}^{2n+1}\frac{\binom{2n+1}{n}^{-1}\binom{2n+1}{j}}{1+(2n+1-2j)^2}
\right]= \sqrt{\frac{2}{\pi}}
\left[\lim_{n\to+\infty} \sum_{j=0}^{2n+1}\frac{1}{1+(2n+1-2j)^2}
\right]\nonumber\\
&= \sqrt{\frac{2}{\pi}}
\left[ \lim_{n \to +\infty} \sum_{k=0}^{n}\frac{2}{1+(2k+1)^2}\right]= \sqrt{\frac{2}{\pi}}\times \frac{\pi}{2}\tanh\left(\frac{\pi}{2}\right)
= \sqrt{\frac{\pi}{2}}  \tanh\Bigl(\frac{\pi}{2}\Bigr)
\approx 1.15,  \label{eq:limOddProof0}
\end{align}	
where the third equality follows from
\eqsref{eq:limOddProof1}{eq:limOddProof2} below, and
the fifth equality is a corollary of \eref{eq:limOddProof3} below.
	\begin{align}
	&\lim_{n \to +\infty} \frac{\sqrt{2n+1}}{2^{2n+1}}\binom{2n+1}{n}
	= \bigg[\lim_{n \to +\infty} \frac{\sqrt{2n}(2n)!}{2^{2n}(n!)^2}\bigg]    \bigg[\lim_{n \to +\infty} \sqrt{\frac{2n+1}{2n}} \frac{2n+1}{2(n+1)}  \bigg]
	= \sqrt{\frac{2}{\pi}},                                             \label{eq:limOddProof1}\\
&\lim_{n\to+\infty} \sum_{j=0}^{2n+1}\frac{\binom{2n+1}{n}^{-1}\binom{2n+1}{j}}{1+(2n+1-2j)^2}
= \lim_{n\to+\infty} \sum_{j=0}^{2n+1}\frac{1}{1+(2n+1-2j)^2},\label{eq:limOddProof2}\\			
	&\lim_{n \to +\infty} \sum_{k=0}^{n}\frac{2}{1+(2k+1)^2}
	=\bigg(\lim_{n \to +\infty} \sum_{k=1}^{2n+1}\frac{2}{1+k^2}\bigg)  -  \bigg[\lim_{n \to +\infty}  \sum_{k=1}^{n}\frac{2}{1+(2k)^2}\bigg] \nonumber       \\
	&\quad =\rmi \lim_{n \to +\infty} \sum_{k=1}^{n} \Big(\frac{1}{\rmi-k}+\frac{1}{\rmi+k}\Big) - \Big[\frac{\pi}{2}\coth\Bigl(\frac{\pi}{2}\Bigr)-1\Big]
	=\rmi\big[\pi\cot(\rmi\pi)+\rmi\big]-\frac{\pi}{2}\coth\Bigl(\frac{\pi}{2}\Bigr)+1
	=\frac{\pi}{2}\tanh\Bigl(\frac{\pi}{2}\Bigr). \label{eq:limOddProof3}
	\end{align}
	The second equality in \eref{eq:limOddProof1} follows from the Wallis formula [cf.~\eref{eq:limEvenProof1}];
	the second and third  equalities in \eref{eq:limOddProof3} follow from \eref{eq:limEvenProof3} above and Theorem~6.12 in Ref.~\cite{Ullri08}, respectively;
	\eref{eq:limOddProof2} can be proved in a similar way to \eref{eq:limEvenProof2}, given the following equation,
\begin{equation}
	\binom{2n+1}{n}^{-1}\binom{2n+1}{n+1+k}=\frac{n!(n+1)!}{(n+1+k)!(n-k)!}\geq \left(\frac{n-k}{n+1}\right)^k\geq 1-\frac{k^2+k}{n+1},\quad  k\in \{0,1,\dots,n\}.
	\end{equation}
\end{proof}

\section{\label{app:ProofEq:bound-nuW}Bounds for $\nu(\Omega_{W_n})$ and proof of \eref{eq:bound-nuW}}
To derive lower bounds for $\nu(\Omega_{W_n})$, we shall consider two cases depending on the parity of the qubit number $n$.
\begin{enumerate}
\item[1.] $n$ is an odd integer.

Direct calculation based on Eqs.~\eqref{eq:q}-\eqref{eq:nuW} shows that $\sqrt{n}\nu(\Omega_{W_n})>3/10$ for $3\leq n\leq33$. When $n\geq 35$, we have
\begin{equation}
\sqrt{n}\,\nu(\Omega_{W_n})
>\frac{1}{4}\sqrt{n-3}\,h(n-3)
\geq\frac{1}{4}\sqrt{32}\,h(32)>\frac{3}{10},
\end{equation}
where the first inequality follows from \eref{eq:nuW}, and
the second inequality follows from Proposition~\ref{pro:sqrt{n}h(n)Monot}. Therefore, $\sqrt{n}\nu(\Omega_{W_n})>3/10$ when $n$ is odd and $n\geq 3$, which implies the lower bound in \eref{eq:bound-nuW}.

\item[2.] $n$ is an even integer.

Direct calculation based on \eqsref{eq:hn}{eq:nuW} shows that  $\sqrt{n}\nu(\Omega_{W_n})>1/4$ for $4\leq n\leq42$. When $n\geq 44$, we have
\begin{equation}
\sqrt{n}\,\nu(\Omega_{W_n})
>\frac{1}{4}\sqrt{n-3}\,h(n-3)
\geq\frac{1}{4}\sqrt{41}\,h(41)>\frac{1}{4},
\end{equation}
where the first inequality follows from \eref{eq:nuW}, and
 the second inequality follows from Proposition~\ref{pro:sqrt{n}h(n)Monot}. Therefore, $\sqrt{n}\nu(\Omega_{W_n})>1/4$ when $n$ is even and $n\geq 4$,  which implies the lower bound in \eref{eq:bound-nuW} again.
\end{enumerate}
In conclusion, the lower bound in  \eref{eq:bound-nuW} holds for any integer $n$ that satisfies $n\geq 3$.

To derive upper bounds for $\nu(\Omega_{W_n})$, we also consider two cases depending on the parity of the qubit number $n$.
\begin{enumerate}
\item[1.] $n$ is an odd integer.

Direct calculation based on Eqs.~\eqref{eq:q}-\eqref{eq:nuW} shows that $\sqrt{n}\nu(\Omega_{W_n})<3/8$ for $3\leq n\leq45$ with $n\ne5$ and $\sqrt{n}\nu(\Omega_{W_n})<0.411<1/2$ when $n=5$. When $n\geq 47$, we have
\begin{equation}
\sqrt{n}\,\nu(\Omega_{W_n})
=\sqrt{n-3}\,h(n-3) \sqrt{\frac{n}{n-3}}  \frac{1-\sqrt{1-h(n-3)}}{2h(n-3)}
< \sqrt{\frac{\pi}{2}}  \coth\Bigl(\frac{\pi}{2}\Bigr)  \times  1.034  \times   0.265  < \frac{3}{8},
\end{equation}
where the first inequality follows from \eref{eq:hnbound2} and the following equations,
\begin{gather}
\sqrt{\frac{n}{n-3}}\leq \sqrt{\frac{47}{47-3}}<1.034,    \\
h(n-3)\leq \frac{1}{\sqrt{n-3}} \times \sqrt{\frac{\pi}{2}}  \coth\Bigl(\frac{\pi}{2}\Bigr)
\leq \sqrt{\frac{\pi}{2(47-3)}}  \coth\Bigl(\frac{\pi}{2}\Bigr) < 0.207,          \label{eq:hUpperBoundOdd} \\
\frac{1-\sqrt{1-h(n-3)}}{2h(n-3)} < \frac{1-\sqrt{1-0.207}}{2\times0.207}<0.265.  \label{eq:(1-sqrt)UpperBoundOdd}
\end{gather}
The first inequality in \eref{eq:hUpperBoundOdd} follows from \eref{eq:hnbound2};
the first inequality in \eref{eq:(1-sqrt)UpperBoundOdd} follows from \eref{eq:hUpperBoundOdd} and the fact that the real-valued function $(1-\sqrt{1-x})/(2x)$ is monotonically increasing in $x$ when $0<x\leq1$.
Therefore, $\sqrt{n}\nu(\Omega_{W_n})<3/8$ when $n$ is odd and $n\geq3, n\ne5$, which implies the upper bound in \eref{eq:bound-nuW}.

\item[2.] $n$ is an even integer.

Direct calculation based on \eqsref{eq:hn}{eq:nuW} shows that  $\sqrt{n}\nu(\Omega_{W_n})<0.31$ for $4\leq n\leq52$. When $n\geq 54$, we have
\begin{equation}
\sqrt{n}\,\nu(\Omega_{W_n})
=\sqrt{n-3}\,h(n-3) \sqrt{\frac{n}{n-3}} \frac{1-\sqrt{1-h(n-3)}}{2h(n-3)}
< \sqrt{\frac{\pi}{2}}  \tanh\Bigl(\frac{\pi}{2}\Bigr)  \times  1.03  \times   0.261  < 0.31,
\end{equation}
where the first inequality follows from \eref{eq:hnbound1} and the following equations,
\begin{gather}
\sqrt{\frac{n}{n-3}}\leq \sqrt{\frac{54}{54-3}}<1.03,  \\
h(n-3)\leq \sqrt{\frac{\pi}{2(n-3)}}  \tanh\Bigl(\frac{\pi}{2}\Bigr)
\leq \sqrt{\frac{\pi}{2(54-3)}}  \tanh\Bigl(\frac{\pi}{2}\Bigr) < 0.161,          \label{eq:hUpperBoundEven}  \\
\frac{1-\sqrt{1-h(n-3)}}{2h(n-3)} < \frac{1-\sqrt{1-0.161}}{2\times0.161}<0.261.  \label{eq:(1-sqrt)UpperBoundEven}
\end{gather}
The first inequality in \eref{eq:hUpperBoundEven} follows from \eref{eq:hnbound1};
the first inequality in \eref{eq:(1-sqrt)UpperBoundEven} follows from \eref{eq:hUpperBoundEven} and the fact that the real-valued function $(1-\sqrt{1-x})/(2x)$ is monotonically increasing in $x$ when $0<x\leq1$.
Therefore, $\sqrt{n}\nu(\Omega_{W_n})<0.31$ when $n$ is even and $n\geq4$,  which implies  the upper bound in \eref{eq:bound-nuW} again.
\end{enumerate}
In conclusion, \eref{eq:bound-nuW} holds for any integer $n$ that satisfies $n\geq 3$.

\section{\label{app:Proof Lim sqrt(n)nu}Proofs of \eqsref{eq:Lim sqrt(2n+1)nu}{eq:Lim sqrt(2n)nu}}
\begin{proof}
Equation \eqref{eq:Lim sqrt(2n+1)nu} can be proved as follows,
\begin{align}
&\lim_{n \to +\infty} \sqrt{2n+1}\nu(\Omega_{W_{2n+1}})
=\lim_{n \to +\infty} \sqrt{2n+1}\,\frac{1-\sqrt{1-h(2n-2)}}{2}  \nonumber\\
&=\left[\lim_{n \to +\infty} \sqrt{2n-2}\,h(2n-2)     \right]
\left(\lim_{n \to +\infty} \sqrt{\frac{2n+1}{2n-2}}  \right)
\left[\lim_{n \to +\infty} \frac{1-\sqrt{1-h(2n-2)}}{2h(2n-2)}\right]
= \frac{1}{4} \sqrt{\frac{\pi}{2}}  \coth\Bigl(\frac{\pi}{2}\Bigr)
\approx 0.342.
\end{align}
Here the first equality follows from \eref{eq:nuW}; the third one follows from \eref{eq:LimSqrt(n)h(n)Even} and the fact that $\lim_{n \to +\infty}h(n)=0$.

Equation \eqref{eq:Lim sqrt(2n)nu} can be proved as follows,
\begin{align}
&\lim_{n \to +\infty} \sqrt{2n}\nu(\Omega_{W_{2n}})
= \lim_{n \to +\infty} \sqrt{2n}\,\frac{1-\sqrt{1-h(2n-3)}}{2}  \nonumber\\
&=\left[\lim_{n \to +\infty} \sqrt{2n-3}\,h(2n-3)     \right]
\left(\lim_{n \to +\infty} \sqrt{\frac{2n}{2n-3}}  \right)
\left[\lim_{n \to +\infty} \frac{1-\sqrt{1-h(2n-3)}}{2h(2n-3)}\right]
= \frac{1}{4} \sqrt{\frac{\pi}{2}} \tanh\Bigl(\frac{\pi}{2}\Bigr)
\approx 0.287 .
\end{align}
The first equality follows from \eref{eq:nuW}; the third one follows from \eref{eq:LimSqrt(n)h(n)Odd} and the fact that $\lim_{n \to +\infty}h(n)=0$.	
\end{proof}

\section{\label{app:Wsymmetrization}Proofs of \eqssref{eq:TraceP1P2}{eq:lambda(Omega^G)}{eq:nu(Omega^G)Bound}}
\begin{proof}[Proof of \eref{eq:TraceP1P2}]
The equality $\tr(P_1 P_2^G)=\tr(P_1 P_2)$ in \eref{eq:TraceP1P2}	follows from the fact that $P_1$ is invariant under the action of $G$, that is, $P_1^G=P_1$.
The second equality in \eref{eq:TraceP1P2}
can be derived from   Eqs.~(\ref{eq:P1}) and~(\ref{eq:P2}) as follows,
\begin{align}
\tr(P_1 P_2)&=\<00\dots01|P_2|00\dots01\>+  \sum_{u\in B_{n-1}^1}(\<u|\otimes\<0|)P_2(|u\>\otimes|0\>)  \nonumber\\
&=h(n-1)+(n-1)[1-h(n-1)]
 =n-1-(n-2)h(n-1),
\end{align}
where  $B_{n-1}^1$ is the set of strings in $\{0,1\}^{n-1}$ with Hamming weight 1.  Here the second equality follows from the following equations,
\begin{align}
&\<00\dots01|P_2|00\dots01\>
=\sum_{x\in\{0,1\}^{n-1}}|\<00\dots0|\alpha_{x}\>|^2 \cdot |\<1|\beta_{x}\>|^2     =\frac{1}{2^{n-1}}\sum_{x\in\{0,1\}^{n-1}}\frac{1}{1+(n-1-2|x|)^2}
  \nonumber\\
&\quad=\frac{1}{2^{n-1}}\sum_{j=0}^{n-1}\frac{\binom{n-1}{j}}{1+(n-1-2j)^2}
=h(n-1), \\
&(\<u|\otimes\<0|)P_2(|u\>\otimes|0\>)
=\sum_{x\in\{0,1\}^{n-1}}|\<u|\alpha_{x}\>|^2 \cdot |\<0|\beta_{x}\>|^2              =\frac{1}{2^{n-1}}\sum_{x\in\{0,1\}^{n-1}}\left[1-\frac{1}{1+(n-1-2|x|)^2}\right]
 \nonumber\\
&\quad=1 - \frac{1}{2^{n-1}}\sum_{j=0}^{n-1}\frac{\binom{n-1}{j}}{1+(n-1-2j)^2}
=1-h(n-1),\qquad u\in B_{n-1}^1.
\end{align}

\end{proof}
\begin{proof}[Proof of \eref{eq:lambda(Omega^G)}]
Note that $\Omega_{W_{n}}^G$ can be expressed as follows,
\begin{equation}
\Omega_{W_{n}}^G=pP_1+(1-p)P_2^G=pP_1+(1-p)P_1P_2^GP_1+(1-p)(\openone-P_1)P_2^G(\openone-P_1),
\end{equation}	
given that $P_1$ and $P_2^G$ commute with each other. Therefore,
\begin{align}
\lambda_2(\Omega_{W_{n}}^G)&=\max\Bigl\{ p+(1-p)\bigl\|P_1 \bar{P}_2^G P_1\bigr\|,  (1-p)\bigl\|(\openone-P_1)P_2^G(\openone-P_1)\bigr\| \Bigr\}
=1-p=\frac{n-1}{n+(n-2)h(n-1)}.
\end{align}
Here  the second equality follows from the equality   $p+(1-p)\|P_1 \bar{P}_2^G P_1\|=1-p$ [cf.~\eref{eq:OptProbW}] and the inequality $(1-p)\|(\openone-P_1)P_2^G(\openone-P_1)\|\leq 1-p$.
\end{proof}

\begin{proof}[Proof of \eref{eq:nu(Omega^G)Bound}]
The equalities in \eref{eq:nu(Omega^G)Bound} follow from \eref{eq:lambda(Omega^G)}.

When  $3\leq n\leq40$, the lower bound in \eref{eq:nu(Omega^G)Bound} can be verified directly by virtue of \eref{eq:hn}. When  $n\geq41$, the lower bound  follows from the following equation
\begin{equation}\label{eq:sqrt{n}}
\sqrt{n}+(n-2)\sqrt{n} h(n-1)+1> \sqrt{n}+n-1> n.
\end{equation}
Here the first inequality is a consequence of  the inequalities $\sqrt{n} h(n-1)> \sqrt{n-1} h(n-1)>1$, the second of which follows from  Proposition~\ref{pro:sqrt{n}h(n)Monot} and the assumption $n\geq41$, given that
$\sqrt{40}\,h(40)>\sqrt{41}\,h(41)>1$.
This observation completes the proof of \eref{eq:nu(Omega^G)Bound}.
\end{proof}

\section{\label{app:TheoDstateGenProof}Proof of \tref{thm:DstateGen}}
\begin{proof}
In analogy to $\Omega_\bfk$ (cf. Appendix~\ref{app:TheoDickeProof}),
the verification operator $\Omega_\bfk^\phi$ can be expressed as
\begin{align}\label{eq:decomOmegaD'}
\Omega_\bfk^\phi
&= \binom{n}{2}^{-1}\sum_{i<j} \sum_{s=0}^g \bcaZ_{i,j}(\bfk_{ss})       \otimes \left[(\ket{s}\bra{s})^{\otimes2}\right]_{i,j}
      +\binom{n}{2}^{-1}\sum_{i<j}\sum_{s<t}\sum_{u\in B(\bfk_{st})}|u\>\<u| \otimes
                \big(\Gamma_{i,j,u}^+\otimes \Gamma_{j,i,u}^+ + \Gamma_{i,j,u}^- \otimes \Gamma_{j,i,u}^-\big)_{i,j}\nonumber\\
&=\frac{1}{n(n-1)} \bigg(\sum_{s=0}^r k_s^2-n\bigg) \mathcal{Z}(\bfk)
      +\frac{2}{n(n-1)}\sum_{i<j}\sum_{s<t}   \bcaZ_{i,j}(\bfk_{st})
      \otimes \Big[\bigl(\ket{\varphi_{s,t}^+}\bra{\varphi_{s,t}^+}\bigr)_{i,j} + \frac{1}{2}\bigl(|st\>\<st|+|ts\>\<ts|\bigl)_{i,j}\Big]  \nonumber\\
      &\ \quad +\frac{1}{n(n-1)}\sum_{i<j}\sum_{s<t}\sum_{u\in B(\bfk_{st})}|u\>\<u| \otimes
                \big[\rme^{\rmi\phi(v(i,j,u))}|st\>\<ts|\rme^{-\rmi\phi(v(j,i,u))}
                    +\rme^{\rmi\phi(v(j,i,u))}|ts\>\<st|\rme^{-\rmi\phi(v(i,j,u))} \big]_{i,j} \nonumber\\
&= \frac{1}{n(n-1)}\bigg[M'_1+\sum_{s<t}M_{(s,t)}\bigg]\,.
\end{align}
Here $\ket{\varphi_{s,t}^+}=\frac{1}{\sqrt{2}}(\ket{s}\ket{s}+\ket{t}\ket{t})$,
$\caZ(\bfk)=\sum_{u\in B(\bfk)}\ket{u}\bra{u}$,
$\bcaZ_{i,j}(\bfk_{st})$ is defined in \eref{eq:barZijst}, $v(i,j,u)$ and $v(j,i,u)$ are defined in \eqsref{eq:viju}{eq:vjiu},
$M_{(s,t)}$ is defined in \eref{eq:M(s,t)}, and
\begin{align}
M'_1:=&\;\bigg(\sum_{s=0}^r k_s^2-n+\sum_{s<t} k_s k_t \bigg)\caZ(\bfk)
     +\sum_{\substack{u,v\in B(\bfk)\\ u\sim v }} \big(\rme^{\rmi\phi(u)}\ket{u}\big)\big(\bra{v}\rme^{-\rmi\phi(v)}\big) \nonumber\\
    =&\;\frac{1}{2}\bigg(n^2-2n+\sum_{s=0}^r k_s^2 \bigg)\caZ(\bfk)
     +\sum_{u,v\in B(\bfk) }A_{u v}\big(\rme^{\rmi\phi(u)}\ket{u}\big)\big(\bra{v}\rme^{-\rmi\phi(v)}\big)\,,
\end{align}
where the notation $u\sim v$ means  $u_j\neq v_j$ for exactly two values of $j$.
The coefficient matrix $(A_{uv})$ for $u,v\in B(\bfk)$ happens to be the
adjacency matrix $A(\bfk)$ of the transposition graph $G(\bfk)$ \cite{Chase1973} (cf. Appendix~\ref{app:SpecGraph}).

Note that $M'_1$ can be turned into $M_1$ in \eref{eq:M1} by a diagonal unitary transformation; similarly, $\Omega_\bfk^\phi$ can be turned into $\Omega_\bfk$   by a diagonal unitary transformation [cf.~Appendix~\ref{app:TheoDickeProof}]. Therefore, $\Omega_\bfk^\phi$ and  $\Omega_\bfk$ have the same spectrum and the same spectral gap. Thanks to \eqsref{eq:gapD}{eq:nuOmegaD}, we have
\begin{equation}
\nu\big(\Omega_\bfk^\phi\big)=\nu\big(\Omega_\bfk\big)=
\min\left\{\frac{1}{n-1},1-\frac{ k_0(k_0+1)+k_1(k_1+1) }{2n(n-1)} \right\}=\begin{cases}
1/2          & \bfk=(1,1,1),  \\
1/3          & \bfk=(2,1),    \\
1/(n-1) \ \  & n\ge4.
\end{cases}
\end{equation}
This result  confirms \eref{eq:nuOmegaD'} and implies \eref{eq:NOmegaD'} in view of \eref{eq:NumberTest} (cf.~Theorem~\ref{thm:Dicke}).
\end{proof}

\section{\label{app:TheoAntiStateProof}Proof of \tref{thm:Antisymmetric}}
\begin{proof}
	The verification operator $\Omega_{\AS_n}$ can be expressed as
	\begin{align}
	\Omega_{\AS_n}
	&= \binom{n}{2}^{-1}\sum_{i<j}\sum_{s<t}\bcaZ_{i,j}(\bfk_{st}) \otimes \big(T_{s,t}^+\otimes T_{s,t}^- + T_{s,t}^-\otimes T_{s,t}^+\big)_{i,j}\nonumber\\
	&= \frac{2}{n(n-1)}\sum_{i<j}\sum_{s<t}   \bcaZ_{i,j}(\bfk_{st})
	\otimes \Big[\bigl(\ket{\psi_{s,t}^-}\bra{\psi_{s,t}^-}\bigr)_{i,j}+\bigl(\ket{\varphi_{s,t}^-}\bra{\varphi_{s,t}^-}\bigr)_{i,j}\Big] \nonumber\\
	&= \frac{1}{n(n-1)}\bigg[M_1^{\AS}+\sum_{s<t}\sum_{\substack{u\in B({\bfk}^s_{t})\\v\in B({\bfk}^t_{s})\\ u\sim v}}X_{s,t}^{u,v}\bigg]\,.\label{eq:decomOmegaAS}
	\end{align}
	Here $\ket{\psi_{s,t}^-}=\frac{1}{\sqrt{2}}(\ket{s}\ket{t}-\ket{t}\ket{s})$, $\ket{\varphi_{s,t}^-}=\frac{1}{\sqrt{2}}(\ket{s}\ket{s}-\ket{t}\ket{t})$,
	\begin{align}
	M_1^{\AS}&:=\frac{n(n-1)}{2} \caZ(\bfk)-\sum_{\substack{u,v\in B(\bfk)\\ u\sim v }}\ket{u}\bra{v}
	=\frac{n(n-1)}{2} \caZ(\bfk)-\sum_{u,v\in B(\bfk) }A_{u v}\ket{u}\bra{v}\,,\label{eq:M1AS}\\
	X_{s,t}^{u,v}&:=\ket{u}\bra{u}+\ket{v}\bra{v}-\ket{u}\bra{v}-\ket{v}\bra{u},
	\end{align}
and the notation $u\sim v$ means  $u_j\neq v_j$ for exactly two values of $j$.	
In addition,  the coefficient matrix $(A_{uv})$ for $u,v\in B(\bfk)$ happens to be the
	adjacency matrix $A(\bfk)$ of the transposition graph $G(\bfk)$ \cite{Chase1973}.
	Note that $M_1^{\AS}$ and all $X_{s,t}^{u,v}$  [with $s<t$, $u\in B\big({\bfk}^s_{t}\big)$, $v\in B\big({\bfk}^t_{s}\big)$, and $ u\sim v$]  are Hermitian and have mutually orthogonal supports, so all of them are
	positive semidefinite given that $\Omega_{\AS_n}$ is positive semidefinite by construction.

	According to \lref{lem:CaylGraphSpect} in Appendix~\ref{app:SpecGraph}, the smallest eigenvalue of $A(\bfk)$ is equal to $-n(n-1)/2$ with multiplicity~1,
	and the second smallest eigenvalue of $A(\bfk)$ is $n-n(n-1)/2$.
	Therefore, the two largest eigenvalues of $M_1^{\AS}$ read
	\begin{equation}\label{eq:M1ASlambda}
	\lambda_1(M_1^{\AS})=n(n-1), \qquad \lambda_2(M_1^{\AS})=n(n-1)-n=n(n-2).
	\end{equation}
In addition, direct calculations show that the maximum eigenvalue of $X_{s,t}^{u,v}$ is 2. In conjunction with Eqs.~\eqref{eq:decomOmegaAS} and \eqref{eq:M1ASlambda}, we can deduce
	the second largest eigenvalue and  spectral gap of $\Omega_{\AS_n}$, with the result (assuming $n\geq3$)
	\begin{align}
	\lambda_2\big(\Omega_{\AS_n}\big)&=\max\left\{\frac{\lambda_2(M_1^{\AS})}{n(n-1)},\,\max_{s<t}\frac{\lambda_1(X_{s,t}^{u,v})}{n(n-1)}\right\}=\frac{n-2}{n-1},\\
	\nu\big(\Omega_{\AS_n}\big)&=1-\lambda_2\big(\Omega_{\AS_n}\big)=\frac{1}{n-1},
	\label{eq:gapAS}
	\end{align}
	which confirms \eref{eq:nuOmegaAS}.
	
	Equation \eqref{eq:NumberTestAS} follows from \eqsref{eq:NumberTest}{eq:nuOmegaAS}.
\end{proof}

\section{\label{app:ComputeOmegaASG}Proofs of Theorem~\ref{thm:OmegaASG} and \lref{lem:DimRatio}}
\begin{proof}[Proof of Theorem~\ref{thm:OmegaASG}]
	According to \eref{eq:OmegaSymProj},  we have
	\begin{equation}
	\Omega_{\AS_n}^{\tilde{G}}=\sum_{\mu\vdash n}\frac{1}{d_\mu D_\mu} \tr\bigl(\Omega_{\AS_n}^{\tilde{G}} P_\mu\bigr) P_\mu=\sum_{\mu\vdash n}\frac{1}{d_\mu D_\mu} \tr\bigl(\Omega_{\AS_n} P_\mu\bigr) P_\mu.
	\end{equation}
	By representation theory, the projector $P_\mu$ can be expressed as follows,
	\begin{equation}
	P_\mu =\frac{d_\mu}{n!}\sum_{\sigma\in \scrS_n}\chi_\mu(\sigma)  U_\sigma,
	\end{equation}	
	where $U_\sigma$ is the unitary operator  corresponding to the permutation $\sigma$, and  $\chi_\mu(\sigma)$ is the character of $\sigma$ associated with the representation labeled by $\mu$. Therefore,
	\begin{equation}
	\tr\bigl(\Omega_{\AS_n} P_\mu\bigr)=\frac{1}{n(n-1)}\bigg[\tr\bigl(P_\mu M_1^{\AS}\bigr)+\sum_{s<t}\sum_{\substack{u\in B({\bfk}^s_{t})\\v\in B({\bfk}^t_{s})\\ u\sim v}}\tr\bigl(P_\mu X_{s,t}^{u,v}\bigr)\bigg]=d_\mu^2,
	\end{equation}
	which implies \eref{eq:OmegaASG}.
	Here the first equality follows from \eref{eq:decomOmegaAS}, and the notation $u\sim v$ means  $u_j\neq v_j$ for exactly two values of $j$. The second equality follows from \eqsref{eq:PmuM}{eq:PmuX} below,
	\begin{align}
	&\tr\bigl(P_\mu M_1^{\AS}\bigr)=\frac{n(n-1)}{2}\tr[P_\mu \caZ(\bfk) ]-\sum_{u,v\in B(\bfk)} A_{u,v} \<v|P_\mu |u\>
	=\frac{n(n-1)}{2}\sum_{u\in B(\bfk)} \<u |P_\mu |u\> -\sum_{\substack{u,v\in B(\bfk)\\ u\sim v }} \<v|P_\mu |u\>\nonumber\\
	&\quad =\frac{n(n-1)}{2( n!)}d_\mu^2 |B(\bfk)|
	- \sum_{\substack{u,v\in B(\bfk)\\ u\sim v }}  \frac{d_\mu \chi_\mu(\tau)}{n!}=\frac{n(n-1)}{2}d_\mu^2  -\frac{n(n-1)}{2}d_\mu \chi_\mu(\tau), \label{eq:PmuM}\\
	&\sum_{s<t}\sum_{\substack{u\in B({\bfk}^s_{t})\\v\in B({\bfk}^t_{s})\\ u\sim v}}\tr\bigl(P_\mu X_{s,t}^{u,v}\bigr)=\sum_{s<t}\sum_{\substack{u\in B({\bfk}^s_{t})\\v\in B({\bfk}^t_{s})\\ u\sim v}}(\< u |P_\mu |u\> +\<v | P_\mu |v\>)=2\sum_{s<t}\sum_{u\in B({\bfk}^s_{t})}\< u |P_\mu |u\> \nonumber\\
	&\quad =n(n-1)|B({\bfk}^s_{t})|\frac{d_\mu}{n!}[d_\mu +\chi_\mu(\tau)]
	=\frac{n(n-1)}{2}d_\mu^2 +\frac{n(n-1)}{2}d_\mu \chi_\mu(\tau), \label{eq:PmuX}
	\end{align}
	where $\tau\in \scrS_n$ is any transposition.
	
	Alternatively, the trace $\tr(\Omega_{\AS_n} P_\mu)$ can be derived by virtue of \eqsref{eq:PijAS}{eq:OmegaAS} as follows,
	\begin{align}
	&\tr\bigl(\Omega_{\AS_n} P_\mu\bigr)
	=\binom{n}{2}^{-1} \sum_{i<j} \tr\bigl(P_\mu P_{i,j}^{\AS}\bigr)
	=\tr\bigl(P_\mu P_{1,2}^{\AS}\bigr)=\sum_{s<t}\tr\Bigl\{P_\mu \bigl[\left(T_{s,t}^+\otimes T_{s,t}^- + T_{s,t}^-\otimes T_{s,t}^+\right)\otimes \bcaZ(\bfk_{st})\bigr]\Bigr\}\nonumber\\
	&=\sum_{s<t}\tr\Bigl\{P_\mu U_{st}^{\otimes n} \bigl[\left(T_{s,t}^+\otimes T_{s,t}^- + T_{s,t}^-\otimes T_{s,t}^+\right)\otimes \bcaZ(\bfk_{st})\bigr]U_{st}^{\otimes n\dag}\Bigr\}=\sum_{s<t}\tr\Bigl\{P_\mu  \bigl[\left(|st\>\<st|+|ts\>\<ts|\right)\otimes \bcaZ(\bfk_{st})\bigr]\Bigr\}\nonumber\\
	&
	=\tr[P_\mu \caZ(\bfk)]=\sum_{u\in B(\bfk)} \<u|P_\mu |u\>
	=\sum_{u\in B(\bfk)}\frac{d_\mu}{n!}\sum_{\sigma\in \scrS_n} \chi_\mu(\sigma)\<u|U_\sigma|u\>
	=\frac{d_\mu^2}{n!}|B(\bfk)|=d_\mu^2,
	\end{align}
	where
	\begin{equation}
	U_{st}=\frac{1}{\sqrt{2}}(|s\>\<s|+|s\>\<t| +|t\>\<s|-|t\>\<t|)+\sum_{r\neq s,t}|r\>\<r|.
	\end{equation}
	
	Equation \eqref{eq:nuOmegaASG} in Theorem~\ref{thm:OmegaASG} follows from \eref{eq:OmegaASG} and \lref{lem:DimRatio}.
	Equation \eqref{eq:NumberTestASG} follows from \eqsref{eq:NumberTest}{eq:nuOmegaASG}.
\end{proof}

\begin{proof}[Proof of \lref{lem:DimRatio}]
	According to the well-known dimension formulas for $D_\mu$ and $d_\mu$ (see Refs.~\cite{Weyl1931,Proc2007} for example), we have
	\begin{equation}
	\frac{D_\mu }{d_\mu}=\frac{1}{n!}\prod_{j=1}^d \frac{(d+\mu_j-j)!}{(d-j)!}=\frac{1}{n!}\prod \frac{\Gamma(d+\mu_j-j+1)}{\Gamma(d-j+1)},
	\end{equation}
	where $d=n$ is the local dimension. Note that this formula is still applicable when $d\neq n$. As an implication, we have
	\begin{align}
	\ln \frac{D_\mu }{d_\mu}=\sum_{j=1}^d \ln \Gamma(d+\mu_j-j+1)-\sum_{j=1}^d\ln \Gamma(d-j+1)-\ln (n!).
	\end{align}
	Recall that  the function $\ln\Gamma(x)$ is convex in $x$ for $x>0$, we conclude that 	$\ln \frac{D_\mu }{d_\mu}$ is convex and thus Schur convex in $\mu$. Therefore, $\ln \frac{D_\mu }{d_\mu}\leq \ln \frac{D_{\mu'}}{d_{\mu'}}$ whenever $\mu \prec\mu'$, which confirms \lref{lem:DimRatio}.
\end{proof}

\section{\label{app:SpecGraph}The spectrum of the transposition graph}
Let $\bfk:=(k_0,k_1,\dots,k_r)$ with $k_0,\dots,k_r$ being positive integers and $n=\sum_{j=0}^{r}k_j$. Recall that  $B(\bfk)$ is the set of all sequences of $n$ symbols in which $k_s$ symbols are equal to $s$ for $s=0,1,\dots,r$.
The transposition graph $G(\bfk)$ is a regular graph whose vertices are labeled by sequences in $B(\bfk)$.
Two distinct vertices $u,v\in B(\bfk)$ are adjacent iff $u$ and $v$ can be turned into each other by a transposition \cite{Chase1973}, that is, $u_j\neq v_j$ for exactly two values of $j$.
The number of vertices in $G(\bfk)$ is equal to the cardinality of $B(\bfk)$, which reads $|B(\bfk)|=n!/\big(\prod _{j=0}^r k_j!\big)$, and the degree of $G(\bfk)$ is given by
\begin{equation}
d:=\frac{1}{2}\bigg(n^2-\sum_{s=0}^r k_s^2\bigg).
\end{equation}
Let $A(\bfk)$ be the adjacency matrix of $G(\bfk)$. The eigenvalues of $G(\bfk)$ are defined
as the eigenvalues of $A(\bfk)$.
Here we are interested in the largest and second largest  eigenvalues of $G(\bfk)$, which are crucial to the proof of  Theorems~\ref{thm:Dicke} and~\ref{thm:DstateGen}. The  lemma below  follows from Eq.~(4.2) in Ref.~\cite{Caputo2010}.
\begin{lemma}\label{lem:GraphSpect}
	The largest eigenvalue of $G(\bfk)$ is equal to its degree $d$ and has multiplicity 1;
	the second largest eigenvalue of $G(\bfk)$ is equal to $d-n$.
\end{lemma}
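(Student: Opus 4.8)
The plan is to diagonalize $A(\bfk)$ by decomposing the vertex space under the natural action of the symmetric group $\scrS_n$; the two largest eigenvalues then fall out immediately. First I would identify $\bbC^{B(\bfk)}$ with Young's permutation module $M^\mu$, where $\mu=(k_0,\dots,k_r)$, viewed as a partition of $n$ (the parts are already weakly decreasing by assumption): a sequence $u\in B(\bfk)$ is just a coloring of the $n$ positions using color $s$ exactly $k_s$ times, and $\scrS_n$ acts by permuting positions. Next I would rewrite $A(\bfk)$ via the class sum of transpositions. For a transposition $\tau$ and a vertex $u$, the permuted sequence $\tau\cdot u$ is either a neighbor of $u$ in $G(\bfk)$ (when $\tau$ swaps two positions carrying distinct colors) or equal to $u$ (when the two positions carry the same color); since the number of color-preserving transpositions, $\sum_{s=0}^r\binom{k_s}{2}$, does not depend on $u$, one gets the operator identity
\begin{equation}
A(\bfk)=C_\tau-\Bigl(\sum_{s=0}^r\tbinom{k_s}{2}\Bigr)\openone,\qquad C_\tau:=\sum_{\tau\ \mathrm{a\ transposition}}\pi(\tau),
\end{equation}
where $\pi$ is the representation of $\scrS_n$ on $M^\mu$.

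Since the transpositions form a single conjugacy class, $C_\tau$ is central in $\bbC[\scrS_n]$ and hence acts as a scalar $c_\lambda$ on each irreducible Specht module $S^\lambda$. The standard (Frobenius) evaluation gives $c_\lambda=\sum_{(i,j)\in\lambda}(j-i)=\tfrac12\bigl(\sum_i\lambda_i^2-\sum_j(\lambda_j')^2\bigr)$, the content sum of the Young diagram. By Young's rule, $S^\lambda$ occurs in $M^\mu$ exactly when $\lambda$ dominates $\mu$, with multiplicity the Kostka number $K_{\lambda\mu}\ge1$. Therefore the spectrum of $A(\bfk)$ consists of the numbers $c_\lambda-\sum_s\binom{k_s}{2}$ over partitions $\lambda\vdash n$ that dominate $\mu$, with multiplicities $K_{\lambda\mu}\dim S^\lambda$.

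It then remains to pick out the two largest content sums among the admissible $\lambda$. From $c_\lambda=\tfrac12(\sum_i\lambda_i^2-\sum_j(\lambda_j')^2)$ together with the facts that $\lambda\mapsto\sum_i\lambda_i^2$ is strictly Schur-convex and conjugation reverses the dominance order, $c_\lambda$ is strictly increasing along the dominance order. The top partition $(n)$ dominates every $\mu$, yields $c_{(n)}=\binom n2$, and hence the eigenvalue $\binom n2-\sum_s\binom{k_s}{2}=\tfrac12(n^2-\sum_s k_s^2)=d$ with multiplicity $K_{(n)\mu}\dim S^{(n)}=1$; this also follows from the Perron--Frobenius theorem once one notes $G(\bfk)$ is connected (the transpositions generate $\scrS_n$, which is transitive on $B(\bfk)$). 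Every partition of $n$ other than $(n)$ is dominated by $(n-1,1)$, and the standing assumption $k_0<n$ forces $\mu\ne(n)$, hence $\mu$ is dominated by $(n-1,1)$ and $S^{(n-1,1)}$ occurs in $M^\mu$; since $c_{(n-1,1)}=\binom{n-1}{2}-1$, the second largest eigenvalue is $\binom{n-1}{2}-1-\sum_s\binom{k_s}{2}=d-n$, which is \lref{lem:GraphSpect}.

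The two steps I expect to require care are the operator identity for $A(\bfk)$ — the key observation being that the count of color-preserving transpositions is the same at every vertex, so the diagonal correction is a scalar — and the strict monotonicity of the content sum along the dominance order; both are standard, and the resulting eigenvalue formula is exactly Eq.~(4.2) of Ref.~\cite{Caputo2010}. The same argument applied to the smallest content sums, attained at $(1^n)$ and $(2,1^{n-2})$, gives the companion statement \lref{lem:CaylGraphSpect}.
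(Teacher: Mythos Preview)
Your proof is correct and complete. The paper does not actually prove this lemma: it merely states that the result ``follows from Eq.~(4.2) in Ref.~\cite{Caputo2010}'' (the Caputo--Liggett--Richthammer proof of Aldous' spectral gap conjecture) and moves on. Your argument is the classical Diaconis--Shahshahani computation: identify $\bbC^{B(\bfk)}$ with the permutation module $M^\mu$, write $A(\bfk)$ as the class sum of transpositions minus a scalar, decompose $M^\mu$ into Specht modules via Young's rule, and read off eigenvalues from content sums. This is self-contained and in fact yields the full spectrum of $A(\bfk)$, not just the top two eigenvalues; the paper's citation, by contrast, invokes a much deeper result (the general Aldous conjecture for arbitrary graphs) to extract a special case that, for the complete-graph interchange process relevant here, was known long before. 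The two points you flag as needing care---the scalar diagonal correction and the strict monotonicity of the content sum along dominance---are indeed the only places where anything nontrivial happens, and your justifications for both are sound.
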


When $k_0=k_1=\cdots k_{n-1}=1$, the graph $G(\bfk)$ reduces to the Cayley graph of the symmetric group. In this case we can also determine the smallest and second smallest eigenvalues of $G(\bfk)$. To this end, note that the sequences in $B(\bfk)$ can be divided into two groups of equal size: one group can be constructed from the sequence $(0,1,\ldots, n-1)$ by even permutations, and the other group can be constructed by odd permutations. In addition, $G(\bfk)$ is a bipartite graph with respect to this partition; accordingly, the adjacency matrix $A(\bfk)$ has a block form,
\begin{equation}
A=\begin{pmatrix}
0 &B\\
B^T & 0
\end{pmatrix},
\end{equation}
where $B$ is a matrix of size $(n!/2)\times (n!/2)$. Therefore, the eigenvalues of $A(\bfk)$ form pairs: if $\lambda$ is an eigenvalue of $A$, then $-\lambda$ is an eigenvalue with the same multiplicity. Together with \lref{lem:GraphSpect}, this observation implies the following lemma (cf. Aldous' spectral gap
conjecture, which was proved in Ref.~\cite{Caputo2010}).
\begin{lemma}\label{lem:CaylGraphSpect}
	Suppose $\bfk=(k_0,k_1,\dots,k_{n-1})$ with $k_j=1$ for $j=0,1,\dots,n-1$.
	Then the smallest eigenvalue of $G(\bfk)$ is equal to $-d$ and has multiplicity 1, where $d=n(n-1)/2$ is the degree of $G(\bfk)$;
	the second smallest eigenvalue of $G(\bfk)$ is equal to $n-d$.
\end{lemma}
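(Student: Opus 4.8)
The plan is to reduce the statement to the information already supplied by Lemma~\ref{lem:GraphSpect} by exploiting the bipartite structure of $G(\bfk)$ in the special case $k_0=k_1=\cdots=k_{n-1}=1$. In that case the vertices of $G(\bfk)$ are naturally identified with the permutations in $\scrS_n$ --- the sequence $u\in B(\bfk)$ obtained by applying a permutation to $(0,1,\ldots,n-1)$ --- and two vertices are adjacent iff the corresponding permutations differ by a transposition. First I would split the vertex set into the $n!/2$ even permutations and the $n!/2$ odd permutations; since every transposition is itself an odd permutation, each edge of $G(\bfk)$ joins an even permutation to an odd one, so $G(\bfk)$ is bipartite with two colour classes of equal size. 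Ordering the standard basis accordingly, the adjacency matrix acquires the block anti-diagonal form
\begin{equation}
A(\bfk)=\begin{pmatrix} 0 & B \\ B^{T} & 0 \end{pmatrix},
\end{equation}
with $B$ a square matrix of size $(n!/2)\times(n!/2)$.

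The second step is the standard spectral symmetry of bipartite graphs: if $\binom{x}{y}$ is an eigenvector of $A(\bfk)$ with eigenvalue $\lambda$, then $\binom{x}{-y}$ is an eigenvector with eigenvalue $-\lambda$, and the map $\binom{x}{y}\mapsto\binom{x}{-y}$ is a linear isomorphism between the $\lambda$- and $(-\lambda)$-eigenspaces. Hence the eigenvalues of $G(\bfk)$ occur in pairs $\pm\lambda$ with equal multiplicities. Combining this with Lemma~\ref{lem:GraphSpect} --- which tells us that the largest eigenvalue of $G(\bfk)$ equals the degree $d=n(n-1)/2$ with multiplicity $1$, and the second largest equals $d-n$ --- I immediately obtain that the smallest eigenvalue equals $-d$ with multiplicity $1$ and the second smallest equals $-(d-n)=n-d$, which is precisely the assertion of the lemma.

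I do not anticipate a genuinely hard step: the argument is essentially a one-line consequence of Lemma~\ref{lem:GraphSpect} once the bipartite decomposition is in place. The only points needing a brief check are that a single transposition always changes the parity of a permutation (so the proposed $2$-colouring is valid and the graph contains no odd cycles), and that both colour classes have cardinality $n!/2$, so that $B$ is square and the $\pm\lambda$ pairing has no degenerate exceptions. It is also worth recalling that the input we rely on, the value $d-n$ for the second largest eigenvalue in Lemma~\ref{lem:GraphSpect}, is ultimately a form of Aldous' spectral gap conjecture, established in Ref.~\cite{Caputo2010}.
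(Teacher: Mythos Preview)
Your proposal is correct and follows essentially the same approach as the paper: you exploit the bipartite structure of $G(\bfk)$ arising from the even/odd permutation partition, write $A(\bfk)$ in block anti-diagonal form, use the resulting $\pm\lambda$ spectral symmetry, and combine with \lref{lem:GraphSpect} to conclude. The paper's argument is the same, though stated more tersely; your explicit eigenvector map $\binom{x}{y}\mapsto\binom{x}{-y}$ and the remark about Aldous' conjecture are exactly the details the paper leaves implicit or relegates to a parenthetical reference.
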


Let us take $\bfk=(1,1,1)$ for example. In this case we have $n=d=3$, and  $G(\bfk)$ is a bipartite graph with six vertices
labeled by the sequences $(0,1,2)$, $(1,2,0)$, $(2,0,1)$,  $(0,2,1)$, $(1,0,2)$,  $(2,1,0)$. With respect to this order,  the adjacency matrix of $G(\bfk)$ reads
\begin{equation}
A(\bfk)=\begin{pmatrix}
0&0&0&1&1&1\\
0&0&0&1&1&1\\
0&0&0&1&1&1\\
1&1&1&0&0&0\\
1&1&1&0&0&0\\
1&1&1&0&0&0\\
\end{pmatrix}.
\end{equation}
Direct calculation shows that $A(\bfk)$ has three distinct eigenvalues $3,0,-3$, with multiplicities $1,4,1$, respectively, which agrees with Lemmas~\ref{lem:GraphSpect} and \ref{lem:CaylGraphSpect}.

\twocolumngrid

\end{document}